\newlength{\depthofsumsign}
\newtheorem{corollary}{Corollary}
\newtheorem{proposition} {Proposition}
\newtheorem*{proposition-non}{Proposition}
\newtheorem{theorem}{Theorem}
\newtheorem{remark}{Remark}
\newenvironment{proof}{\noindent{\bf Proof:}\indent}%
                      {\hfill $\Box$\par}
\newcommand{\sym}[1]{{\sf #1}}
\title{Influence of a Set of Variables on a Boolean Function}
\author{Aniruddha Biswas and Palash Sarkar \\
Indian Statistical Institute \\
203, B.T.Road, Kolkata \\
India 700108. \\
Email: \{anib\_r, palash\}@isical.ac.in
}
\date{\today}
\begin{document}

\maketitle

\begin{abstract}
	The influence of a variable is an important concept in the analysis of Boolean functions.  
	The more general notion of influence of a set of variables on a Boolean function has four separate definitions in the literature.
	In the present work, we introduce a new definition of influence of a set of variables which is based on the auto-correlation function and develop its basic
	theory. Among the new results that we obtain are generalisations of the Poincar\'{e} inequality and the edge expansion property of the influence of a single variable.
	Further, we obtain new characterisations of resilient and bent functions using the notion of influence. We show that the previous definition of influence
	due to Fischer et al. (2002) and Blais (2009) is half the value of the auto-correlation based influence that we introduce. Regarding the other
	prior notions of influence, we make a detailed study of these and show that each of these definitions do not satisfy one or more desirable properties
	that a notion of influence may be expected to satisfy. \\
	{\bf Keywords:} Boolean function, influence, Fourier transform, Walsh transform, auto-correlation, junta, bent functions, resilient functions. 
\end{abstract}

\section{Introduction \label{sec-intro} }
Boolean functions play an important role in diverse areas of mathematics and computer science, including combinatorics, probability, complexity theory, learning theory,
cryptography and coding theory. We refer to two excellent books on Boolean functions, namely~\cite{o2014analysis} and~\cite{carlet2021boolean}. The first book
focuses on Boolean functions in the context of theoretical computer science, while the second book focuses on Boolean functions in relation to cryptography and
coding theory. 

The notion of influence of a variable on a Boolean function was introduced by Ben-Or and Linial~\cite{ben1987collective}. Subsequently, this concept has become central
to the study of Boolean functions in various contexts. See~\cite{o2014analysis} for a very comprehensive account of such applications. The notion of influence, however,
has not received much attention in the context of cryptographic applications of Boolean functions. We know of only two works~\cite{DBLP:journals/iacr/GangopadhyayS14,BS2021} which 
studied influence in relation to cryptographic properties. 

The notion of influence of a variable on a function has been extended to consider the influence of a set of variables on a function. We have been able to locate four different
definitions of the influence of a set of variables on a Boolean function. The first definition appears in the work of Ben-Or and Linial~\cite{ben1987collective} itself in 1989.
A different definition due to Fischer et al.~\cite{fischer02} appeared in 2002 and the same definition was considered in 2009 by Blais~\cite{blais2009testing}.
A third definition was given by Gangopadhyay and St{\u{a}}nic{\u{a}}~\cite{DBLP:journals/iacr/GangopadhyayS14} in 2014 and a fourth definition was given by Tal~\cite{tal2017tight} in 2017.
All of these definitions coincide with each other in the case of a single variable, but in the case of more than one variable, in general the values provided by the 
four definitions of influence are different. 

The motivation of our work is to make a systematic and comprehensive study of the notion of influence of a set of variables on a Boolean function. To this end,
we introduce a definition of influence based on the auto-correlation function, which is a very useful tool for analysing certain cryptographic properties of Boolean functions.
Two Walsh transform based characterisations of influence are obtained and some basic intuitive properties are derived. Several results on the influence of a single variable
are generalised. These include Poincar\'{e} inequality and edge expansion property of influence of a variable. In the context of cryptographic properties, we provide characterisations
of resilient and bent functions using the notion of influence. 

The definition of influence given in~\cite{fischer02,blais2009testing} is shown to be half the value of
the notion of influence that we introduce. We also argue that the definition of influence considered in~\cite{DBLP:journals/iacr/GangopadhyayS14} does not satisfy a basic desirable
property, namely that the influence of a set of variables can be zero even if the function is not degenerate on these variables. 

Next we define a quantity called pseudo-influence, obtain its Walsh transform based characterisation and derive certain basic properties. We show that the pseudo-influence
does not satisfy some intuitive properties that one would expect a notion of influence to satisfy, which is why we call it pseudo-influence. From the Walsh transform
based characterisation, it follows that the definition of influence considered by Tal~\cite{tal2017tight} is the notion of pseudo-influence that we introduce. 
Our motivation for introducing pseudo-influence and analysing it is to show that the notion of influence considered in~\cite{tal2017tight} is not satisfactory.

Lastly, we make a systematic study of the Ben-Or and Linial (BL) notion of influence~\cite{ben1987collective}.
We show that the BL notion of influence satisfies some desirable properties, but it does not satisfy sub-additivity. Further, we argue that compared to the 
auto-correlation based definition, the BL notion of influence is a more coarse measure.


Section~\ref{sec-prelim} introduces the background and the notation and also describes the previous definitions of influence of a set of variables.
The definition of influence from auto-correlation is introduced in Section~\ref{sec-inf-set} and its Walsh transform based characterisations and basic
properties are derived. The concept is further developed in several subsections.
The path expansion property of influence is derived in Section~\ref{subsec-path}, two probabilistic interpretations of influence are
given in Section~\ref{subsec-prob}, the relation of influence to juntas and cryptographic properties are described in Section~\ref{subsec-junta} 
and~\ref{subsec-crypto} respectively, and a general form the Fourier entropy/influence conjecture is mentioned in Section~\ref{subsec-FEI}. The notion of pseudo-influence 
is defined in Section~\ref{sec-PI} and its properties as well as its relation to influence are studied. Section~\ref{sec-BL} makes a detailed investigation
of the notion of influence introduced by Ben-Or and Linial and its relation to the auto-correlation based notion of influence. 
A discussion of the new results in this paper and their importance is given in Section~\ref{sec-discuss}.
Finally, Section~\ref{sec-conclu} concludes the paper.

\section{Background and Notation \label{sec-prelim} }
Let $\mathbb{F}_2=\{0,1\}$ denote the finite field consisting of two elements with addition represented by $\oplus$ and multiplication by $\cdot$; often, for 
$x,y\in \mathbb{F}_2$, the product $x\cdot y$ will be written as $xy$. 

	By $[n]$ we will denote the set $\{1,\ldots,n\}$.  
	For $\mathbf{x}=(x_1,\ldots,x_n)\in \mathbb{F}_2^n$, the support of $\mathbf{x}$ will be denoted by $\sym{supp}(\mathbf{x})$ which is the set $\{i:x_i=1\}$;
	the weight of $\mathbf{x}$ will be denoted by $\sym{wt}(\mathbf{x})$ and is equal to $\#\sym{supp}(\mathbf{x})$. 
	For $i\in [n]$, $\mathbf{e}_i$ denotes the vector in $\mathbb{F}_2^n$ whose $i$-th component is 1 and all other components are 0.
	By $\mathbf{0}_n$ and $\mathbf{1}_n$ we will denote the all-zero and all-one vectors of length $n$ respectively.
	For $\mathbf{x}=(x_1,\ldots,x_2),\mathbf{y}=(y_1,\ldots,y_n)\in \mathbb{F}_2^n$, we write $\mathbf{x}\leq \mathbf{y}$ if 
		$x_i=1$ implies $y_i=1$ for $i=1,\ldots,n$.
	The inner product $\langle \mathbf{x},\mathbf{y}\rangle$ of $\mathbf{x}$ 
		and $\mathbf{y}$ is defined to be $\langle \mathbf{x},\mathbf{y}\rangle =x_1y_1\oplus \cdots \oplus x_ny_n$. 
	For a subspace $E$ of $\mathbb{F}_2^n$, $E^{\perp}$ will denote the subspace $\{\mathbf{x}\in\mathbb{F}_2^n:\langle \mathbf{x},\mathbf{y}\rangle=\mathbf{0}_n,
	\mbox{ for all }\mathbf{y}\in E\}$.
	For $T\subseteq [n]$, $\chi_T$ denotes the vector in $\mathbb{F}_2^n$ where the $i$-th component of $\chi_T$ is 1 if and only if $i\in T$; further,
	$\overline{T}$ will denote the set $[n]\setminus T$.

An $n$-variable Boolean function $f$ is a map $f:\mathbb{F}_2^n\rightarrow \mathbb{F}_2$. Variables will be written in upper case and vector of variables
in bold upper case. For $\mathbf{X}=(X_1,\ldots,X_n)$, an $n$-variable Boolean function $f$ will be written as $f(\mathbf{X})$. 
The support of a Boolean function $f$ will be denoted by $\sym{supp}(f)$ which is the set $\{\mathbf{x}:f(\mathbf{x})=1\}$; the weight of $f$ will be denoted by $\sym{wt}(f)$ and 
is equal to $\#\sym{supp}(f)$. The expectation of $f$, denoted as $\mathbb{E}(f)$ (taken over a uniform random choice of $\mathbf{x}\in \mathbb{F}_2^n$), is equal 
to $\sym{wt}(f)/2^n$.
The function $f$ is said to be balanced if $\sym{wt}(f)=2^{n-1}$, i.e., $\mathbb{E}_{\mathbf{x}\in\mathbb{F}_2^n}(f)=1/2$. Noting that $f^2=f$, the
variance of $f$, denoted as $\sym{Var}(f)$ is equal to $\mathbb{E}(f^2)-\mathbb{E}(f)^2=\mathbb{E}(f)(1-\mathbb{E}(f))$.

\begin{remark}\label{rem-bool}
	In the literature, $n$-variable Boolean functions have variously been considered to be maps from $\{-1,1\}^n$ to $\{-1,1\}$, or maps from $\{-1,1\}^n$ to $\{0,1\}$,
	or maps from $\{0,1\}^n$ to $\{-1,1\}$. As stated above, in this paper, we will consider Boolean functions to be maps from $\mathbb{F}_2^n$ to $\mathbb{F}_2$.
	Results stated in this representation will be somewhat different from, though equivalent to, the results stated in the other representations.
\end{remark}

Let $\mathbf{X}=(X_1,\ldots,X_n)$ be a vector of variables and suppose $\emptyset \neq T=\{i_1,\ldots,i_t\}\subseteq [n]$, where $i_1\leq \cdots\leq i_t$. 
By $\mathbf{X}_T$ we denote the vector of variables $(X_{i_1},\ldots,X_{i_t})$.
Suppose $f(\mathbf{X})$ is an $n$-variable Boolean function. For $\bm{\alpha}\in\mathbb{F}_2^t$, 
by $f_{\mathbf{X}_T\leftarrow \bm{\alpha}}(\mathbf{X}_{\overline{T}})$ we denote the Boolean function on $n-t$ variables
obtained by setting the variables in $\mathbf{X}_T$ to the respective values in $\bm{\alpha}$. 
The function $f$ is said to be degenerate on the set of variables
$\{X_{i_1},\ldots,X_{i_t}\}$ if these variables do not influence the output of the function $f$, i.e., for any $\bm{\alpha},\bm{\beta}\in \mathbb{F}_2^t$ if we set
$f_{\bm{\alpha}}=f_{\mathbf{X}_T\leftarrow \bm{\alpha}}(\mathbf{X}_{\overline{T}})$ and 
$f_{\bm{\beta}}=f_{\mathbf{X}_T\leftarrow \bm{\beta}}(\mathbf{X}_{\overline{T}})$, then 
the functions $f_{\bm{\alpha}}$ and $f_{\bm{\beta}}$ are equal. 


Let $\psi:\mathbb{F}_2^n\rightarrow \mathbb{R}$. The Fourier transform of $\psi$ is a map $\widehat{\psi}:\mathbb{F}_2^n\rightarrow \mathbb{R}$ which is defined as follows.
\begin{eqnarray}
	\widehat{\psi}(\bm{\alpha}) 
	& = & \frac{1}{2^n} \sum_{\mathbf{x}\in\mathbb{F}_2^n} \psi(\mathbf{x})(-1)^{\langle \mathbf{x},\bm{\alpha} \rangle}. \label{eqn-fourier}
\end{eqnarray}
Given $\widehat{\psi}$, it is possible to recover $\psi$ using the following inverse formula.
\begin{eqnarray}
	\psi(\mathbf{x}) & = & \sum_{\bm{\alpha}\in\mathbb{F}_2^n} \widehat{\psi}(\bm{\alpha}) (-1)^{\langle \mathbf{x},\bm{\alpha} \rangle}. \label{eqn-inv-fourier}
\end{eqnarray}
The Poisson summation formula (see Page~77 of~\cite{carlet2021boolean}) provides a useful relation between a function $\psi:\mathbb{F}_2^n\rightarrow \mathbb{R}$
and its Fourier transform. Let $E$ be a subspace of $\mathbb{F}_2^n$ and $\mathbf{a},\mathbf{b}\in \mathbb{F}_2^n$. Then
\begin{eqnarray}\label{poisson-first-order}
\sum_{\mathbf{w}\in\mathbf{a}+E}(-1)^{\langle\mathbf{b},\mathbf{w}\rangle}\widehat{\psi}(\mathbf{w})
	& = & \frac{\#E}{2^n}(-1)^{\langle\mathbf{a},\mathbf{b}\rangle}\sum_{\mathbf{u}\in \mathbf{b}+E^{\perp}}(-1)^{\langle\mathbf{a},\mathbf{u}\rangle}\psi(\mathbf{u}).
\end{eqnarray}

The (normalised) Walsh transform of a Boolean function $f$ is a map $W_f:\mathbb{F}_2^n\rightarrow [-1,1]$ which is defined as follows. 
\begin{eqnarray}
	W_f(\bm{\alpha})
	& = & \frac{1}{2^n} \sum_{\mathbf{x}\in\mathbb{F}_2^n} (-1)^{f(\mathbf{x}) \oplus \langle \mathbf{x},\bm{\alpha} \rangle} 
	= 1-\frac{\sym{wt}(f(\mathbf{X})\oplus \langle \bm{\alpha},\mathbf{X} \rangle)}{2^{n-1}}. \label{eqn-wt-fou}
\end{eqnarray}
In other words, the Walsh transform of $f$ is the Fourier transform of $(-1)^f$. 

Note that $W_{f}(\bm{\alpha})=0$ if and only if the function $f(\mathbf{X})\oplus \langle \bm{\alpha},\mathbf{X} \rangle$ is balanced.
From Parseval's theorem (see Page~79 of~\cite{carlet2021boolean}), it follows that 
\begin{eqnarray} \label{eqn-parseval}
	\sum_{\bm{\alpha}\in \mathbb{F}_2^n} \left(W_{f}(\bm{\alpha})\right)^2 & = & 1.
\end{eqnarray}
So the values $\left\{\left(W_{f}(\bm{\alpha})\right)^2\right\}$ can be considered to be a probability distribution on $\mathbb{F}_2^n$, which assigns to 
$\bm{\alpha}\in\mathbb{F}_2^n$, the probability $\left(W_{f}(\bm{\alpha})\right)^2$. For $k\in \{0,\ldots,n\}$, let 
\begin{eqnarray} \label{eqn-pf}
	\widehat{p}_f(k) & = & \sum_{\{\mathbf{u}\in\mathbb{F}_2^n:\sym{wt}(\mathbf{u})=k\}} \left(W_{f}(\mathbf{u})\right)^2
\end{eqnarray}
be the probability assigned by the Fourier transform of $f$ to the integer $k$. Note that $\widehat{p}_f(\mathbf{0}_n)=\left(1-2\mathbb{E}(f)\right)^2$
and so $1-\widehat{p}_f(\mathbf{0}_n)=4\mathbb{E}(f)(1-\mathbb{E}(f))=4\,\sym{Var}(f)$.

The (normalised) auto-correlation function of $f$ is a map $C_f:\mathbb{F}_2^n\rightarrow [-1,1]$ defined as follows. 
\begin{eqnarray}
	\lefteqn{C_f(\bm{\alpha})} \nonumber \\
	& = & \frac{1}{2^n} \sum_{\mathbf{x}\in\mathbb{F}_2^n} (-1)^{f(\mathbf{x}) \oplus f(\mathbf{x}\oplus \bm{\alpha})} 
	= 1 - \frac{\sym{wt}( f(\mathbf{X})\oplus f(\mathbf{X} \oplus \bm{\alpha}) )}{2^{n-1}}
	= 1 - 2\Pr_{\mathbf{x}\in\mathbb{F}_2^n}[f(\mathbf{x})\neq f(\mathbf{x}\oplus \bm{\alpha})]. \label{eqn-ac-wt}
\end{eqnarray}
Note that $C_f(\mathbf{0})=1$. 

For a Boolean function $f$, the Wiener-Khintchine formula (see Page~80 of~\cite{carlet2021boolean}) relates the Walsh transform to the auto-correlation function.
\begin{eqnarray}
	\left(W_{f}(\bm{\alpha})\right)^2 & = & \widehat{C}_f(\bm{\alpha}) 
	= \frac{1}{2^n} \sum_{\mathbf{x}\in\mathbb{F}_2^n} (-1)^{\langle\bm{\alpha},\mathbf{x}\rangle} C_f(\mathbf{x}). \label{eqn-f-ac}
\end{eqnarray}
Applying the inverse Fourier transform given by~\eqref{eqn-inv-fourier} to $\widehat{C}_f(\bm{\alpha})$, we obtain
\begin{eqnarray}
	C_f(\mathbf{x}) & = & \sum_{\bm{\alpha}\in\mathbb{F}_2^n} \left(W_{f}(\bm{\alpha})\right)^2 (-1)^{\langle\bm{\alpha},\mathbf{x}\rangle}. \label{eqn-ac-wf}
\end{eqnarray}

Applying~\eqref{poisson-first-order} with $\psi=C_f$ and $\mathbf{a}=\mathbf{b}=\mathbf{0}_n$ and then using~\eqref{eqn-f-ac}, we obtain the following result
(see Proposition~5 of~\cite{DBLP:conf/eurocrypt/CanteautCCF00}).
\begin{eqnarray}
	\sum_{\mathbf{w}\in E} \left(W_{f}(\mathbf{w})\right)^2 & = & \frac{\#E}{2^n} \sum_{\mathbf{u}\in E^{\perp}} C_f(\mathbf{u}).  \label{eqn-f-ac-subspace}
\end{eqnarray}

Let $T\subseteq [n]$ with $\#T=t$ and for $\bm{\alpha}\in \mathbb{F}_2^{n-t}$, let $f_{\bm{\alpha}}$ denote 
$f_{\mathbf{X}_{\overline{T}}\leftarrow \bm{\alpha}}(\mathbf{X}_{T})$.
Then $f_{\bm{\alpha}}$ is a $t$-variable function. From the second order Poisson summation formula (see Page~81 of~\cite{carlet2021boolean} for the general statement
of this result), we have
\begin{eqnarray}\label{eqn-f-ac-subspace-restrict}
	\sum_{\mathbf{w}\leq \chi_{\overline{T}}} \left(W_{f}(\mathbf{w})\right)^2 
	& = & \frac{1}{2^{n-t}} \sum_{\bm{\alpha}\leq \chi_{\overline{T}}} \left(W_{f_{\alpha}}(\mathbf{0}_t) \right)^2.
\end{eqnarray}

\begin{remark}
	We have normalised the Walsh transform and the auto-correlation function by $2^n$ so that the values lie in the range $[-1,1]$. The non-normalised versions
	have also been used in the literature. We note in particular that~\cite{carlet2021boolean} uses the non-normalised versions.
	When we use results from~\cite{carlet2021boolean}, we normalise them appropriately.
\end{remark}

\paragraph{Some Boolean function classes.} 
Let $f$ be an $n$-variable Boolean function.
\begin{itemize} 
	\item The function $f$ is said to be bent~\cite{DBLP:journals/jct/Rothaus76} if $W_{f}(\bm{\alpha})=\pm 2^{-n/2}$ for all $\bm{\alpha}\in \mathbb{F}_2^n$. Bent functions
		exist if and only if $n$ is even. 
	\item The function $f$ is said to satisfy propagation characteristics~\cite{preneel1990propagation} of degree $k\geq 1$, written as PC($k$) if $C_f(\mathbf{u})=0$ for 
all $\mathbf{u}\in\mathbb{F}_2^n$ with $1\leq \sym{wt}(\mathbf{u})\leq k$. 
	\item The function $f$ is said to be $m$-resilient~\cite{DBLP:journals/tit/Siegenthaler84,DBLP:journals/tit/XiaoM88} if $W_{f}(\bm{\alpha})=0$ for 
		all $\bm{\alpha}\in \mathbb{F}_2^n$ with $0\leq \sym{wt}(\bm{\alpha})\leq m$.
	\item The function $f$ is said to be an $s$-junta if there is a subset $S\subseteq [n]$ with $\#S\leq s$ such that $f$ is degenerate on the variables
		indexed by $\overline{S}$.
\end{itemize}

\subsection{Influence \label{subsec-influence}}
Let $f(\mathbf{X})$ be an $n$-variable Boolean function where $\mathbf{X}=(X_1,\ldots,X_n)$. For $i\in [n]$, the influence
of $X_i$ on $f$ is denoted by $\sym{inf}_f(i)$ and is defined to be the probability (over a uniform random choice of $\mathbf{x}\in\mathbb{F}_2^n$) that $f(\mathbf{x})$ is 
not equal to $f(\mathbf{x}\oplus \mathbf{e}_i)$, i.e.,
\begin{eqnarray}\label{eqn-inf-sing}
	\sym{inf}_f(i) & = & \Pr_{\mathbf{x}\in\mathbb{F}_2^n}[f(\mathbf{x}) \neq f(\mathbf{x}\oplus \mathbf{e}_i)]. 
\end{eqnarray}
The total influence $\sym{inf}(f)$ of the individual variables is defined to be the sum of the influences of the individual variables, i.e.
$\sym{inf}(f)=\sum_{i\in [n]}\sym{inf}_f(i)$. 

Let $f$ be an $n$-variable Boolean function and $\emptyset \neq T\subseteq [n]$ with $t=\#T$. The influence of the set of variables indexed by $T$ on $f$ has been defined
in the literature in four different ways. These definitions are given below.

\paragraph{Ben-Or and Linial~\cite{ben1987collective}.}
The definition of influence introduced in~\cite{ben1987collective} is the following. 
\begin{eqnarray}\label{eqn-BL-inf}
	\mathcal{I}_f(T) & = & \Pr_{\bm{\alpha}\in \mathbb{F}_2^{n-t}}\left[f_{\mathbf{X}_{\overline{T}}\leftarrow \bm{\alpha}}(\mathbf{X}_{{T}}) \mbox{ is not constant}\right].
\end{eqnarray}

\paragraph{Fischer et al.~\cite{fischer02} and Blais~\cite{blais2009testing}.}
The same quantity has been defined in two different ways in Fischer et al.~\cite{fischer02} and Blais~\cite{blais2009testing}. In~\cite{fischer02}, this
quantity was called `variation' and in~\cite{blais2009testing}, it was termed `influence'. Here we provide the formulation as given in~\cite{blais2009testing}.
For $\mathbf{x},\mathbf{y}\in \mathbb{F}_2^n$, let $Z(T,\mathbf{x},\mathbf{y})$ denote the vector $\mathbf{z}\in\mathbb{F}_2^n$, where
$z_i=y_i$, if $i\in T$ and $z_i=x_i$ otherwise. The definition of influence given in~\cite{blais2009testing} is the following.
\begin{eqnarray}\label{eqn-blais}
	I_{f}(T) & = & \Pr_{\mathbf{x},\mathbf{y}\in \mathbb{F}_2^n} \left[ f(\mathbf{x}) \neq f(Z(T,\mathbf{x},\mathbf{y})) \right].
\end{eqnarray}

\paragraph{Gangopadhyay and St{\u{a}}nic{\u{a}}~\cite{DBLP:journals/iacr/GangopadhyayS14}.} 
The definition of influence introduced in~\cite{DBLP:journals/iacr/GangopadhyayS14} is the following. 
\begin{eqnarray}\label{eqn-GS-inf}
	\mathcal{J}_f(T) & = & \Pr_{\mathbf{x}\in\mathbb{F}_2^n}[f(\mathbf{x})\neq f(\mathbf{x}\oplus \chi_T)] = \frac{1}{2}\left(1-C_f(\chi_T) \right).
\end{eqnarray}

\paragraph{Tal~\cite{tal2017tight}.}
For $\bm{\beta}\in \mathbb{F}_2^t$, let $f_{\bm{\beta}}$ denote the function $f_{\mathbf{X}_{T}\leftarrow \bm{\beta}}$.
Let $D_Tf:\{0,1\}^{n-t}\rightarrow [-1,1]$ be defined as follows. For $\mathbf{y}\in F_2^{n-t}$,
$(D_Tf)(\mathbf{y}) = 1/2^t \times \sum_{\bm{\beta} \in \mathbb{F}_2^t} (-1)^{\sym{wt}(\bm{\beta}) + f_{\bm{\beta}}(\mathbf{y})}$.
The definition of influence given in~\cite{tal2017tight} is the following.
\begin{eqnarray}\label{eqn-inf-tal}
	J_{f}(T) & = & \displaystyle \mathop{\mathbb{E}}_{\mathbf{y}\in \mathbb{F}_2^{n-t}} \left[\left(D_Tf(\mathbf{y})\right)^2\right].
\end{eqnarray}

\section{Influence from Auto-Correlation \label{sec-inf-set} }
The auto-correlation function is a very useful tool for expressing various properties of Boolean functions. We refer to~\cite{carlet2021boolean} for the many uses of the
auto-correlation function in the context of cryptographic properties of Boolean functions. 
Given an $n$-variable Boolean function $f$ and $\bm{\alpha}\in \mathbb{F}_2^n$, the value of the auto-correlation function $C_f$ at 
$\bm{\alpha}$, i.e., $C_f(\bm{\alpha})$ is the number of places $f(\mathbf{X})$ and $f(\mathbf{X}\oplus \bm{\alpha})$ are equal minus the number of places they are
unequal (normalised by $2^n$). So the auto-correlation function at $\bm{\alpha}$ to some extent captures the effect on $f$ of flipping all the bits in the support of
$\bm{\alpha}$. This suggests that the auto-correlation function is an appropriate mechanism to capture the influence of a set of variables on a Boolean function.
We note that for $i\in [n]$, $\sym{inf}_f(i)$ can be written as follows.
\begin{eqnarray}\label{eqn-inf-sing-ac}
	\sym{inf}_f(i) & = & \frac{1}{2}\left(1-C_f(\mathbf{e}_i) \right) = 1-\frac{1}{2}\left(C_f(\mathbf{0}) + C_f(\mathbf{e}_i) \right).
\end{eqnarray}

Let $f(X_1,\ldots,X_n)$ be an $n$-variable Boolean function and $\emptyset \neq T=\{i_1,\ldots,i_t\}\subseteq [n]$. 
We denote the influence of the set of variables $\{X_{i_1},\ldots,X_{i_t}\}$ corresponding to 
$T=\{i_1,\ldots,i_t\}$ on the Boolean function $f$ by $\sym{inf}_f(T)$. 
Following the auto-correlation based expression of the influence of a single variable on 
a Boolean function given by~\eqref{eqn-inf-sing-ac}, we put forward the following definition of $\sym{inf}_f(T)$.
\begin{eqnarray}\label{eqn-inf-set}
	\sym{inf}_f(T) & = & 1 - \frac{1}{2^{\#T}}\left(\sum_{\bm{\alpha}\leq \chi_T} C_f(\bm{\alpha})\right).
\end{eqnarray}
It is easy to note that for a singleton set $T=\{i\}$, $\sym{inf}_f(T)=\sym{inf}_f(i)$. Further, one may note that
$\sym{inf}_f(T)=2^{1-t} \times \sum_{S\subseteq T} \mathcal{J}_f(S)$. 

\begin{remark}\label{rem-Teq1}
	We note that $\sym{inf}_f(T)$, $\mathcal{I}_f(T)$, $J_f(T)$ and $\mathcal{J}_f(T)$ (defined in Section~\ref{subsec-influence}) agree with each other when $\#T=1$. Also, we 
	later show that $I_f(T)=\sym{inf}_f(T)/2$.
\end{remark}

It is perhaps not immediately obvious that the definition of influence given by~\eqref{eqn-inf-set} is appropriate. We later show in Theorem~\ref{thm-inf-basic}
that this definition satisfies a set of intuitive desiderata that any notion of influence may be expected to satisfy.

Let $f$ be an $n$-variable function and $t$ be an integer with $1\leq t\leq n$. Then the $t$-influence of $f$ is the total influence (scaled by ${n\choose t}$)
obtained by summing the influence of every set of $t$ variables on the function $f$, i.e., 
\begin{eqnarray}\label{eqn-t-inf}
	t\mbox{-}\sym{inf}(f) & = & \frac{\sum_{\{T\subseteq [n]: \#T=t\}} \sym{inf}_f(T)}{{n\choose t}}.
\end{eqnarray}
Note that $1\mbox{-}\sym{inf}(f)$ is equal to $\sym{inf}(f)/n$, i.e., $1\mbox{-}\sym{inf}(f)$ is the sum of the influences of the individual
variables scaled by a factor of $n$.

The following result provides a characterisation of influence in terms of the Walsh transform.
\begin{theorem}\label{thm-inf-fourier}
	Let $f$ be an $n$-variable Boolean function and $\emptyset \neq T\subseteq [n]$. Then
	\begin{eqnarray}\label{eqn-inf-fourier}
		\sym{inf}_f(T) & = & \sum_{\{\mathbf{u}\in\mathbb{F}_2^n:\sym{supp}(\mathbf{u})\cap T\neq \emptyset\}} \left(W_{f}(\mathbf{u})\right)^2.
	\end{eqnarray}
\end{theorem}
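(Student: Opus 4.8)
The plan is to start from the definition \eqref{eqn-inf-set} and expand each auto-correlation value $C_f(\bm{\alpha})$ via the inverse relation \eqref{eqn-ac-wf}, then interchange the two summations and evaluate the resulting inner character sum. Writing $t=\#T$, I would substitute
\[
\sum_{\bm{\alpha}\leq \chi_T} C_f(\bm{\alpha})
= \sum_{\bm{\alpha}\leq \chi_T}\ \sum_{\mathbf{u}\in\mathbb{F}_2^n}\left(W_f(\mathbf{u})\right)^2(-1)^{\langle\mathbf{u},\bm{\alpha}\rangle}
= \sum_{\mathbf{u}\in\mathbb{F}_2^n}\left(W_f(\mathbf{u})\right)^2\sum_{\bm{\alpha}\leq \chi_T}(-1)^{\langle\mathbf{u},\bm{\alpha}\rangle}.
\]

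The crux is the inner sum $\sum_{\bm{\alpha}\leq \chi_T}(-1)^{\langle\mathbf{u},\bm{\alpha}\rangle}$. Since $\bm{\alpha}$ ranges over exactly the vectors supported inside $T$, the exponent $\langle\mathbf{u},\bm{\alpha}\rangle$ depends only on the coordinates $u_i$ with $i\in T$, and the sum factorises as $\prod_{i\in T}\bigl(1+(-1)^{u_i}\bigr)$. Each factor equals $2$ when $u_i=0$ and $0$ when $u_i=1$, so the whole product is $2^t$ precisely when $u_i=0$ for every $i\in T$, that is, when $\sym{supp}(\mathbf{u})\cap T=\emptyset$, and is $0$ otherwise. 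Hence
\[
\frac{1}{2^t}\sum_{\bm{\alpha}\leq \chi_T} C_f(\bm{\alpha})
= \sum_{\{\mathbf{u}:\,\sym{supp}(\mathbf{u})\cap T=\emptyset\}}\left(W_f(\mathbf{u})\right)^2.
\]

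Finally I would invoke Parseval's identity \eqref{eqn-parseval}, $\sum_{\mathbf{u}\in\mathbb{F}_2^n}(W_f(\mathbf{u}))^2=1$, to rewrite the leading $1$ in \eqref{eqn-inf-set} as the full Fourier sum and subtract the quantity above; the surviving terms are then exactly those $\mathbf{u}$ with $\sym{supp}(\mathbf{u})\cap T\neq\emptyset$, which gives \eqref{eqn-inf-fourier}.

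As an alternative that reuses machinery already in the excerpt, I note that $\{\bm{\alpha}\leq\chi_T\}$ is the $t$-dimensional subspace $E^{\perp}$ whose orthogonal complement $E$ is exactly $\{\mathbf{w}\leq\chi_{\overline{T}}\}$; applying \eqref{eqn-f-ac-subspace} with this $E$ (so $\#E=2^{n-t}$) yields $\sum_{\mathbf{w}\leq\chi_{\overline{T}}}(W_f(\mathbf{w}))^2=2^{-t}\sum_{\bm{\alpha}\leq\chi_T}C_f(\bm{\alpha})$ directly, after which Parseval finishes the argument in the same way. Neither route presents a genuine obstacle; the only point demanding care is the evaluation of the inner character sum, equivalently the correct identification of the complementary subspace $E$, which is precisely what converts the complementary condition $\sym{supp}(\mathbf{u})\cap T=\emptyset$ into the stated condition $\sym{supp}(\mathbf{u})\cap T\neq\emptyset$.
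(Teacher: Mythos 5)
Your proposal is correct, and your alternative route is precisely the paper's own proof: the paper sets $E=\{\mathbf{x}\in\mathbb{F}_2^n:\mathbf{x}\leq\chi_{\overline{T}}\}$, notes $\#E=2^{n-t}$ and $E^{\perp}=\{\mathbf{y}:\mathbf{y}\leq\chi_T\}$, applies \eqref{eqn-f-ac-subspace} to obtain $\sum_{\mathbf{x}\leq\chi_{\overline{T}}}(W_f(\mathbf{x}))^2 = 2^{-t}\sum_{\mathbf{y}\leq\chi_T}C_f(\mathbf{y})$, and finishes with Parseval exactly as you describe, identifying the condition $\mathbf{u}\not\leq\chi_{\overline{T}}$ with $\sym{supp}(\mathbf{u})\cap T\neq\emptyset$. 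Your primary route is a mild but genuine variant: instead of citing the Poisson-summation corollary \eqref{eqn-f-ac-subspace}, you re-derive the required instance of it by expanding $C_f(\bm{\alpha})$ via \eqref{eqn-ac-wf}, interchanging the two finite sums, and evaluating the character sum $\sum_{\bm{\alpha}\leq\chi_T}(-1)^{\langle\mathbf{u},\bm{\alpha}\rangle}=\prod_{i\in T}\bigl(1+(-1)^{u_i}\bigr)$, which equals $2^t$ when $\sym{supp}(\mathbf{u})\cap T=\emptyset$ and $0$ otherwise. That computation is exactly the proof of \eqref{eqn-f-ac-subspace} specialised to coordinate subspaces, so the two arguments are the same mathematics packaged differently: yours is self-contained and makes explicit the indicator mechanism by which the complementary condition turns into $\sym{supp}(\mathbf{u})\cap T\neq\emptyset$, while the paper's is shorter because it delegates that work to the already-stated subspace identity. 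Both versions are complete; there is no gap.
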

\begin{proof}
	Let $\#T=t$. Let $E$ be the subspace $\{\mathbf{x}\in\mathbb{F}_2^n: \mathbf{x}\leq \chi_{\overline{T}}\}$. Then
	$\#E=2^{n-t}$ and $E^{\perp}=\{\mathbf{y}\in\mathbb{F}_2^n: \mathbf{y}\leq \chi_T\}$. 
	Using~\eqref{eqn-f-ac-subspace}, we obtain
	\begin{eqnarray}
		\sum_{\mathbf{x}\leq \chi_{\overline{T}}} \left(W_{f}(\mathbf{x})\right)^2 
		& = & \frac{2^{n-t}}{2^n} \sum_{\mathbf{y}\leq \chi_T} C_f(\mathbf{y}) = \frac{1}{2^{\#T}} \sum_{\mathbf{y}\leq \chi_T} C_f(\mathbf{y}). \label{eqn-thm1eqn1}
	\end{eqnarray}
	Using~\eqref{eqn-thm1eqn1} with~\eqref{eqn-inf-set} and~\eqref{eqn-parseval} we have
	\begin{eqnarray*}
\sym{inf}_f(T) & = & 1 - \sum_{\mathbf{x}\leq \chi_{\overline{T}}} \left(W_{f}(\mathbf{x})\right)^2 
		= \sum_{\mathbf{w}\in\mathbb{F}_2^n} \left(W_{f}(\mathbf{w})\right)^2 - \sum_{\mathbf{x}\leq \chi_{\overline{T}}} \left(W_{f}(\mathbf{x})\right)^2
		= \sum_{\mathbf{u}\not\leq \chi_{\overline{T}}} \left(W_{f}(\mathbf{u})\right)^2.
	\end{eqnarray*}
	The condition $\mathbf{u}\not\leq \chi_{\overline{T}}$ is equivalent to $\sym{supp}(\mathbf{u})\cap T\neq \emptyset$. 
\end{proof}

It is a well known result (see Page~52 of~\cite{o2014analysis}) that for an $n$-variable Boolean function, the total influence of the individual variables, i.e., 
$\sym{inf}(f)$ is the expected value of a random variable 
which takes the value $k$ with probability $\widehat{p}_f(k)$ for $k=0,\ldots,n$. We generalise this result to the case of $t\mbox{-}\sym{inf}(f)$ for $t\geq 1$. 

For positive integers $n$, $t$ and $k$ with, $1\leq t\leq n$ and $0\leq k\leq n$, fix a subset $S$ of $[n]$ with $\#S=k$ and let $N_{n,t,k}$ be the number of subsets of 
$[n]$ of size $t$ which contains at least one element of $S$. Then
\begin{eqnarray}\label{eqn-N}
	N_{n,t,k} & = & {n\choose t} - {n-k\choose t} = \sum_{i=1}^{\min(k,t)}{k\choose i}{n-k\choose t-i}.
\end{eqnarray}
It follows that $N_{n,t,0}=0$, $N_{n,t,k}={n\choose t}$ for $n-t+1\leq k\leq n$, and $N_{n,1,k}=k$ for $k=0,\ldots,n$.

\begin{theorem}\label{thm-inf-total}
	Let $f$ be an $n$-variable function and $t\in [n]$. Then 
	\begin{eqnarray}\label{eqn-t-inf-exp}
		t\mbox{-}\sym{inf}(f) & = & \frac{1}{{n\choose t}} \sum_{k=1}^n N_{n,t,k}\  \widehat{p}_f(k)  = \frac{1}{{n\choose t}}\mathbb{E}[Z],
	\end{eqnarray}
	where $Z$ is the number of $t$-element subsets of $[n]$ which have a non-empty intersection with a set $S\subseteq [n]$ chosen with probability $(W_f(\chi_S))^2$.
\end{theorem}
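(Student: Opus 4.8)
The plan is to start from the definition of $t\mbox{-}\sym{inf}(f)$ in~\eqref{eqn-t-inf}, substitute the Walsh transform characterisation of $\sym{inf}_f(T)$ supplied by Theorem~\ref{thm-inf-fourier}, and then interchange the order of summation so that the outer sum runs over $\mathbf{u}\in\mathbb{F}_2^n$ rather than over the $t$-subsets $T$. Concretely, writing
\begin{eqnarray*}
	t\mbox{-}\sym{inf}(f) & = & \frac{1}{{n\choose t}} \sum_{\{T\subseteq [n]:\#T=t\}} \ \sum_{\{\mathbf{u}:\sym{supp}(\mathbf{u})\cap T\neq\emptyset\}} \left(W_{f}(\mathbf{u})\right)^2,
\end{eqnarray*}
I would collect, for each fixed $\mathbf{u}$, the coefficient $\left(W_{f}(\mathbf{u})\right)^2$ across all $t$-subsets $T$ that meet $\sym{supp}(\mathbf{u})$.

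The key combinatorial observation is that this coefficient depends on $\mathbf{u}$ only through its weight: if $\sym{wt}(\mathbf{u})=k$, then $\sym{supp}(\mathbf{u})$ is a fixed $k$-subset of $[n]$, and by the very definition of $N_{n,t,k}$ in~\eqref{eqn-N}, exactly $N_{n,t,k}$ of the $t$-subsets $T$ satisfy $\sym{supp}(\mathbf{u})\cap T\neq\emptyset$. Hence after swapping the sums,
\begin{eqnarray*}
	t\mbox{-}\sym{inf}(f)
	& = & \frac{1}{{n\choose t}} \sum_{\mathbf{u}\in\mathbb{F}_2^n} N_{n,t,\sym{wt}(\mathbf{u})}\left(W_{f}(\mathbf{u})\right)^2 \\
	& = & \frac{1}{{n\choose t}} \sum_{k=0}^n N_{n,t,k} \sum_{\{\mathbf{u}:\sym{wt}(\mathbf{u})=k\}}\left(W_{f}(\mathbf{u})\right)^2,
\end{eqnarray*}
and grouping terms of equal weight and recalling the definition~\eqref{eqn-pf} of $\widehat{p}_f(k)$ turns the inner sum into $\widehat{p}_f(k)$. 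Since $N_{n,t,0}=0$, the $k=0$ term drops out and the sum may be started at $k=1$, which establishes the first equality.

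For the second equality I would read off the probabilistic interpretation directly. By Parseval's identity~\eqref{eqn-parseval}, assigning to each $S\subseteq[n]$ the weight $\left(W_{f}(\chi_S)\right)^2$ defines a probability distribution, and summing over all $S$ with $\#S=k$ recovers $\widehat{p}_f(k)$, since $\chi_S$ ranges over all weight-$k$ vectors as $S$ ranges over all $k$-subsets. Given the chosen set $S$, the random variable $Z$ counting the $t$-subsets meeting $S$ takes the deterministic value $N_{n,t,\#S}$, so $\mathbb{E}[Z]=\sum_{S}N_{n,t,\#S}\left(W_{f}(\chi_S)\right)^2=\sum_{k=0}^n N_{n,t,k}\widehat{p}_f(k)$, which is exactly ${n\choose t}$ times the first expression.

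I expect no serious obstacle here; the only point requiring care is the interchange of summation together with the recognition that the multiplicity of each $\mathbf{u}$ is precisely $N_{n,t,\sym{wt}(\mathbf{u})}$, but this is immediate from the definition of $N_{n,t,k}$. Thus the proof is essentially a Fubini step followed by a regrouping by Hamming weight, with the probabilistic statement being a reinterpretation of the same sum.
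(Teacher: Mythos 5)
Your proposal is correct and follows essentially the same route as the paper: substitute Theorem~\ref{thm-inf-fourier} into~\eqref{eqn-t-inf}, swap the order of summation, observe that each $\mathbf{u}$ of weight $k$ is counted by exactly $N_{n,t,k}$ of the $t$-subsets (the paper writes this count as $\sum_{i=1}^{\min(k,t)}{k\choose i}{n-k\choose t-i}$, which is the same quantity by~\eqref{eqn-N}), and then group by Hamming weight to obtain $\widehat{p}_f(k)$, with the probabilistic identity following from $Z=N_{n,t,\#S}$ conditional on $S$.
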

\begin{proof}
	We start with the proof of the first equality in~\eqref{eqn-t-inf-exp}.
	Consider $\mathbf{u}\in\mathbb{F}_2^n$ with $\#\sym{supp}(\mathbf{u})=k$. For $1\leq i\leq \min(k,t)$, the number of
        subsets $T$ of $[n]$ of cardinality $t$ whose intersection with $\sym{supp}(\mathbf{u})$ is of size $i$ is ${k\choose i}{n-k\choose t-i}$. Summing over $i$
        provides the number of subsets $T$ of $[n]$ of cardinality $t$ with which $\sym{supp}(\mathbf{u})$ has a non-empty intersection. 
	From~\eqref{eqn-t-inf} and Theorem~\ref{thm-inf-fourier}, we have
        \begin{eqnarray}
                t\mbox{-}\sym{inf}(f) & = & \frac{1}{{n\choose t}} 
                        \sum_{k=1}^n \sum_{\{\mathbf{u}\in\mathbb{F}_2^n:\sym{wt}(\mathbf{u})=k\}} \sum_{i=1}^{\min(k,t)} {k\choose i}{n-k\choose t-i} 
                \left(W_{f}(\mathbf{u}) \right)^2 \nonumber \\
                & = & \frac{1}{{n\choose t}} 
                        \sum_{k=1}^n \sum_{i=1}^{\min(k,t)} {k\choose i}{n-k\choose t-i} \sum_{\{\mathbf{u}\in\mathbb{F}_2^n:\sym{wt}(\mathbf{u})=k\}} 
                \left(W_{f}(\mathbf{u})\right)^2 \nonumber \\
		& = & \frac{1}{{n\choose t}} \sum_{k=1}^n N_{n,t,k} \widehat{p}_f(k) \nonumber \\ 
		& = & \frac{1}{{n\choose t}}\mathbb{E}[Z]. \nonumber 
        \end{eqnarray}

	The second equality in~\eqref{eqn-t-inf-exp} follows from the observation that if $\#S=k$, then $Z=N_{n,t,k}$. 
\end{proof} 
Poincar\'{e} inequality (see Page~52 of~\cite{o2014analysis}) states that the total influence of the individual variables, i.e., $\sym{inf}(f)$
is bounded below by $4\,\sym{Var}(f)$. We obtain a generalisation of this result as a corollary of Theorem~\ref{thm-inf-total}.
\begin{corollary}\label{cor-iso}
	Let $f$ be an $n$-variable Boolean function and $t\in [n]$. Then
	\begin{eqnarray}\label{eqn-iso}
		t\mbox{-}\sym{inf}(f) & \geq & \frac{4t}{n}\,\sym{Var}(f).
	\end{eqnarray}
	Equality is achieved for $t=n$.
\end{corollary}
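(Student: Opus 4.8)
The plan is to reduce the inequality to a coefficientwise comparison of the weights $N_{n,t,k}/\binom{n}{t}$ against the constant $t/n$, using the expansion of $t\mbox{-}\sym{inf}(f)$ already established in Theorem~\ref{thm-inf-total}. First I would recall the identity noted just after~\eqref{eqn-pf}, namely $\sum_{k=1}^n \widehat{p}_f(k) = 1 - \widehat{p}_f(\mathbf{0}_n) = 4\,\sym{Var}(f)$. Thus the target quantity $\tfrac{4t}{n}\sym{Var}(f)$ equals $\tfrac{t}{n}\sum_{k=1}^n \widehat{p}_f(k)$, and since every $\widehat{p}_f(k)$ is nonnegative, it suffices to prove the pointwise inequality $N_{n,t,k}/\binom{n}{t} \geq t/n$ for each $k$ with $1\leq k\leq n$.

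From~\eqref{eqn-N} we have $N_{n,t,k}/\binom{n}{t} = 1 - \binom{n-k}{t}/\binom{n}{t}$, so the claim is equivalent to $\binom{n-k}{t}/\binom{n}{t} \leq (n-t)/n$. The key observation is that $\binom{n-k}{t}$ is non-increasing in $k$ (using the convention $\binom{m}{t}=0$ for $m<t$), because $\binom{m}{t}$ is non-decreasing in $m$; hence $N_{n,t,k}$ is non-decreasing in $k$, and its minimum over $k\geq 1$ is attained at $k=1$. Evaluating there, $N_{n,t,1} = \binom{n}{t}-\binom{n-1}{t} = \binom{n-1}{t-1} = \tfrac{t}{n}\binom{n}{t}$, which establishes the pointwise bound, with equality precisely when $k=1$.

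Combining the two steps gives $t\mbox{-}\sym{inf}(f) = \tfrac{1}{\binom{n}{t}}\sum_{k=1}^n N_{n,t,k}\,\widehat{p}_f(k) \geq \tfrac{t}{n}\sum_{k=1}^n \widehat{p}_f(k) = \tfrac{4t}{n}\sym{Var}(f)$. For the equality claim at $t=n$, I would simply note that the only $n$-element subset of $[n]$ is $[n]$ itself, which meets every nonempty $S$, so $N_{n,n,k}=1=\binom{n}{n}$ for all $k\geq 1$; the sum then collapses to $\sum_{k=1}^n \widehat{p}_f(k) = 4\,\sym{Var}(f) = \tfrac{4n}{n}\sym{Var}(f)$.

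I do not anticipate a serious obstacle: once Theorem~\ref{thm-inf-total} is available, the argument is a routine coefficientwise comparison, and the only mild point of care is the monotonicity of $N_{n,t,k}$ in $k$ together with the boundary evaluation at $k=1$, which is where the factor $t/n$ originates and where the bound is tight. This computation also explains why the weight-one Walsh coefficients form the worst case, mirroring the role of single variables in the classical Poincar\'{e} inequality.
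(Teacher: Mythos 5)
Your proof is correct and follows essentially the same route as the paper's: both start from Theorem~\ref{thm-inf-total}, reduce the claim to the coefficientwise bound $N_{n,t,k}/\binom{n}{t} \geq t/n$ for $1\leq k\leq n$, and conclude using $\sum_{k=1}^{n}\widehat{p}_f(k) = 4\,\sym{Var}(f)$. The only difference is how that combinatorial bound is verified---the paper expands $\binom{n-k}{t}/\binom{n}{t}$ as the product $\left(1-\tfrac{t}{n}\right)\left(1-\tfrac{t}{n-1}\right)\cdots\left(1-\tfrac{t}{n-k+1}\right) \leq 1-\tfrac{t}{n}$, whereas you use monotonicity of $N_{n,t,k}$ in $k$ together with Pascal's rule at $k=1$---an equally valid (and arguably cleaner) verification of the same fact, with the same handling of the $t=n$ equality case.
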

\begin{proof}
	Note that for $0\leq k\leq n$, $n-i+1>0$ for $2\leq i\leq k$ and so $1-t/(n-i+1)<1$ for $2\leq i\leq k$. Using this, we have 
	\begin{eqnarray*}
		\frac{{n-k\choose t}}{{n\choose t}} = \left(1-\frac{t}{n}\right)\left(1-\frac{t}{n-1}\right)\cdots \left(1-\frac{t}{n-k+1} \right) \leq 1 - \frac{t}{n}.
	\end{eqnarray*}
	It follows that for $k\in [n]$, $N_{n,t,k}/{n\choose t} \geq t/n$, where equality is achieved for $t=n$. 
	So from~\eqref{eqn-t-inf-exp},
	\begin{eqnarray*}
		t\mbox{-}\sym{inf}(f) & \geq & \frac{t}{n} \sum_{k=1}^n \widehat{p}_f(k) = \frac{t}{n} (1-\widehat{p}_f(\mathbf{0}_n)) = \frac{4t}{n} \sym{Var}(f).
	\end{eqnarray*}
\end{proof}

The Fourier/Walsh transform based expression for the total influence given by Theorem~\ref{thm-inf-total} is a useful result.
Corollary~\ref{cor-iso} above provides a direct application of Theorem~\ref{thm-inf-total}. In Theorem~\ref{thm-max-t-inf}, proved later, we use the expression given by 
Theorem~\ref{thm-inf-total} to characterise the functions which achieve the maximum value of the total influence as resilient functions. 
In Theorem~\ref{thm-mi}, also proved later, the expression is used to
show that total influence is monotonic increasing in $t$. An additional application of Theorem~\ref{thm-inf-total} is given next. 

Given an $n$-variable Boolean function $f$, we say that the Fourier spectrum of $f$ is $\epsilon$-concentrated on coefficients of weights up to $k$
if $\sum_{i\geq k}\widehat{p}_i(f) \leq \epsilon$.
Proposition~3.2 on Page~69 of~\cite{o2014analysis} shows that the Fourier spectrum of $f$ is $\epsilon$-concentrated on coefficients of weights up to $k$,
where $k$ is the least positive integer such that $k\geq n\times 1\mbox{-}\sym{inf}(f)/\epsilon$ and 
$1\mbox{-}\sym{inf}(f)\leq \epsilon\leq 1$. The following theorem generalises this result to arbitrary values of $t$.
\begin{theorem} \label{thm-conc}
	For any $n$-variable Boolean function, $t\in [n]$ and $\epsilon\in [t\mbox{-}\sym{inf}(f),1]$, the Fourier spectrum of $f$ is $\epsilon$-concentrated on 
	coefficients of weights up to $k_t$, where $k_t$ is the least positive integer such that 
	\begin{eqnarray}\label{eqn-k-conc}
		k_t & \geq & t-1+(n-t+1)\left(1-(1-x_t)^{1/t}\right),
	\end{eqnarray}
	and $x_t=\frac{t\mbox{-}\sym{inf}(f)}{\epsilon}$.
\end{theorem}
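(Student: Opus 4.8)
The plan is to recast $t\mbox{-}\sym{inf}(f)$ as an expectation and apply Markov's inequality, mirroring the single-variable argument of~\cite{o2014analysis} but with the weights $N_{n,t,k}/{n\choose t}$ supplied by Theorem~\ref{thm-inf-total}. Let $K$ be the random variable on $\{0,\ldots,n\}$ taking value $k$ with probability $\widehat{p}_f(k)$, and set $g(k)=N_{n,t,k}/{n\choose t}$. By Theorem~\ref{thm-inf-total} we have $\mathbb{E}[g(K)]=t\mbox{-}\sym{inf}(f)$. Since $N_{n,t,k}={n\choose t}-{n-k\choose t}$ is non-negative and non-decreasing in $k$, the event $K\geq k_t$ is contained in $\{g(K)\geq g(k_t)\}$, so Markov's inequality yields
\begin{eqnarray*}
\sum_{i\geq k_t}\widehat{p}_f(i)=\Pr[K\geq k_t]\leq \frac{\mathbb{E}[g(K)]}{g(k_t)}=\frac{t\mbox{-}\sym{inf}(f)}{g(k_t)}.
\end{eqnarray*}
Hence it suffices to show that the hypothesis on $k_t$ forces $g(k_t)\geq x_t$, for then the right-hand side is at most $t\mbox{-}\sym{inf}(f)/x_t=\epsilon$, which is exactly the $\epsilon$-concentration condition.

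Next I would reduce the defining inequality~\eqref{eqn-k-conc} to a clean closed form. The requirement $g(k_t)\geq x_t$ is equivalent to ${n-k_t\choose t}/{n\choose t}\leq 1-x_t$. Rearranging $k_t\geq t-1+(n-t+1)\left(1-(1-x_t)^{1/t}\right)$ and raising both sides to the $t$-th power shows that~\eqref{eqn-k-conc} is equivalent to $\left(\tfrac{n-k_t}{n-t+1}\right)^t\leq 1-x_t$. This manipulation is legitimate because $x_t\leq 1$ (as $\epsilon\geq t\mbox{-}\sym{inf}(f)$) makes $1-x_t\geq 0$, and $k_t\leq n$; the latter holds since $k=n$ always satisfies the condition, so the least positive integer $k_t$ meeting it exists and lies in $[n]$.

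The key technical step is then the binomial estimate
\begin{eqnarray*}
\frac{{n-k\choose t}}{{n\choose t}}=\prod_{i=0}^{t-1}\frac{n-k-i}{n-i}\leq \left(\frac{n-k}{n-t+1}\right)^t, \qquad 0\leq k\leq n-t,
\end{eqnarray*}
which follows by bounding each factor via $\frac{n-k-i}{n-i}\leq \frac{n-k}{n-t+1}$ for $0\leq i\leq t-1$; after cross-multiplication this reduces to the manifestly non-negative inequality $(n-k)(t-1-i)+i(n-t+1)\geq 0$. For the remaining range $k>n-t$ one has $N_{n,t,k}={n\choose t}$, so $g(k)=1\geq x_t$ and the conclusion is immediate. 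Chaining the estimate with the closed form gives ${n-k_t\choose t}/{n\choose t}\leq \left(\tfrac{n-k_t}{n-t+1}\right)^t\leq 1-x_t$, i.e.\ $g(k_t)\geq x_t$, completing the proof.

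I expect the main obstacle to be precisely the per-factor binomial inequality that bounds the product $\prod_{i}\frac{n-k-i}{n-i}$ by $\left(\tfrac{n-k}{n-t+1}\right)^t$, since it is this bound that produces the exact exponent $1/t$ and the additive shift $t-1$ appearing in~\eqref{eqn-k-conc}; the Markov step and the surrounding algebra are routine once this estimate is in place.
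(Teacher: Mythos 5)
Your proof is correct and takes essentially the same approach as the paper's: both rely on the expression for $t\mbox{-}\sym{inf}(f)$ from Theorem~\ref{thm-inf-total}, the monotonicity of $N_{n,t,k}$ in $k$, and the reduction of~\eqref{eqn-k-conc} to ${n-k_t\choose t}/{n\choose t}\leq 1-x_t$ via the estimate ${n-k_t\choose t}/{n\choose t}\leq \left(\frac{n-k_t}{n-t+1}\right)^t$. The differences are only presentational: you phrase the averaging step directly as Markov's inequality applied to $g(K)=N_{n,t,K}/{n\choose t}$ where the paper runs the same computation as a proof by contradiction, and you prove the binomial estimate factor-by-factor where the paper combines the standard bounds ${n-k_t\choose t}\leq (n-k_t)^t/t!$ and $(n-t+1)^t/t!\leq {n\choose t}$.
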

\begin{proof}
	The condition given by~\eqref{eqn-k-conc} holds if and only if $(n-k_t)\leq (n-t+1)(1-x_t)^{t}$ which holds if and only if 
	\begin{eqnarray}\label{eqn-tmp}
		\frac{(n-k_t)^t}{t!} & \leq & \frac{(n-t+1)^t}{t!} (1-x_t).
	\end{eqnarray}
	Using the inequalities ${n-k_t \choose t}\leq (n-k_t)^t/t!$ and $(n-t+1)^t/t!\leq {n\choose t}$, from~\eqref{eqn-tmp} we obtain
		${n-k_t\choose t} \leq {n\choose t}(1-x_t)$, which holds if and only if 
        \begin{eqnarray}\label{eqn-tmp0}
		{n\choose t} - {n-k_t\choose t} & \geq & {n\choose t}x_t = {n\choose t} \frac{t\mbox{-}\sym{inf}(f)}{\epsilon}.
        \end{eqnarray}
	Let if possible that the Fourier transform of $f$ is not $\epsilon$-concentrated on coefficients of weights up to $k_t$. Then for $k_t$ satisfying~\eqref{eqn-tmp0},
	we have $\sum_{k=k_t}^n\widehat{p}_k(f) > \epsilon$. From~\eqref{eqn-t-inf-exp}, we have 
	\begin{eqnarray}
		t\mbox{-}\sym{inf}(f) & = & \frac{1}{{n\choose t}} \sum_{k=1}^n N_{n,t,k}\  \widehat{p}_f(k)  \nonumber \\
		& = & \frac{1}{{n\choose t}} \left(\sum_{k=1}^{k_t-1} N_{n,t,k}\  \widehat{p}_f(k) + \sum_{k=k_t}^n N_{n,t,k}\  \widehat{p}_f(k) \right) \nonumber \\
		& \geq & \frac{1}{{n\choose t}} \sum_{k=k_t}^n \left({n\choose t} - {n-k\choose t} \right) \widehat{p}_f(k) \quad \mbox{(using~\eqref{eqn-N})} \nonumber \\
		& \geq & \frac{1}{{n\choose t}} \left({n\choose t} - {n-k_t\choose t} \right) \sum_{k=k_t}^n \widehat{p}_f(k) \nonumber \\
		& > & \frac{1}{{n\choose t}} \left({n\choose t} - {n-k_t\choose t} \right) \epsilon \quad \mbox{(by assumption)} \nonumber \\
		& \geq & t\mbox{-}\sym{inf}(f) \quad \mbox{(using~\eqref{eqn-tmp0}).} \nonumber
        \end{eqnarray}
This gives us the desired contradiction.
\end{proof}

An alternative Walsh transform based characterisation of influence is given by the following result.

\begin{theorem}\label{thm-inf-ac-exp}
	Let $f$ be an $n$-variable function and $\emptyset \neq T\subseteq [n]$. Then
	\begin{eqnarray}\label{eqn-inf-ac-exp}
		\sym{inf}_f(T) & = & 1- \frac{1}{2^{n-t}} \sum_{\bm{\alpha}\in \mathbb{F}_2^{n-t}} \left( W_{{f}_{\bm{\alpha}}}(\mathbf{0}_t) \right)^2, 
	\end{eqnarray}
	where $f_{\bm{\alpha}}$ denotes $f_{\mathbf{X}_{\overline{T}}\leftarrow \bm{\alpha}}$. 
\end{theorem}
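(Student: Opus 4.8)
The plan is to combine an identity already extracted inside the proof of Theorem~\ref{thm-inf-fourier} with the second-order Poisson summation formula~\eqref{eqn-f-ac-subspace-restrict}; the statement then follows with essentially no new computation.

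First I would recall that the proof of Theorem~\ref{thm-inf-fourier} establishes, via~\eqref{eqn-inf-set} and~\eqref{eqn-thm1eqn1} together with Parseval's identity~\eqref{eqn-parseval}, the intermediate relation
\[
	\sym{inf}_f(T) \;=\; 1 - \sum_{\mathbf{x}\leq \chi_{\overline{T}}} \left(W_{f}(\mathbf{x})\right)^2.
\]
This reduces the theorem to showing that $\sum_{\mathbf{x}\leq \chi_{\overline{T}}} \left(W_{f}(\mathbf{x})\right)^2$ equals $2^{-(n-t)}\sum_{\bm{\alpha}\in\mathbb{F}_2^{n-t}}\left(W_{f_{\bm{\alpha}}}(\mathbf{0}_t)\right)^2$.

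Next I would invoke~\eqref{eqn-f-ac-subspace-restrict} directly, which asserts exactly
\[
	\sum_{\mathbf{x}\leq \chi_{\overline{T}}} \left(W_{f}(\mathbf{x})\right)^2 \;=\; \frac{1}{2^{n-t}} \sum_{\bm{\alpha}\leq \chi_{\overline{T}}} \left(W_{f_{\bm{\alpha}}}(\mathbf{0}_t)\right)^2.
\]
The only point to verify is that the index set $\{\bm{\alpha}\leq\chi_{\overline{T}}\}$ on the right, a family of $2^{n-t}$ vectors of $\mathbb{F}_2^n$ supported on $\overline{T}$, is in natural bijection with $\mathbb{F}_2^{n-t}$ under the identification implicit in the definition $f_{\bm{\alpha}} = f_{\mathbf{X}_{\overline{T}}\leftarrow\bm{\alpha}}$, obtained by reading off the $\overline{T}$-coordinates. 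Hence $\sum_{\bm{\alpha}\leq\chi_{\overline{T}}}$ may be rewritten as $\sum_{\bm{\alpha}\in\mathbb{F}_2^{n-t}}$, and substituting into the intermediate relation yields~\eqref{eqn-inf-ac-exp}.

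There is no genuine obstacle here: the substantive content sits in the second-order Poisson summation formula~\eqref{eqn-f-ac-subspace-restrict}, which is quoted from~\cite{carlet2021boolean}. If one instead wanted a self-contained argument, the hard part would be re-deriving that formula, namely expanding each $W_{f_{\bm{\alpha}}}(\mathbf{0}_t)=2^{-t}\sum_{\mathbf{y}\in\mathbb{F}_2^t}(-1)^{f_{\bm{\alpha}}(\mathbf{y})}$, squaring, summing over $\bm{\alpha}\leq\chi_{\overline{T}}$, and matching the resulting double sum against $\sum_{\mathbf{x}\leq\chi_{\overline{T}}}\left(W_f(\mathbf{x})\right)^2$ through the Wiener--Khintchine relation~\eqref{eqn-f-ac}. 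Since the formula is already available, however, the proof collapses to the two displayed steps above.
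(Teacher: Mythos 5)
Your proposal is correct and is essentially the paper's own argument: the paper likewise combines the first-order identity~\eqref{eqn-f-ac-subspace} (equivalently, the relation~\eqref{eqn-thm1eqn1} from the proof of Theorem~\ref{thm-inf-fourier}) with the second-order Poisson summation formula~\eqref{eqn-f-ac-subspace-restrict} and the definition~\eqref{eqn-inf-set}, merely chaining the two subspace identities before invoking the definition rather than after. The only nitpick is that Parseval's identity~\eqref{eqn-parseval} is not actually needed for the intermediate relation $\sym{inf}_f(T)=1-\sum_{\mathbf{x}\leq\chi_{\overline{T}}}\left(W_f(\mathbf{x})\right)^2$; it enters Theorem~\ref{thm-inf-fourier} only in the subsequent complementation step.
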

\begin{proof}
	Let $\#T=t$. Let $E=\{\mathbf{x}\in\mathbb{F}_2^n: \mathbf{x}\leq \chi_{\overline{T}}\}$ and so
	$E^{\perp}=\{\mathbf{x}\in\mathbb{F}_2^n: \mathbf{x}\leq \chi_{T}\}$.
	Using~\eqref{eqn-f-ac-subspace} and~\eqref{eqn-f-ac-subspace-restrict} we have
	\begin{eqnarray*}
		\frac{1}{2^t}\sum_{\mathbf{u}\leq \chi_T} C_f(\mathbf{u}) 
		& = & \frac{1}{2^{n-t}} \sum_{\bm{\alpha}\in \mathbb{F}_2^{n-t}} \left( W_{{f}_{\bm{\alpha}}}(\mathbf{0}_t) \right)^2.
	\end{eqnarray*}
	Using the definition of influence given in~\eqref{eqn-inf-set}, we obtain the required result.
\end{proof}
\begin{remark}\label{rem-comp}
Theorems~\ref{thm-inf-fourier} and~\ref{thm-inf-ac-exp} provide two different Walsh transform based characterisations of $\sym{inf}_f(T)$. The expression for $\sym{inf}_f(T)$ given 
	by~\eqref{eqn-inf-ac-exp}
can be computed in $O(2^n)$ time, while the expression given by~\eqref{eqn-inf-fourier} in general will require $O(n2^n)$ time using the fast Fourier transform algorithm to compute
the required values of the Walsh transform.
\end{remark}

We obtain the following corollary of Theorem~\ref{thm-inf-ac-exp}.
\begin{corollary}\label{cor-inf-ac-exp}
	Let $f$ be an $n$-variable Boolean function and $\emptyset \neq T \subseteq [n]$. Then
	\begin{eqnarray}\label{eqn-inf-ac-exp-var}
		\sym{inf}_f(T) & = & 
		\frac{1}{2^{n-2-t}}\sum_{\bm{\alpha}\in \mathbb{F}_2^{n-t}} \sym{Var}(f_{\bm{\alpha}})
	\end{eqnarray}
	where $f_{\bm{\alpha}}$ denotes $f_{\mathbf{X}_{\overline{T}}\leftarrow \bm{\alpha}}$. 
\end{corollary}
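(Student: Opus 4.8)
The plan is to deduce the statement directly from Theorem~\ref{thm-inf-ac-exp}, whose expression~\eqref{eqn-inf-ac-exp} for $\sym{inf}_f(T)$ already sums over the restrictions $f_{\bm{\alpha}}=f_{\mathbf{X}_{\overline{T}}\leftarrow\bm{\alpha}}$ indexed by $\bm{\alpha}\in\mathbb{F}_2^{n-t}$. All that remains is to rewrite each term $\bigl(W_{f_{\bm{\alpha}}}(\mathbf{0}_t)\bigr)^2$ appearing there as a variance of $f_{\bm{\alpha}}$, and then simplify the resulting constants. So this is a purely algebraic corollary with no real structural content beyond Theorem~\ref{thm-inf-ac-exp}.

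The key identity to establish is that, for any Boolean function $g$, $\bigl(W_g(\mathbf{0})\bigr)^2 = 1-4\,\sym{Var}(g)$. First I would note, from the definition of the Walsh transform in~\eqref{eqn-wt-fou}, that $W_g(\mathbf{0})=\tfrac{1}{2^m}\sum_{\mathbf{x}}(-1)^{g(\mathbf{x})}=1-2\,\mathbb{E}(g)$ for an $m$-variable function $g$. Squaring gives $\bigl(W_g(\mathbf{0})\bigr)^2=(1-2\,\mathbb{E}(g))^2=1-4\,\mathbb{E}(g)(1-\mathbb{E}(g))=1-4\,\sym{Var}(g)$, using $\sym{Var}(g)=\mathbb{E}(g)(1-\mathbb{E}(g))$. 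This is exactly the specialisation, to the $t$-variable functions $f_{\bm{\alpha}}$, of the fact recorded in the preliminaries that $1-\widehat{p}_f(\mathbf{0}_n)=4\,\sym{Var}(f)$; I would simply apply it with $g=f_{\bm{\alpha}}$ for each $\bm{\alpha}\in\mathbb{F}_2^{n-t}$.

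Substituting $\bigl(W_{f_{\bm{\alpha}}}(\mathbf{0}_t)\bigr)^2 = 1-4\,\sym{Var}(f_{\bm{\alpha}})$ into~\eqref{eqn-inf-ac-exp} yields
\begin{eqnarray*}
\sym{inf}_f(T)
& = & 1-\frac{1}{2^{n-t}}\sum_{\bm{\alpha}\in\mathbb{F}_2^{n-t}}\bigl(1-4\,\sym{Var}(f_{\bm{\alpha}})\bigr)
= 1-\frac{1}{2^{n-t}}\Bigl(2^{n-t}-4\sum_{\bm{\alpha}\in\mathbb{F}_2^{n-t}}\sym{Var}(f_{\bm{\alpha}})\Bigr),
\end{eqnarray*}
where I have used that the index set $\mathbb{F}_2^{n-t}$ has exactly $2^{n-t}$ elements, so the constant terms contribute $\tfrac{1}{2^{n-t}}\cdot 2^{n-t}=1$, which cancels the leading $1$. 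The surviving term is $\tfrac{4}{2^{n-t}}\sum_{\bm{\alpha}}\sym{Var}(f_{\bm{\alpha}})=\tfrac{1}{2^{n-2-t}}\sum_{\bm{\alpha}}\sym{Var}(f_{\bm{\alpha}})$, which is precisely~\eqref{eqn-inf-ac-exp-var}. The only point demanding any care is the bookkeeping of the normalisation factor, namely tracking that the factor of $4$ merges with $2^{n-t}$ to give $2^{n-2-t}$; there is no genuine obstacle in this argument.
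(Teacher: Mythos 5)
Your proof is correct and follows essentially the same route as the paper: both start from Theorem~\ref{thm-inf-ac-exp}, convert each $\left(W_{f_{\bm{\alpha}}}(\mathbf{0}_t)\right)^2$ into $1-4\,\sym{Var}(f_{\bm{\alpha}})$ via $W_{f_{\bm{\alpha}}}(\mathbf{0}_t)=1-2\,\mathbb{E}(f_{\bm{\alpha}})$, and then absorb the factor of $4$ into the normalisation to obtain $2^{-(n-2-t)}$. The only difference is the order of the algebraic steps, which is immaterial.
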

\begin{proof}
	Using~\eqref{eqn-inf-ac-exp}, we have
	\begin{eqnarray}
		\sym{inf}_f(T) 
		& = & 1- \frac{1}{2^{n-t}} \sum_{\bm{\alpha}\in \mathbb{F}_2^{n-t}} \left( W_{{f}_{\bm{\alpha}}}(\mathbf{0}_t) \right)^2 \nonumber \\
		& = & \frac{1}{2^{n-t}}\sum_{\bm{\alpha}\in \mathbb{F}_2^{n-t}} \left(1-\left( W_{{f}_{\bm{\alpha}}}(\mathbf{0}_t) \right)^2 \right) \nonumber \\
		& = & \frac{1}{2^{n-t}}\sum_{\bm{\alpha}\in \mathbb{F}_2^{n-t}} 
			4\mathbb{E}\left(f_{\bm{\alpha}} \right) \left(1-\mathbb{E}\left(f_{\bm{\alpha}} \right) \right) \nonumber \\
		& = & \frac{1}{2^{n-2-t}}\sum_{\bm{\alpha}\in \mathbb{F}_2^{n-t}} \sym{Var}(f_{\bm{\alpha}}).
	\end{eqnarray}

\end{proof}

One may consider some basic desiderata that any reasonable measure of influence should satisfy. Since we are considering normalised measures, the value of influence
should be in the set $[0,1]$ and it should take the value $0$ if and only if the function is degenerate on the set of variables. Further, by expanding a set of
variables, the value of influence should not decrease, i.e. influence should be monotonic non-decreasing. Also, sub-additivity is a desirable property. 
The following result shows these properties for $\sym{inf}_f(T)$ and also characterises the condition under which $\sym{inf}_f(T)$ takes its maximum value 1.
\begin{theorem} \label{thm-inf-basic}
        Let $f$ be an $n$-variable Boolean function and $\emptyset \neq T,S \subseteq [n]$. Then
	\begin{enumerate}
		\item $0\leq \sym{inf}_f(T) \leq 1$. 
		\item $\sym{inf}_f(T)=0$ if and only if the function $f$ is degenerate on the variables indexed by $T$.
		\item $\sym{inf}_f(T) = 1$ if and only if $f_{\bm{\alpha}}$ is balanced for each $\bm{\alpha}\in \mathbb{F}_2^{n-t}$, where $f_{\bm{\alpha}}$
			denotes $f_{\mathbf{X}_{\overline{T}}\leftarrow \bm{\alpha}}$.
		\item $\sym{inf}_f(S\cup T) \geq \sym{inf}_f(T)$. 
		\item $\sym{inf}_f(S\cup T) = \sym{inf}_f(S) + \sym{inf}_f(T) - \sum_{\mathbf{u}\in\mathcal{U}}\left(W_{f}(\mathbf{u})\right)^2$, where
			$\mathcal{U} = \{\mathbf{u}\in\mathbb{F}_2^n:\sym{supp}(\mathbf{u})\cap S\neq \emptyset \neq \sym{supp}(\mathbf{u}) \cap T\}$.
			Consequently, $\sym{inf}_f(S\cup T) \leq \sym{inf}_f(S) + \sym{inf}_f(T)$ (i.e., $\sym{inf}_f(T)$ satisfies sub-additivity).
	\end{enumerate}
\end{theorem}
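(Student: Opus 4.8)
The plan is to reduce every part to one of the two Walsh-transform characterisations already in hand, since each is suited to a different subset of the claims. Parts 1, 4, and 5 are purely combinatorial consequences of the support-based formula of Theorem~\ref{thm-inf-fourier}, while Parts 2 and 3 follow from the restriction-based formulas of Theorem~\ref{thm-inf-ac-exp} and Corollary~\ref{cor-inf-ac-exp}. The organising fact throughout is that $\sym{inf}_f(T)$ is a sum of the non-negative quantities $(W_f(\mathbf{u}))^2$ over a set of indices determined by $T$, and that by Parseval~\eqref{eqn-parseval} these quantities sum to $1$ over all of $\mathbb{F}_2^n$.

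First I would dispose of Part 1: by Theorem~\ref{thm-inf-fourier}, $\sym{inf}_f(T)$ is a sum of non-negative terms, hence is $\geq 0$, and since the index set $\{\mathbf{u}:\sym{supp}(\mathbf{u})\cap T\neq\emptyset\}$ is contained in $\mathbb{F}_2^n$, the sum is at most $\sum_{\mathbf{u}\in\mathbb{F}_2^n}(W_f(\mathbf{u}))^2=1$ by~\eqref{eqn-parseval}. For Part 4, the inclusion $T\subseteq S\cup T$ forces the index set for $T$ to sit inside the index set for $S\cup T$, so monotonicity is immediate from non-negativity of the summands. Part 5 is an inclusion--exclusion on these index sets: writing $A=\{\mathbf{u}:\sym{supp}(\mathbf{u})\cap S\neq\emptyset\}$ and $B=\{\mathbf{u}:\sym{supp}(\mathbf{u})\cap T\neq\emptyset\}$, one checks that $A\cup B=\{\mathbf{u}:\sym{supp}(\mathbf{u})\cap(S\cup T)\neq\emptyset\}$ and $A\cap B=\mathcal{U}$, so applying $\sum_{A\cup B}=\sum_{A}+\sum_{B}-\sum_{A\cap B}$ to the summands $(W_f(\mathbf{u}))^2$ and invoking Theorem~\ref{thm-inf-fourier} yields the stated identity; sub-additivity then follows at once because the subtracted term $\sum_{\mathbf{u}\in\mathcal{U}}(W_f(\mathbf{u}))^2$ is non-negative.

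For Part 3 I would use Theorem~\ref{thm-inf-ac-exp}: $\sym{inf}_f(T)=1$ is equivalent to $\sum_{\bm{\alpha}}(W_{f_{\bm{\alpha}}}(\mathbf{0}_t))^2=0$, hence to $W_{f_{\bm{\alpha}}}(\mathbf{0}_t)=0$ for every $\bm{\alpha}$, which by the balancedness criterion stated just after~\eqref{eqn-wt-fou} (the case of the zero mask) is exactly the assertion that each $f_{\bm{\alpha}}$ is balanced; reading the chain backwards gives the converse. Part 2 I would read off Corollary~\ref{cor-inf-ac-exp}: since each $\sym{Var}(f_{\bm{\alpha}})\geq 0$, the sum vanishes if and only if $\sym{Var}(f_{\bm{\alpha}})=0$ for every $\bm{\alpha}$, i.e. every restriction $f_{\mathbf{X}_{\overline{T}}\leftarrow\bm{\alpha}}$ is a constant function of the variables indexed by $T$.

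The one step needing genuine care — and the main, though modest, obstacle — is matching this last condition to the paper's definition of degeneracy. Degeneracy on $T$ is phrased in Section~\ref{sec-prelim} by fixing the $T$-variables and requiring the resulting functions of $\mathbf{X}_{\overline{T}}$ to all coincide, whereas Corollary~\ref{cor-inf-ac-exp} fixes the $\overline{T}$-variables instead. I would bridge this by observing that both phrasings are equivalent to the single coordinate-free statement that the output of $f$ does not depend on the variables indexed by $T$: if those variables are irrelevant then fixing $\mathbf{X}_{\overline{T}}$ leaves a constant, and conversely if every such restriction is constant then $f$ depends only on $\mathbf{X}_{\overline{T}}$, so fixing the $T$-variables to any two values produces the same function of $\mathbf{X}_{\overline{T}}$. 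This identification closes Part 2 and completes the proof.
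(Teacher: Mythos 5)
Your proposal is correct and follows essentially the same route as the paper: Parts 1, 4 and 5 from the support-based characterisation of Theorem~\ref{thm-inf-fourier} together with Parseval, and Parts 2 and 3 from the restriction-based characterisation of Theorem~\ref{thm-inf-ac-exp} (your use of Corollary~\ref{cor-inf-ac-exp} for Part 2 is only a cosmetic variation, since constancy of $f_{\bm{\alpha}}$, vanishing variance, and $\left(W_{f_{\bm{\alpha}}}(\mathbf{0}_t)\right)^2=1$ are immediately equivalent). Your explicit bridging of the two phrasings of degeneracy (fixing $\mathbf{X}_T$ versus fixing $\mathbf{X}_{\overline{T}}$) is a detail the paper asserts without argument, and your spelled-out version of it is sound.
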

\begin{proof}
	The first point follows from Theorem~\ref{thm-inf-fourier} and Parseval's theorem. The fourth and fifth points also follow from Theorem~\ref{thm-inf-fourier}.
	The third point follows from Theorem~\ref{thm-inf-ac-exp}.

	Consider the second point. From~\eqref{eqn-inf-ac-exp}, $\sym{inf}_f(T)=0$ if and only if 
	$\sum_{\bm{\alpha}\in \mathbb{F}_2^{n-t}} \left( W_{{f}_{\bm{\alpha}}}(\mathbf{0}_t) \right)^2=2^{n-t}$. Since $\left( W_{{f}_{\bm{\alpha}}}(\mathbf{0}_t) \right)^2\leq 1$,
	it follows that $\sum_{\bm{\alpha}\in \mathbb{F}_2^{n-t}} \left( W_{{f}_{\bm{\alpha}}}(\mathbf{0}_t) \right)^2=2^{n-t}$ if and only if
	$\left( W_{{f}_{\bm{\alpha}}}(\mathbf{0}_t) \right)^2=1$ (equivalently, $f_{\bm{\alpha}}$ is constant) for all $\bm{\alpha}\in \mathbb{F}_2^{n-t}$. 
	The last condition is equivalent to the statement that $f$ is degenerate on the set of variables indexed by $T$.
\end{proof}


\begin{remark} \label{rem-GS}
	For the Gangopadhyay and St{\u{a}}nic{\u{a}} notion of influence $\mathcal{J}_f(T)$ (see~\ref{eqn-GS-inf}) the second point of
	Theorem~\ref{thm-inf-basic} does not hold. It is possible that $f$ is not degenerate on the variables indexed by $T$, yet $\mathcal{J}_f(T)=0$. 
	For example, let $f(X_1,X_2,X_3,X_4)=(1\oplus X_1)X_2(X_3\oplus X_4)$ and $T=\{3,4\}$. Then it may be checked that $\mathcal{J}_f(T)=0$, but
	$f$ is not degenerate on the set of variables $\{X_3,X_4\}$ as $f(0,1,0,0)=0\neq f(0,1,0,1)$. 

	If a function is not degenerate on the set of variables indexed by $T$, then these variables have an effect on value of $f$. Any reasonable
	measure of influence should ensure that if $f$ is not degenerate on a set of variables, then the value of the measure for this set of variables is positive.
	Since this condition does not hold for $\mathcal{J}_f(T)$, this measure cannot be considered to be a satisfactory measure of influence of a set of
	variables.
\end{remark}

\begin{theorem}\label{thm-max-t-inf}
	Let $f$ be an $n$-variable Boolean function and $t$ be an integer with $1\leq t\leq n$. 
	\begin{enumerate}
		\item $t$-$\sym{inf}(f)$ takes its maximum value 1 if and only if $f$ is $(n-t)$-resilient.
		\item $t$-$\sym{inf}(f)$ takes its minimum value 0 if and only if $f$ is a constant function.
	\end{enumerate}
\end{theorem}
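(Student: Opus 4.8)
The plan is to derive everything from the Fourier/Walsh expression of Theorem~\ref{thm-inf-total},
\[
t\mbox{-}\sym{inf}(f) = \frac{1}{\binom{n}{t}}\sum_{k=1}^n N_{n,t,k}\,\widehat{p}_f(k),
\]
combined with Parseval's identity~\eqref{eqn-parseval}, which gives $\sum_{k=0}^n \widehat{p}_f(k)=1$ with every $\widehat{p}_f(k)\ge 0$. First I would record two elementary facts about the weights $N_{n,t,k}=\binom{n}{t}-\binom{n-k}{t}$ coming from~\eqref{eqn-N}: since $\binom{n-k}{t}$ is non-increasing in $k$, we have $0\le N_{n,t,k}\le \binom{n}{t}$ throughout, with $N_{n,t,k}=\binom{n}{t}$ exactly when $\binom{n-k}{t}=0$, i.e. when $k\ge n-t+1$, and strictly $N_{n,t,k}<\binom{n}{t}$ for $1\le k\le n-t$; moreover $N_{n,t,k}>0$ for every $k\ge 1$, vanishing only at $k=0$. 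It is exactly the strictness of these inequalities on the interior range that will force spectral weight to vanish in the characterisations below.

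For the first statement I would use $N_{n,t,k}/\binom{n}{t}\le 1$ to get $t\mbox{-}\sym{inf}(f)\le \sum_{k=1}^n \widehat{p}_f(k)=1-\widehat{p}_f(\mathbf{0}_n)\le 1$. Equality throughout requires simultaneously $\widehat{p}_f(\mathbf{0}_n)=0$ and $N_{n,t,k}=\binom{n}{t}$ for every $k\ge 1$ with $\widehat{p}_f(k)>0$; by the facts above the latter confines the surviving spectral weight to $k\ge n-t+1$, hence forces $\widehat{p}_f(k)=0$ for all $0\le k\le n-t$. Since $\widehat{p}_f(k)$ is a sum of squared Walsh coefficients over weight-$k$ vectors, this says exactly $W_f(\mathbf{u})=0$ whenever $\sym{wt}(\mathbf{u})\le n-t$, i.e. that $f$ is $(n-t)$-resilient. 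The converse is a one-line substitution: if $f$ is $(n-t)$-resilient then only terms with $k\ge n-t+1$ survive, each with $N_{n,t,k}=\binom{n}{t}$, so $t\mbox{-}\sym{inf}(f)=\sum_{k=n-t+1}^n\widehat{p}_f(k)=\sum_{k=0}^n\widehat{p}_f(k)=1$.

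For the second statement, every summand $N_{n,t,k}\,\widehat{p}_f(k)$ is non-negative, so $t\mbox{-}\sym{inf}(f)=0$ if and only if every summand vanishes. As $N_{n,t,k}>0$ for all $k\ge 1$, this is equivalent to $\widehat{p}_f(k)=0$ for all $k\ge 1$, i.e. $W_f(\mathbf{u})=0$ for every $\mathbf{u}\ne\mathbf{0}_n$. Feeding this into the inversion formula~\eqref{eqn-inv-fourier} applied to $(-1)^f$ shows $(-1)^{f(\mathbf{x})}$ is the constant $W_f(\mathbf{0}_n)$, so $f$ is constant; conversely a constant $f$ has all Walsh weight concentrated at $\mathbf{0}_n$, giving $t\mbox{-}\sym{inf}(f)=0$.

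I do not expect a real obstacle here: both parts are extremal analyses of the same non-negative weighted sum, whose normalised weights $N_{n,t,k}/\binom{n}{t}$ climb from $0$ at $k=0$ to $1$ at $k=n-t+1$ and stay there. The only step needing care is justifying the strict inequalities $0<N_{n,t,k}<\binom{n}{t}$ on $1\le k\le n-t$ from~\eqref{eqn-N}, since it is their strictness (not mere inequality) that pins down the resilience and constancy conditions.
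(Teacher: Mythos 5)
Your proof is correct and takes essentially the same route as the paper: both parts are extremal analyses of the weighted sum in Theorem~\ref{thm-inf-total}, exploiting exactly the facts that $N_{n,t,k}>0$ for $k\geq 1$ and $N_{n,t,k}=\binom{n}{t}$ precisely when $k\geq n-t+1$. The paper merely packages the same argument in the complementary form $t\mbox{-}\sym{inf}(f)=1-\frac{1}{\binom{n}{t}}\sum_{k=0}^{n-t}\binom{n-k}{t}\,\widehat{p}_f(k)$, so your equality analysis for the maximum and term-by-term vanishing for the minimum match the paper's reasoning step for step.
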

\begin{proof}
	From~\eqref{eqn-t-inf-exp} and recalling that $N_{n,t,0}=0$ and $N_{n,t,k}={n\choose t}$ for $n-t+1\leq k\leq n$, we have
	\begin{eqnarray}
		t\mbox{-}\sym{inf}(f) 
		& = & \frac{1}{{n\choose t}} \sum_{k=1}^n N_{n,t,k} \widehat{p}_f(k) \nonumber \\
		& = & \frac{1}{{n\choose t}} \left( \sum_{k=0}^{n-t} \left({n\choose t}-{n-k\choose t} \right) + \sum_{k=n-t+1}^n {n\choose t} \right) 
				\widehat{p}_f(k) \nonumber \\
		& = & \frac{1}{{n\choose t}} \left( \sum_{k=0}^n {n\choose t} - \sum_{k=0}^{n-t} {n-k\choose t} \right) 
				\widehat{p}_f(k) \nonumber \\
		& = & 1 - \frac{1}{{n\choose t}} \sum_{k=0}^{n-t} {n-k\choose t} \widehat{p}_f(k). \label{eqn-inf-fourier-t}
	\end{eqnarray}

	From~\eqref{eqn-inf-fourier-t}, $t$-$\sym{inf}(f)$ takes its maximum value of 1 if and only if 
		$\sum_{k=0}^{n-t} {n-k\choose t}\widehat{p}_f(k)=0$ which holds if and only if
		$\widehat{p}_f(k)=0$ for $k=0,\ldots, n-t$, i.e., if and only if $f$ is $(n-t)$-resilient. This shows the first point.

	For the second point, from~\eqref{eqn-inf-fourier-t}, $t\mbox{-}\sym{inf}(f)=0$ if and only if 
	\begin{eqnarray}
		{n\choose t} \widehat{p}_f(0) + {n-1\choose t} \widehat{p}_f(1) + \cdots + {t \choose t} \widehat{p}_f(t) = {n\choose t}. \label{eqn-tt00}
	\end{eqnarray}
	If $f$ is a constant function, then $\widehat{p}_f(0)=1$ and $\widehat{p}_f(k)=0$ for $k\in [n]$. So~\eqref{eqn-tt00} holds.
	On the other hand, if $f$ is not a constant function, then $\widehat{p}_f(0)<1$. In this case, 
	\begin{eqnarray*}
		\lefteqn{ {n\choose t} \widehat{p}_f(0) + {n-1\choose t} \widehat{p}_f(1) + \cdots + {t \choose t} \widehat{p}_f(t) } \\
		& \leq & {n\choose t} \widehat{p}_f(0) + {n-1\choose t} (\widehat{p}_f(1) + \cdots + \widehat{p}_f(n)) \\
		& = & {n\choose t} \widehat{p}_f(0) + {n-1\choose t} (1-\widehat{p}_f(0)) < {n\choose t} .
	\end{eqnarray*}
\end{proof}


The next result shows that as $t$ increases, the value of $t\mbox{-}\sym{inf}(f)$ is non-decreasing.

\begin{theorem}\label{thm-mi}
	Let $f$ be an $n$-variable Boolean function. For $t\in [n]$, $t\mbox{-}\sym{inf}(f)$ increases monotonically with $t$.
\end{theorem}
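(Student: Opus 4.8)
The plan is to start from the Walsh/Fourier expression for the total influence established in Theorem~\ref{thm-inf-total} and reduce monotonicity to a coefficient-wise comparison. Writing \eqref{eqn-t-inf-exp} in the form
\[
t\mbox{-}\sym{inf}(f) = \sum_{k=1}^n g_t(k)\,\widehat{p}_f(k), \qquad g_t(k) = \frac{N_{n,t,k}}{\binom{n}{t}} = 1 - \frac{\binom{n-k}{t}}{\binom{n}{t}},
\]
where the second equality uses \eqref{eqn-N}, the weights $\widehat{p}_f(k)$ are nonnegative (each is a sum of squared Walsh coefficients by \eqref{eqn-pf}) and do not depend on $t$. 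Hence it suffices to prove that for each fixed $k\in[n]$ the coefficient $g_t(k)$ is non-decreasing in $t$; summing the resulting pointwise inequalities against the nonnegative weights $\widehat{p}_f(k)$ then yields $t\mbox{-}\sym{inf}(f) \le (t+1)\mbox{-}\sym{inf}(f)$ for every $t$ with $1\le t\le n-1$, which is the claim.

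The core step is therefore the elementary inequality $g_t(k)\le g_{t+1}(k)$, equivalently
\[
\frac{\binom{n-k}{t+1}}{\binom{n}{t+1}} \le \frac{\binom{n-k}{t}}{\binom{n}{t}}.
\]
I would prove this via the ratio identity obtained from the recurrence $\binom{m}{r+1}=\binom{m}{r}(m-r)/(r+1)$ applied with $m=n-k$ and with $m=n$, namely
\[
\frac{\binom{n-k}{t+1}}{\binom{n}{t+1}} = \frac{\binom{n-k}{t}}{\binom{n}{t}}\cdot\frac{n-k-t}{n-t}.
\]
When $n-k-t\ge 0$ the extra factor lies in $[0,1]$, since $k\ge 0$ and $n-t\ge 1$ for $t\le n-1$, so the right-hand side is at most $\binom{n-k}{t}/\binom{n}{t}$; when $n-k-t<0$ one has $\binom{n-k}{t+1}=0$ and the inequality is trivial. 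The same conclusion can be read off from the product expression $\binom{n-k}{t}/\binom{n}{t}=\prod_{j=0}^{k-1}\bigl(1-t/(n-j)\bigr)$ already used in the proof of Corollary~\ref{cor-iso}, each of whose factors is non-increasing in $t$.

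I expect no serious obstacle here; the only bookkeeping subtlety is the sign of $n-k-t$, which is handled by the convention $\binom{m}{r}=0$ for $0\le m<r$ that keeps the ratio identity valid in all cases. Combining the coefficient-wise monotonicity $g_t(k)\le g_{t+1}(k)$ with the nonnegativity of the weights $\widehat{p}_f(k)$ completes the argument, and no appeal beyond Theorem~\ref{thm-inf-total} and \eqref{eqn-N} is needed.
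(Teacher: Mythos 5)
Your proof is correct and takes essentially the same approach as the paper's: both start from the expression in Theorem~\ref{thm-inf-total} (in the rewritten form $t\mbox{-}\sym{inf}(f)=1-\sum_k \binom{n-k}{t}\widehat{p}_f(k)/\binom{n}{t}$, cf.~\eqref{eqn-inf-fourier-t}) and reduce monotonicity to the termwise inequality $\binom{n-k}{t+1}/\binom{n}{t+1}\le \binom{n-k}{t}/\binom{n}{t}$ summed against the nonnegative weights $\widehat{p}_f(k)$. The only difference is cosmetic: the paper verifies the termwise inequality by writing the difference of the two ratios as an explicit nonnegative factorial expression (separating out the $k=n-t$ term), whereas you use the recurrence-based ratio identity, which handles the $n-k<t$ boundary case a bit more uniformly.
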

\begin{proof}
	For $t\in [n-1]$, the following calculations show that $t\mbox{-}\sym{inf}(f)$ is at most $(t+1)\mbox{-}\sym{inf}(f)$. 
	\begin{eqnarray}
		\lefteqn{t\mbox{-}\sym{inf}(f) \leq (t+1)\mbox{-}\sym{inf}(f)} \nonumber \\
		& \Longleftrightarrow & 
		1-\sum_{k=0}^{n-t} \frac{{n-k\choose t}}{{n\choose t}} \widehat{p}_f(k) \leq 1-\sum_{k=0}^{n-t-1} \frac{{n-k\choose t+1}}{{n\choose t+1}} \widehat{p}_f(k)  
			\nonumber \\
		& \Longleftrightarrow & 
		\sum_{k=0}^{n-t} \frac{{n-k\choose t}}{{n\choose t}} \widehat{p}_f(k) \geq \sum_{k=0}^{n-t-1} \frac{{n-k\choose t+1}}{{n\choose t+1}} \widehat{p}_f(k)  
			\nonumber \\
		& \Longleftrightarrow & 
		\frac{1}{{n\choose t}}\widehat{p}_f(n-t) + \sum_{k=0}^{n-t-1} \left(\frac{{n-k\choose t}}{{n\choose t}}-\frac{{n-k\choose t+1}}{{n\choose t+1}}\right) 
			\widehat{p}_f(k) \geq 0 \nonumber \\
		& \Longleftrightarrow & 
		\frac{1}{{n\choose t}}\widehat{p}_f(n-t) + \sum_{k=0}^{n-t-1} \left(\frac{(n-k)!(n-t-1)!}{n!(n-k-t-1)!} \frac{k}{n-t-k} \right) 
			\widehat{p}_f(k) \geq 0. \label{eqn-t00a}
	\end{eqnarray}
	For $k$ in the range $0$ to $n-t-1$, it follows that $k/(n-t-k) \geq 0$. So the relation in~\eqref{eqn-t00a} holds showing that
	$t\mbox{-}\sym{inf}(f) \leq (t+1)\mbox{-}\sym{inf}(f)$.

\end{proof}

\subsection{Geometric Interpretation \label{subsec-path} }
Let $H_n$ be the $n$-dimensional hypercube, i.e., $H_n$ is a graph whose vertex set is $\mathbb{F}_2^n$ and two vertices $\mathbf{u}$ and $\mathbf{v}$
are connected by an edge if $\mathbf{v}$ can be obtained from $\mathbf{u}$ by flipping one of the bits of $\mathbf{u}$, i.e., if $\sym{wt}(\mathbf{u}\oplus \mathbf{v})=1$.
Let $A$ be a subset of the vertices of $H_n$ and $\overline{A}=\mathbb{F}_2^n\setminus A$. Let $e(A,\overline{A})$ be the number of edges between $A$ and $\overline{A}$.
Suppose $f$ is an $n$-variable Boolean function such that $\sym{supp}(f)=A$. It is known that $\sym{inf}(f)=e(A,\overline{A})/2^{n-1}$ 
(see~\cite{Ka16} and Page~52 of~\cite{o2014analysis}). This relation is called the edge expansion property of influence. In this section, we obtain
a general form of this relation for $t$-$\sym{inf}(f)$.

Suppose $\mathbf{u}$ is a vertex of $H_n$ and $\bm{\alpha}\in \mathbb{F}_2^n$ with $T=\sym{supp}(\bm{\alpha})$ and $t=\#T$. Let $\mathbf{v}=\mathbf{u}\oplus \bm{\alpha}$. 
Then $\mathbf{v}$ is obtained from $\mathbf{u}$ by flipping the bits of $\mathbf{u}$ which are indexed by $T$. Since these bits can be flipped in any order, there
are a total of $t!$ paths of length $t$ in $H_n$ between $\mathbf{u}$ and $\mathbf{v}$. 

Let $A$ be a subset of $H_n$ and $f$ be an $n$-variable Boolean function such that $\sym{supp}(f)=A$. For $\bm{\alpha}\in \mathbb{F}_2^n$, let $n_{\bm{\alpha}}$ be
the number of paths between $A$ and $\overline{A}$ such that the two ends $\mathbf{u}$ and $\mathbf{v}$ of any such path satisfy $\mathbf{u}\oplus \mathbf{v}=\bm{\alpha}$.
The following result relates $n_{\bm{\alpha}}$ to the autocorrelation of $f$ at $\bm{\alpha}$.
\begin{proposition}\label{prop-ac-path}
	$\displaystyle C_f(\bm{\alpha}) = 1 - \frac{n_{\bm{\alpha}}}{(\sym{wt}(\bm{\alpha}))!2^{n-2}}$.
\end{proposition}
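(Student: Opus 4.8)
The plan is to reduce everything to the weight form of the auto-correlation recorded in~\eqref{eqn-ac-wt}, namely $C_f(\bm{\alpha}) = 1 - \sym{wt}(f(\mathbf{X})\oplus f(\mathbf{X}\oplus\bm{\alpha}))/2^{n-1}$. Write $t=\sym{wt}(\bm{\alpha})$ and let $W=\sym{wt}(f(\mathbf{X})\oplus f(\mathbf{X}\oplus\bm{\alpha}))$, so that $W$ is exactly the number of $\mathbf{x}\in\mathbb{F}_2^n$ with $f(\mathbf{x})\neq f(\mathbf{x}\oplus\bm{\alpha})$. Given this, it suffices to establish the single combinatorial identity $n_{\bm{\alpha}} = t!\,W/2$: substituting it into the displayed weight expression gives $C_f(\bm{\alpha}) = 1 - (2n_{\bm{\alpha}}/t!)/2^{n-1} = 1 - n_{\bm{\alpha}}/(t!\,2^{n-2})$, which is the assertion since $t=\sym{wt}(\bm{\alpha})$.

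To prove $n_{\bm{\alpha}}=t!\,W/2$ I would first rephrase membership in terms of $f$. Because $\sym{supp}(f)=A$, a vertex $\mathbf{x}$ lies in $A$ iff $f(\mathbf{x})=1$; hence for $\mathbf{v}=\mathbf{u}\oplus\bm{\alpha}$ exactly one of $\mathbf{u},\mathbf{v}$ lies in $A$ (so that a path joining them runs between $A$ and $\overline{A}$) iff $f(\mathbf{u})\neq f(\mathbf{v})$. The number of ordered pairs $(\mathbf{u},\mathbf{u}\oplus\bm{\alpha})$ with $f(\mathbf{u})\neq f(\mathbf{u}\oplus\bm{\alpha})$ is therefore precisely $W$; as $\bm{\alpha}\neq\mathbf{0}$ forces $\mathbf{u}\neq\mathbf{u}\oplus\bm{\alpha}$, these ordered pairs collapse into $W/2$ unordered endpoint pairs $\{\mathbf{u},\mathbf{v}\}$ with $\mathbf{u}\oplus\mathbf{v}=\bm{\alpha}$ whose two ends straddle $A$ and $\overline{A}$. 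Invoking the path count already noted before the proposition statement, each such unordered pair is joined by exactly $t!$ distinct length-$t$ paths in $H_n$, obtained by flipping the bits indexed by $\sym{supp}(\bm{\alpha})$ in every possible order. Multiplying the $W/2$ endpoint pairs by the $t!$ paths per pair yields $n_{\bm{\alpha}}=t!\,W/2$, and the substitution above finishes the proof.

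The step requiring the most care is the ordered-versus-unordered bookkeeping that produces the factor of two. I must treat a path and its reversal as the same undirected path, so that the $t!$ orderings of bit-flips starting from one fixed endpoint already enumerate all distinct paths between the unordered pair, and simultaneously note that each unordered endpoint pair is counted twice among the $W$ ordered pairs. It is exactly this halving that turns the $2^{n-1}$ in the auto-correlation weight formula into the $2^{n-2}$ appearing in the statement; a useful consistency check is the case $t=1$, where summing $n_{\mathbf{e}_i}$ over $i$ recovers the known edge-expansion identity $\sym{inf}(f)=e(A,\overline{A})/2^{n-1}$.
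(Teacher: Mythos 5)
Your proof is correct and takes essentially the same route as the paper's: both reduce to the weight form of the auto-correlation in~\eqref{eqn-ac-wt} and establish $n_{\bm{\alpha}}=(\sym{wt}(\bm{\alpha}))!\,W/2$ by combining the $t!$ paths per straddling endpoint pair with the factor-of-two collapse from ordered to unordered pairs. The paper does the identical bookkeeping via $x_{\bm{\alpha}}=\#\{(\mathbf{u},\mathbf{v}):\mathbf{u}\in A,\ \mathbf{v}\in\overline{A},\ \mathbf{u}\oplus\mathbf{v}=\bm{\alpha}\}$, which is exactly your unordered straddle-pair count $W/2$.
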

\begin{proof}
	Let 
	$x_{\bm{\alpha}}=\#\{(\mathbf{u},\mathbf{v}):\mathbf{u}\in A,\ \mathbf{v}\in\overline{A},\ \mathbf{u}\oplus \mathbf{v}=\bm{\alpha}\}.$
	Then 
	\begin{eqnarray}\label{eqn-x-n}
		n_{\bm{\alpha}} & = & (\sym{wt}(\bm{\alpha}))! x_{\bm{\alpha}}.
	\end{eqnarray}
	Note that 
	$x_{\bm{\alpha}}=\#\{\mathbf{u}\in\mathbb{F}_2^n: f(\mathbf{u})=1 \mbox{ and } f(\mathbf{u}\oplus \bm{\alpha})=0\}.$
	Let $g(\mathbf{X})=f(\mathbf{X})\oplus f(\mathbf{X}\oplus \bm{\alpha})$. Then 
	\begin{eqnarray}
		\sym{wt}(g)
		& = & \#\{\mathbf{u}\in \mathbb{F}_2^n: \mbox{ either } f(\mathbf{u})=1 \mbox{ and } f(\mathbf{u}\oplus \bm{\alpha})=0,
		\mbox{ or } 
		f(\mathbf{u})=0 \mbox{ and } f(\mathbf{u}\oplus \bm{\alpha})=1\} \nonumber \\
		& = & 2 \#\{\mathbf{u}\in\mathbb{F}_2^n: f(\mathbf{u})=1 \mbox{ and } f(\mathbf{u}\oplus \bm{\alpha})=0\} \nonumber \\
		& = & 2x_{\bm{\alpha}}. \label{eqn-x-alpha}
	\end{eqnarray}
	From the definition of $C_f(\bm{\alpha})$ given in~\eqref{eqn-ac-wt}, it follows that $\sym{wt}(g)=2^{n-1}(1-C_f(\bm{\alpha}))$ which combined 
	with~\eqref{eqn-x-n} and~\eqref{eqn-x-alpha} shows the result.
\end{proof}

\begin{remark}\label{rem-geom}
	Proposition~\ref{prop-ac-path} connects auto-correlation to number of paths and consequently provides a geometric interpretation of the auto-correlation
	function. Combining Proposition~\ref{prop-ac-path} with~\eqref{eqn-f-ac}, we obtain
	\begin{eqnarray*}
\left(W_{f}(\bm{\beta})\right)^2 
		& = & \Delta_{\bm{\beta}} - \frac{1}{2^{2n-2}} \sum_{\bm{\alpha}\in\mathbb{F}_2^n}
				(-1)^{\langle\bm{\alpha},\bm{\beta}\rangle} \frac{n_{\bm{\alpha}}}{(\sym{wt}(\bm{\alpha}))!},
	\end{eqnarray*}
	where $\Delta_{\bm{\beta}}=1$ if $\bm{\beta}=\mathbf{0}_n$ and $0$ otherwise. This provides a geometric interpretation of the Walsh transform. 
	To the best of our knowledge, these geometric interpretations of the auto-correlation function and the Walsh transform do not appear earlier in the literature.
\end{remark}

Now we are ready to state the path expansion property of $t$-$\sym{inf}(f)$. 
\begin{theorem}\label{thm-path-expansion}
	Let $f$ be an $n$-variable Boolean function and $t\in [n]$. Then
	\begin{eqnarray}
		t\mbox{-}\sym{inf}(f) 
		& = & 1-\frac{1}{2^{n+t-2}{n\choose t}} \sum_{\bm{\alpha}\in\mathbb{F}_2^n}{n-\sym{wt}(\bm{\alpha})\choose t-\sym{wt}(\bm{\alpha})}
		\left(2^{n-2} - \frac{n_{\bm{\alpha}}}{(\sym{wt}(\bm{\alpha}))!} \right). \label{eqn-path-expansion}
	\end{eqnarray}
\end{theorem}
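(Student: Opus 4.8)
The plan is to express $t\mbox{-}\sym{inf}(f)$ directly through the auto-correlation function by combining the definitions \eqref{eqn-t-inf} and \eqref{eqn-inf-set}, and then to substitute the geometric formula from Proposition~\ref{prop-ac-path}. The only step with genuine content is an interchange of the two summations, which I would handle by a double-counting argument.

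First I would substitute \eqref{eqn-inf-set} into \eqref{eqn-t-inf}, pulling the constant term $1$ out of the average over the ${n\choose t}$ sets $T$, to obtain
\begin{eqnarray*}
t\mbox{-}\sym{inf}(f) &=& 1-\frac{1}{2^t{n\choose t}}\sum_{\{T\subseteq[n]:\#T=t\}}\sum_{\bm{\alpha}\leq\chi_T}C_f(\bm{\alpha}).
\end{eqnarray*}

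Next I would swap the order of the two sums. For a fixed $\bm{\alpha}$, the inner condition $\bm{\alpha}\leq\chi_T$ is the same as $\sym{supp}(\bm{\alpha})\subseteq T$; since $\sym{supp}(\bm{\alpha})$ is a fixed set of size $\sym{wt}(\bm{\alpha})$, the number of $t$-subsets $T$ of $[n]$ containing it is obtained by choosing the remaining $t-\sym{wt}(\bm{\alpha})$ elements from the $n-\sym{wt}(\bm{\alpha})$ indices outside $\sym{supp}(\bm{\alpha})$, i.e.\ ${n-\sym{wt}(\bm{\alpha})\choose t-\sym{wt}(\bm{\alpha})}$. This coefficient is automatically $0$ when $\sym{wt}(\bm{\alpha})>t$, so the outer sum can be taken over all of $\mathbb{F}_2^n$ with no case distinction, and each $C_f(\bm{\alpha})$ acquires the weight ${n-\sym{wt}(\bm{\alpha})\choose t-\sym{wt}(\bm{\alpha})}$.

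Finally I would invoke Proposition~\ref{prop-ac-path}, rewriting $C_f(\bm{\alpha})=2^{-(n-2)}\bigl(2^{n-2}-n_{\bm{\alpha}}/(\sym{wt}(\bm{\alpha}))!\bigr)$, substitute this into the sum, and absorb the factor $2^{-(n-2)}$ together with the existing $2^{-t}$ into the single scalar $2^{-(n+t-2)}$. This yields exactly \eqref{eqn-path-expansion}. I do not expect any real obstacle here; the whole computation is routine. The one place that warrants a little care is the counting inside the interchange step, namely verifying that the vanishing of ${n-\sym{wt}(\bm{\alpha})\choose t-\sym{wt}(\bm{\alpha})}$ for $\sym{wt}(\bm{\alpha})>t$ is precisely what makes the unrestricted sum over $\mathbb{F}_2^n$ correct.
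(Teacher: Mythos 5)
Your proposal is correct and takes essentially the same approach as the paper: both rest on the observation that a fixed $\bm{\alpha}$ with $\sym{wt}(\bm{\alpha})=k$ lies below $\chi_T$ for exactly ${n-k\choose t-k}$ of the $t$-subsets $T$ (interchanging the sums over $T$ and $\bm{\alpha}$), combined with the substitution of Proposition~\ref{prop-ac-path}. The only difference is the immaterial ordering of steps — the paper substitutes the geometric formula for $C_f(\bm{\alpha})$ inside $\sym{inf}_f(T)$ before averaging over $T$, while you interchange first and substitute last.
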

\begin{proof}
	Using Proposition~\ref{prop-ac-path} in the definition of $\sym{inf}_T(f)$ given by~\eqref{eqn-inf-set}, we have
	\begin{eqnarray}
		\sym{inf}_T(f) 
		& = & 1 - \frac{1}{2^t}\left(\sum_{\bm{\alpha}\leq \chi_T} C_f(\bm{\alpha}) \right) \nonumber \\
		& = & 1 - \frac{1}{2^t}\sum_{k=0}^t\left( \sum_{\bm{\alpha}\leq \chi_T, \sym{wt}(\bm{\alpha})=k} C_f(\bm{\alpha}) \right) \nonumber \\
		& = & 1 - \frac{1}{2^t}\sum_{k=0}^t\left( \sum_{\bm{\alpha}\leq \chi_T, \sym{wt}(\bm{\alpha})=k} \left(1 - \frac{n_{\bm{\alpha}}}{k!2^{n-2}} \right) \right).
		\label{eqn-inf_T-n}
	\end{eqnarray}
	For $\bm{\alpha}\in\mathbb{F}_2^n$ with $\sym{wt}(\bm{\alpha})=k$, there are exactly ${n-k\choose t-k}$ subsets $T$ of $[n]$ such that
	$\alpha\leq \chi_T$. Using this observation, we have
	\begin{eqnarray}
		t\mbox{-}\sym{inf}(f)
		& = & \frac{1}{{n\choose t}} \sum_{T\subseteq [n], \#T=t} \sym{inf}_f(T) \nonumber \\
		& = & 1-\frac{1}{2^t{n\choose t}}\sum_{k=0}^t\left( \sum_{\{\bm{\alpha}: \sym{wt}(\bm{\alpha})=k\}} {n-k\choose t-k} 
		\left(1-\frac{n_{\bm{\alpha}}}{k!2^{n-2}} \right) \right) \nonumber \\
		& = & 1-\frac{1}{2^{n+t-2}{n\choose t}} \sum_{\bm{\alpha}\in\mathbb{F}_2^n}{n-\sym{wt}(\bm{\alpha})\choose t-\sym{wt}(\bm{\alpha})}
		\left(2^{n-2} - \frac{n_{\bm{\alpha}}}{(\sym{wt}(\bm{\alpha}))!} \right). \nonumber
	\end{eqnarray}
\end{proof}
Putting $t=1$ in~\eqref{eqn-path-expansion}, we obtain $1$-$\sym{inf}(f)=\sum_{i\in[n]}n_{\mathbf{e}_i}/(n2^{n-1})=e(A,\overline{A})/(n2^{n-1})$ which 
is the previously mentioned edge expansion property for $\sym{inf}(f)$ scaled by a factor of $n$.

\subsection{Probabilistic Interpretation \label{subsec-prob} }
We have defined the influence of a set of variables using the auto-correlation function. In this section, we provide two probabilistic interpretations of
the influence.

Let $f$ be an $n$-variable Boolean function and $\emptyset \neq T\subseteq [n]$, with $\#T=t$. We define the following probability
\begin{eqnarray}
	\mu_f(T) & = & \displaystyle \Pr_{\bm{\alpha}\leq\chi_T, \mathbf{u}\in\mathbb{F}_2^n}[f(\mathbf{u}) \neq f(\mathbf{u}\oplus \bm{\alpha})]. \label{eqn-mu} 
\end{eqnarray}
In~\eqref{eqn-mu}, $\bm{\alpha}$ is required to be chosen uniformly at random from the set $\{\mathbf{x}: \mathbf{x}\leq\chi_T\}$. This is achieved by fixing
the positions of $\bm{\alpha}$ corresponding to the elements of $\overline{T}$ to be 0, and choosing the bits of $\bm{\alpha}$ corresponding to the positions
in $T$ uniformly at random.

The definition of influence given by Fischer et al.~\cite{fischer02} and Blais~\cite{blais2009testing} is $I_f(T)$ and is given by~\eqref{eqn-blais}.
This definition is made in terms of the function $Z(T,\mathbf{x},\mathbf{y})$.
For $\mathbf{x},\mathbf{y}\in \mathbb{F}_2^n$, both $\mathbf{x}$ and $Z(T,\mathbf{x},\mathbf{y})$ agree on the bits indexed by $\overline{T}$. In particular,
the bits of $\mathbf{y}$ indexed by $\overline{T}$ do not play any role in the probability 
$\Pr_{\mathbf{x},\mathbf{y}\in \mathbb{F}_2^n} \left[ f(\mathbf{x}) \neq f(Z(T,\mathbf{x},\mathbf{y})) \right]$. So this probability
is the same as the probability of the event arising from choosing $\bm{\beta}$ uniformly at random from $\mathbb{F}_2^{n-t}$, choosing $\mathbf{w}$ and $\mathbf{z}$ 
independently and uniformly from $\mathbb{F}_2^t$ and considering $f_{\bm{\beta}}(\mathbf{w})\neq f_{\bm{\beta}}(\mathbf{z})$. This shows that 
\begin{eqnarray}
I_f(T) & = & \displaystyle \Pr_{\bm{\beta}\in\mathbb{F}_2^{n-t}, \mathbf{w},\mathbf{z}\in\mathbb{F}_2^t}[f_{\bm{\beta}}(\mathbf{w}) \neq f_{\bm{\beta}}(\mathbf{z})]. \label{eqn-nu} 
\end{eqnarray}
where $f_{\bm{\beta}}$ denotes $f_{\mathbf{X}_{\overline{T}}\leftarrow \bm{\beta}}$.

The following result relates the above two probabilities to influence.
\begin{theorem}\label{thm-inf-prob}
Let $f$ be an $n$-variable Boolean function and $\emptyset \neq T\subseteq [n]$. Then $\mu_f(T)=I_f(T)=\sym{inf}_f(T)/2$.
\end{theorem}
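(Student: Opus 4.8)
The plan is to establish the two stated equalities separately, each by reducing the relevant probability to a quantity that the paper has already expressed in terms of the auto-correlation function, and then to let~\eqref{eqn-inf-set} and Corollary~\ref{cor-inf-ac-exp} handle the bookkeeping of normalisation constants.

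First I would prove $\mu_f(T)=\sym{inf}_f(T)/2$. Conditioning on the choice of $\bm{\alpha}$, which ranges uniformly over the $2^t$ vectors satisfying $\bm{\alpha}\leq\chi_T$, I would write
\begin{eqnarray*}
\mu_f(T) &=& \frac{1}{2^t}\sum_{\bm{\alpha}\leq\chi_T}\Pr_{\mathbf{u}\in\mathbb{F}_2^n}[f(\mathbf{u})\neq f(\mathbf{u}\oplus\bm{\alpha})].
\end{eqnarray*}
By~\eqref{eqn-ac-wt} the inner probability equals $\tfrac{1}{2}(1-C_f(\bm{\alpha}))$, so, using that there are exactly $2^t$ terms in the sum, the expression collapses to $\tfrac12\bigl(1-2^{-t}\sum_{\bm{\alpha}\leq\chi_T}C_f(\bm{\alpha})\bigr)$, which is precisely $\sym{inf}_f(T)/2$ by the definition~\eqref{eqn-inf-set}.

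Next I would prove $I_f(T)=\sym{inf}_f(T)/2$, starting from the equivalent form~\eqref{eqn-nu}. Conditioning on $\bm{\beta}\in\mathbb{F}_2^{n-t}$ reduces matters to evaluating, for a fixed $t$-variable function $g=f_{\bm{\beta}}$, the quantity $\Pr_{\mathbf{w},\mathbf{z}\in\mathbb{F}_2^t}[g(\mathbf{w})\neq g(\mathbf{z})]$. Since $\mathbf{w}$ and $\mathbf{z}$ are independent and uniform, this splits as $\Pr[g(\mathbf{w})=1,g(\mathbf{z})=0]+\Pr[g(\mathbf{w})=0,g(\mathbf{z})=1]=2\,\mathbb{E}(g)(1-\mathbb{E}(g))=2\,\sym{Var}(g)$. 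Averaging over $\bm{\beta}$ then gives $I_f(T)=2^{1-(n-t)}\sum_{\bm{\beta}}\sym{Var}(f_{\bm{\beta}})$, and comparing this with the expression $\sym{inf}_f(T)=2^{2-(n-t)}\sum_{\bm{\beta}}\sym{Var}(f_{\bm{\beta}})$ supplied by Corollary~\ref{cor-inf-ac-exp} produces the factor of $\tfrac12$.

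I do not expect a genuine obstacle; the argument is essentially bookkeeping once the two probabilities are reduced to restriction-wise quantities. The one point that needs care is the variance identity $\Pr[g(\mathbf{w})\neq g(\mathbf{z})]=2\,\sym{Var}(g)$ for independent uniform inputs, together with tracking the powers of two in $2^{1-(n-t)}$ versus $2^{2-(n-t)}$ so that the ratio comes out exactly as $1/2$. An alternative route that avoids invoking Corollary~\ref{cor-inf-ac-exp} would be to show $I_f(T)=\mu_f(T)$ directly, but channelling $I_f(T)$ through the per-restriction variance seems cleanest.
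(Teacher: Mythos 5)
Your proposal is correct and follows essentially the same route as the paper's proof: the first equality is obtained by conditioning on $\bm{\alpha}\leq\chi_T$, applying~\eqref{eqn-ac-wt}, and matching against~\eqref{eqn-inf-set}; the second is obtained by conditioning on $\bm{\beta}$, using $\Pr_{\mathbf{w},\mathbf{z}}[f_{\bm{\beta}}(\mathbf{w})\neq f_{\bm{\beta}}(\mathbf{z})]=2\,\sym{Var}(f_{\bm{\beta}})$, and invoking the variance expression of Corollary~\ref{cor-inf-ac-exp} (i.e.,~\eqref{eqn-inf-ac-exp-var}), exactly as the paper does. The powers of two also check out, since $2^{1-(n-t)}=1/2^{n-1-t}$ and $2^{2-(n-t)}=1/2^{n-2-t}$ agree with the paper's normalisations.
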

\begin{proof}
	We separately show that $\mu_f(T)=\sym{inf}_f(T)/2$ and $I_f(T)=\sym{inf}_f(T)/2$.
	Let $t=\#T$. 
	\begin{eqnarray}
		\mu_f(T) & = & \frac{1}{2^t} \sum_{\bm{\alpha}\leq \chi_T}\Pr_{\mathbf{u}\in\mathbb{F}_2^n} [f(\mathbf{u}) \neq f(\mathbf{u}\oplus \bm{\alpha})] \nonumber \\
		& = & \frac{1}{2^t} \sum_{\bm{\alpha}\leq \chi_T} \frac{1-C_f(\alpha)}{2} \quad \mbox{ (using~\eqref{eqn-ac-wt})} \nonumber \\
		& = & \frac{1}{2}\left(1-\frac{1}{2^t}\sum_{\bm{\alpha}\leq \chi_T}C_f(\alpha) \right) \nonumber \\
		& = & \frac{\sym{inf}_f(T)}{2}. 
	\end{eqnarray}

	\begin{eqnarray}
		I_f(T) & = & \frac{1}{2^{n-t}} 
		\sum_{\bm{\beta}\in\mathbb{F}_2^{n-t}} \Pr_{\mathbf{w},\mathbf{z}\in\mathbb{F}_2^t} [f_{\bm{\beta}}(\mathbf{w}) \neq f_{\bm{\beta}}(\mathbf{z})] \nonumber \\
		& = & \frac{1}{2^{n-t}} 
		\sum_{\bm{\beta}\in\mathbb{F}_2^{n-t}} 2 \times \frac{\sym{wt}(f_{\bm{\beta}})}{2^t}\left(1 - \frac{\sym{wt}(f_{\bm{\beta}})}{2^t} \right) \nonumber \\
		& = & \frac{1}{2^{n-1-t}} 
		\sum_{\bm{\beta}\in\mathbb{F}_2^{n-t}}  \mathbb{E}(f_{\bm{\beta}}) (1-\mathbb{E}(f_{\bm{\beta}})) \nonumber \\
		& = & \frac{1}{2^{n-1-t}} \sum_{\bm{\beta}\in\mathbb{F}_2^{n-t}}  \sym{Var}(f_{\bm{\beta}}) \nonumber \\
		& = & \frac{\sym{inf}_f(T)}{2} \quad \mbox{ (from~\eqref{eqn-inf-ac-exp-var}).} \nonumber
	\end{eqnarray}
\end{proof}
Using the third point of Theorem~\ref{thm-inf-basic}, a consequence of Theorem~\ref{thm-inf-prob} is that both the probabilities $\mu_f(T)$ and $I_f(T)$ are at most $1/2$.

\begin{remark} \label{rem-blais}
	From Theorem~\ref{thm-inf-prob}, it follows that $I_f(T)=\sym{inf}_f(T)/2$.
	Some of the results for $\sym{inf}_T(f)$ that we have proved have been obtained for $I_f(T)$ in~\cite{fischer02,blais2009testing}. In particular, it 
	has been shown that $I_f(T)$ is equal to half the right hand side of~\eqref{eqn-inf-fourier} using a somewhat long proof which is different from the one that we given. 
	Since we defined influence using the auto-correlation function, we were able to use known results on Walsh transform which make our proof simpler.
	Further, it has been proved in~\cite{fischer02,blais2009testing} that $I_{f}(T)\leq I_{f}(S\cup T)\leq I_{f}(S) + I_{f}(T)$, i.e.,
monotonicity and sub-additivity properties hold for $I_{f}$. These properties for $\sym{inf}_f(T)$ are covered by Points~4 and~5 of Theorem~\ref{thm-inf-basic}. 
\end{remark}

\subsection{Juntas \label{subsec-junta} }
The total influence of the individual variable, i.e. $\sym{inf}(f)$, for an $s$-junta $f$ is known to be at most $s$. 
The following result generalises this to provide an upper bound on $t$-$\sym{inf}(f)$ for an $s$-junta.

\begin{proposition}\label{prop-junta-upp-bnd}
	Let $f$ be an $n$-variable function which is an $s$-junta for some $s\in [n]$. For $t\in [n]$,
	$t\mbox{-}\sym{inf}(f) \leq 1 - {n-s\choose t}/{n\choose t}$.
\end{proposition}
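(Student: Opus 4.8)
The plan is to prove the bound combinatorially, directly from the definition~\eqref{eqn-t-inf} of $t\mbox{-}\sym{inf}(f)$, invoking only the first two points of Theorem~\ref{thm-inf-basic} rather than any Walsh transform machinery. Since $f$ is an $s$-junta, by definition there is a set $S\subseteq [n]$ with $\#S\leq s$ such that $f$ is degenerate on the variables indexed by $\overline{S}$. The idea is to split the $\binom{n}{t}$ subsets $T$ of $[n]$ of size $t$ into those that avoid $S$ and those that meet it, and to observe that the former contribute nothing to the average.

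First I would note that if $T\subseteq\overline{S}$ (equivalently $T\cap S=\emptyset$), then degeneracy of $f$ on $\overline{S}$ immediately forces degeneracy of $f$ on the smaller set $T$, so by the second point of Theorem~\ref{thm-inf-basic} we get $\sym{inf}_f(T)=0$. The number of such $T$ is exactly $\binom{n-\#S}{t}$, the number of $t$-subsets of $\overline{S}$. For each of the remaining $\binom{n}{t}-\binom{n-\#S}{t}$ subsets I would use only the trivial bound $\sym{inf}_f(T)\leq 1$ supplied by the first point of Theorem~\ref{thm-inf-basic}. Summing over all $T$ and dividing by $\binom{n}{t}$ then yields
\[
	t\mbox{-}\sym{inf}(f)
	=\frac{1}{\binom{n}{t}}\sum_{\{T\subseteq[n]:\#T=t\}}\sym{inf}_f(T)
	\leq \frac{\binom{n}{t}-\binom{n-\#S}{t}}{\binom{n}{t}}
	=1-\frac{\binom{n-\#S}{t}}{\binom{n}{t}}.
\]

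The last step is to pass from $\#S$ to $s$. Since $\#S\leq s$ we have $n-\#S\geq n-s$, and because $\binom{m}{t}$ is non-decreasing in $m$ this gives $\binom{n-\#S}{t}\geq\binom{n-s}{t}$, whence $1-\binom{n-\#S}{t}/\binom{n}{t}\leq 1-\binom{n-s}{t}/\binom{n}{t}$, the claimed bound. This final reduction is really the only point that needs care: the statement is phrased in terms of $s$, but a junta may genuinely depend on fewer than $s$ variables, so one must check that replacing $\#S$ by $s$ only weakens the inequality rather than strengthening it. An alternative route, if one prefers the Fourier viewpoint, is to start from the identity~\eqref{eqn-inf-fourier-t}, use that a junta on $S$ has Walsh support confined to $\{\mathbf{u}:\sym{supp}(\mathbf{u})\subseteq S\}$ so that $\widehat{p}_f(k)=0$ for $k>s$, and then bound $\binom{n-k}{t}\geq\binom{n-s}{t}$ for $k\leq s$ together with $\sum_k\widehat{p}_f(k)=1$; this gives the same conclusion but requires first establishing that the junta structure localises the Walsh support.
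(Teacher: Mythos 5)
Your proof is correct and follows essentially the same route as the paper's: split the $t$-subsets into those contained in $\overline{S}$ (which have zero influence by the second point of Theorem~\ref{thm-inf-basic}) and the rest (bounded by $1$ via the first point), then average. The only difference is that you explicitly justify passing from $\#S$ to $s$ via monotonicity of $\binom{m}{t}$ in $m$, a small point the paper glosses over by writing $\binom{n-s}{t}$ directly where $\binom{n-\#S}{t}$ would be exact.
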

\begin{proof}
	Let $T\subseteq [n]$ with $\#T=t$. 
	Since $f$ is an $s$-junta, there is a subset $S\subseteq [n]$, with $\#S\leq s$ such that $f$ is degenerate on the variables indexed by $\overline{S}$. 
	So $\sym{inf}_f(T)=0$ if $T$ is a subset of $\overline{S}$. This means that for ${n-s\choose t}$ possible subsets $T$, $\sym{inf}_f(T)=0$. For the other
	${n\choose t}-{n-s\choose t}$ possible subsets $T$, $\sym{inf}_f(T)\leq 1$. The result now follows from the definition of $t\mbox{-}\sym{inf}(f)$ 
	given in~\eqref{eqn-t-inf}.
\end{proof}
For $t=1$, the upper bound on $1\mbox{-}\sym{inf}(f)$ given by Proposition~\ref{prop-junta-upp-bnd} is $s/n$ which is a scaled version of the bound
$\sym{inf}(f)\leq s$. Note that the upper bound on $t\mbox{-}\sym{inf}(f)$ increases as $t$ increases and reaches 1 for $t>n-s$.

An $n$-variable Boolean function $f$ is said to be $\epsilon$-far from being a $s$-junta if for every $n$-variable $s$-junta $g$,
$\Pr_{\mathbf{x}\in \mathbb{F}_2^n}[f(\mathbf{x})\neq g(\mathbf{x})] \geq \epsilon$. 
It was proved in~\cite{blais2009testing} that if $f$ is $\epsilon$-far from being an $s$-junta, then for any set $S\subseteq [n]$ with $\#S\leq s$, $I_f(\overline{S}) \geq \epsilon$.
The following result provides an equivalent statement for $\sym{inf}_f(\overline{S})$. The reason for stating the result in the present work is that our proof is simpler than
that in~\cite{blais2009testing}. 
\begin{proposition} \label{prop-junta}
	If an $n$-variable Boolean function $f$ is $\epsilon$-far from being an $s$-junta, then for any set $S\subseteq [n]$ with $\#S\leq s$,
	$\sym{inf}_f(\overline{S}) \geq 2\epsilon$.
\end{proposition}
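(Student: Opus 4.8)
The plan is to obtain the bound by exhibiting a single concrete $s$-junta that $f$ must be far from, and then converting that distance into a statement about $\sym{inf}_f(\overline{S})$ through the variance formula of Corollary~\ref{cor-inf-ac-exp}. Throughout I would write $T=\overline{S}$ and $t=\#T$, so that $\overline{T}=S$ is the set of at most $s$ ``junta'' variables and $T$ indexes the free variables. For $\bm{\alpha}\in\mathbb{F}_2^{n-t}$ let $f_{\bm{\alpha}}$ denote $f_{\mathbf{X}_{\overline{T}}\leftarrow\bm{\alpha}}=f_{\mathbf{X}_S\leftarrow\bm{\alpha}}$, a function of the $t$ free variables, and set $p_{\bm{\alpha}}=\mathbb{E}(f_{\bm{\alpha}})$. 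I note that the statement presumes $\overline{S}\neq\emptyset$ (otherwise $\sym{inf}_f$ is undefined), which holds whenever $\#S<n$, the only interesting regime.

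First I would construct the natural ``best'' $s$-junta $g$ supported on the variables indexed by $S$: for each fixing $\bm{\alpha}$ of the $S$-variables, declare $g$ to be the constant equal to the majority value of $f_{\bm{\alpha}}$, breaking ties arbitrarily. Since $g$ depends only on the $\#S\leq s$ variables in $S$, it is an $s$-junta, and the hypothesis that $f$ is $\epsilon$-far from being an $s$-junta gives $\Pr_{\mathbf{x}\in\mathbb{F}_2^n}[f(\mathbf{x})\neq g(\mathbf{x})]\geq\epsilon$. Conditioning on the value $\bm{\alpha}$ of the $S$-variables and using that $g$ is constant and equal to the majority of $f_{\bm{\alpha}}$ on the free coordinates, this disagreement probability equals $\frac{1}{2^{n-t}}\sum_{\bm{\alpha}}\min(p_{\bm{\alpha}},1-p_{\bm{\alpha}})$, so
\[
	\frac{1}{2^{n-t}}\sum_{\bm{\alpha}\in\mathbb{F}_2^{n-t}}\min(p_{\bm{\alpha}},1-p_{\bm{\alpha}}) \;\geq\; \epsilon .
\]

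The bridge to influence is the elementary inequality $\min(p,1-p)\leq 2\,p(1-p)$, valid for every $p\in[0,1]$ (check the two cases $p\leq 1/2$ and $p\geq 1/2$ separately). Applying it termwise and recalling $\sym{Var}(f_{\bm{\alpha}})=p_{\bm{\alpha}}(1-p_{\bm{\alpha}})$, the displayed inequality yields $\epsilon\leq \frac{2}{2^{n-t}}\sum_{\bm{\alpha}}\sym{Var}(f_{\bm{\alpha}})$. By Corollary~\ref{cor-inf-ac-exp},
\[
	\sym{inf}_f(\overline{S}) \;=\; \frac{1}{2^{n-2-t}}\sum_{\bm{\alpha}\in\mathbb{F}_2^{n-t}}\sym{Var}(f_{\bm{\alpha}}) \;=\; \frac{4}{2^{n-t}}\sum_{\bm{\alpha}\in\mathbb{F}_2^{n-t}}\sym{Var}(f_{\bm{\alpha}}),
\]
so the right-hand side of the bound on $\epsilon$ is exactly $\sym{inf}_f(\overline{S})/2$, giving $\sym{inf}_f(\overline{S})\geq 2\epsilon$.

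I do not expect a genuine obstacle here; the only points needing care are the bookkeeping of which index set plays the junta role ($S=\overline{T}$, the fixed variables) versus the free role ($T=\overline{S}$), and the verification of the scalar inequality $\min(p,1-p)\leq 2p(1-p)$. For contrast, the whole proposition also follows in one line from the already-cited fact $I_f(\overline{S})\geq\epsilon$ of~\cite{blais2009testing} together with $I_f(T)=\sym{inf}_f(T)/2$ from Theorem~\ref{thm-inf-prob}; the point of restating it is precisely that the self-contained majority-junta plus variance argument above is shorter than the argument in~\cite{blais2009testing}.
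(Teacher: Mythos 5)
Your proof is correct and follows essentially the same route as the paper: both arguments reduce the distance from $f$ to the majority-value junta on the variables in $S$ to the average of $\min(p_{\bm{\alpha}},1-p_{\bm{\alpha}})$, apply the scalar inequality $\min(p,1-p)\leq 2p(1-p)$, and finish with Corollary~\ref{cor-inf-ac-exp}. The only cosmetic difference is that you construct the majority junta directly, whereas the paper takes the closest $s$-junta on $S$ and deduces that it must have that majority structure.
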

\begin{proof}
	Among all the $s$-juntas on the variables indexed by $S$, let $g$ be the closest $s$-junta to $f$. For $\bm{\alpha}\in \mathbb{F}_2^s$, let 
$f_{\bm{\alpha}}=f_{\mathbf{X}_S\leftarrow \bm{\alpha}}(\mathbf{X}_{\overline{S}})$ and 
	$g_{\bm{\alpha}}=g_{\mathbf{X}_S\leftarrow \bm{\alpha}}(\mathbf{X}_{\overline{S}})$ be functions on $(n-s)$-variables. Since $g$ is a junta on $S$, it is degenerate 
	on all variables indexed
	by $\overline{S}$. So $g_{\bm{\alpha}}$ is a constant function for all $\bm{\alpha}\in\mathbb{F}_2^s$. Since among all the juntas on the variables indexed by $S$, 
	$g$ is the closest $s$-junta to $f$, it follows that for each $\bm{\alpha}\in\mathbb{F}_2^s$, $g_{\bm{\alpha}}$ is either the constant function 0 or the
	constant function 1 according as $\sym{wt}(f_{\bm{\alpha}})\leq 2^{n-s-1}$ (i.e. $\mathbb{E}(f_{\bm{\alpha}})\leq 1/2$) or 
	$\sym{wt}(f_{\bm{\alpha}})> 2^{n-s-1}$ (i.e. $\mathbb{E}(f_{\bm{\alpha}})> 1/2$) respectively. So
	\begin{eqnarray}
		\Pr_{\mathbf{x}\in\mathbb{F}_2^n} [f(\mathbf{x})\neq g(\mathbf{x})]
		& = & \frac{ \sum_{\bm{\alpha}\in \mathbb{F}_2^s} \sym{wt}(f_{\bm{\alpha}} \oplus g_{\bm{\alpha}})}{2^{n}} \nonumber \\
		& = & \frac{1}{2^s}\sum_{\bm{\alpha}\in \mathbb{F}_2^s} \min\left( \mathbb{E}(f_{\bm{\alpha}}), 1- \mathbb{E}(f_{\bm{\alpha}})\right). \label{eqn-junta}
	\end{eqnarray}
Since $f$ is $\epsilon$-far from being an $s$-junta, it follows that $\epsilon\leq \Pr_{\mathbf{x}\in\mathbb{F}_2^n} [f(\mathbf{x})\neq g(\mathbf{x})]$. 
	Using $\sym{Var}(f_{\bm{\alpha}})=\mathbb{E}(f_{\bm{\alpha}})(1-\mathbb{E}(f_{\bm{\alpha}}))$, 
	it is easy to check that $\min\left( \mathbb{E}(f_{\bm{\alpha}}), 1- \mathbb{E}(f_{\bm{\alpha}})\right) \leq 2\sym{Var}(f_{\bm{\alpha}}).$ 
	The result now follows by taking $T=\overline{S}$ in~\eqref{eqn-inf-ac-exp-var} and combining with~\eqref{eqn-junta}.
\end{proof}

\subsection{Cryptographic Properties \label{subsec-crypto}}
An $n$-variable Boolean function $f$ is $\delta$-close to an $s$-junta if there is an $s$-junta $g$ such that 
$\Pr_{\mathbf{x}\in\mathbb{F}_2^n}[f(\mathbf{x})\neq g(\mathbf{x})]\leq \delta$. From the point of view of cryptographic design, it is undesirable for $f$ to be
$\delta$-close to an $s$-junta for $\delta$ close to $0$ and $s$ smaller than $n$. Since otherwise, $g$ is a good approximation of $f$ and a cryptanalyst may replace
$f$ by $g$ which may help in attacking a cipher which uses $f$ as a building block. For example, in linear cryptanalysis the goal is to obtain $g$ to be a linear 
function on a few variables such that it is a good approximation of $f$. To defend against such attacks, one usually requires $f$ to not have any good linear
approximation on a small number of variables. In particular, an $m$-resilient function cannot be approximated with probability different from 1/2 by any linear function 
on $m$ or smaller number of variables. 
A characterisation of resilient functions in terms of influence is given by Theorem~\ref{thm-max-t-inf} which shows that an $n$-variable function is
$m$-resilient if and only if $(n-m)$-$\sym{inf}(f)$ takes its maximum value of $1$. 

The next result provides a characterisation of bent functions in terms of influence.
\begin{theorem}\label{thm-bent-inf}
	Let $f$ be an $n$-variable Boolean function. Then $f$ is bent if and only if for any non-empty $T\subseteq [n]$, $\sym{inf}_f(T)=1-2^{\#T}$.
\end{theorem}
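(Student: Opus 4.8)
The plan is to prove both directions through the Walsh-transform characterisation of Theorem~\ref{thm-inf-fourier}, because for a bent function every squared Walsh coefficient equals $2^{-n}$ and the whole statement collapses to counting. Throughout I read the right-hand side as $1-2^{-\#T}$, which is the value consistent with $0\le \sym{inf}_f(T)\le 1$ (Theorem~\ref{thm-inf-basic}).

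For the forward direction, assume $f$ is bent, so $\left(W_{f}(\mathbf{u})\right)^2=2^{-n}$ for every $\mathbf{u}\in\mathbb{F}_2^n$. Theorem~\ref{thm-inf-fourier} then gives
\[
	\sym{inf}_f(T)=\sum_{\{\mathbf{u}:\sym{supp}(\mathbf{u})\cap T\neq\emptyset\}}\left(W_{f}(\mathbf{u})\right)^2=2^{-n}\cdot\#\{\mathbf{u}:\sym{supp}(\mathbf{u})\cap T\neq\emptyset\}.
\]
Since $\sym{supp}(\mathbf{u})\cap T=\emptyset$ is the same as $\mathbf{u}\leq\chi_{\overline{T}}$, there are $2^{n-t}$ excluded vectors, the count is $2^{n}-2^{n-t}$, and $\sym{inf}_f(T)=1-2^{-t}$. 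This settles one implication with only the single counting step.

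For the converse I would pass to the complementary sum. From the proof of Theorem~\ref{thm-inf-fourier} together with Parseval's identity~\eqref{eqn-parseval}, $1-\sym{inf}_f(T)=\sum_{\mathbf{u}\leq\chi_{\overline{T}}}\left(W_{f}(\mathbf{u})\right)^2$. Setting $S=\overline{T}$ and $h(S)=\sum_{\mathbf{u}\leq\chi_{S}}\left(W_{f}(\mathbf{u})\right)^2$, the hypothesis ``$\sym{inf}_f(T)=1-2^{-\#T}$ for every non-empty $T$'' becomes ``$h(S)=2^{\#S-n}$ for every proper $S\subsetneq[n]$''; moreover $h([n])=1=2^{n-n}$ by Parseval, so in fact $h(S)=2^{\#S-n}$ for all $S\subseteq[n]$. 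I would then extract the individual coefficients by induction on $\#S$ (equivalently, Möbius inversion over the subset lattice of $[n]$). The base case $S=\emptyset$ gives $\left(W_{f}(\mathbf{0})\right)^2=h(\emptyset)=2^{-n}$. In the inductive step, the vectors $\mathbf{u}\leq\chi_S$ other than $\chi_S$ are exactly those whose support is a proper subset of $S$, of which there are $2^{\#S}-1$, each contributing $2^{-n}$ by the induction hypothesis; subtracting these from $h(S)=2^{\#S-n}$ leaves
\[
	\left(W_{f}(\chi_S)\right)^2=2^{\#S-n}-\left(2^{\#S}-1\right)2^{-n}=2^{-n}.
\]
Hence $\left(W_{f}(\mathbf{u})\right)^2=2^{-n}$ for every $\mathbf{u}$, i.e. $f$ is bent.

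I do not anticipate a substantial obstacle: the forward direction is pure counting and the converse is a routine lattice inversion once the hypothesis is rephrased via the cumulative sums $h(S)$. The only subtlety to handle carefully is that the hypothesis is supplied only for non-empty $T$ (proper $S$), so the top value $h([n])=1$ must be obtained separately from Parseval before the induction can reach $S=[n]$; this is precisely what forces the top coefficient $\left(W_{f}(\chi_{[n]})\right)^2$ to also equal $2^{-n}$.
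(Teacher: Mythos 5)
Your proof is correct, including your reading of the exponent as $-\#T$ (the statement as printed has a sign typo which the paper's own proof in fact carries along uncorrected). However, your route is genuinely different from the paper's. The paper works in the auto-correlation domain: for the forward direction it notes that bentness forces $C_f(\bm{\alpha})=0$ for all $\bm{\alpha}\neq\mathbf{0}_n$ (via~\eqref{eqn-ac-wf}) and reads the value of $\sym{inf}_f(T)$ directly off the definition~\eqref{eqn-inf-set}; for the converse it rewrites the hypothesis as the linear system $\sum_{\mathbf{0}_n\neq\bm{\alpha}\leq\chi_T}C_f(\bm{\alpha})=0$, one equation per non-empty $T$, observes that the coefficient matrix is lower unitriangular (under the natural ordering of binary indices) hence invertible, concludes $C_f(\bm{\alpha})=0$ for all nonzero $\bm{\alpha}$, and only then invokes Wiener--Khintchine~\eqref{eqn-f-ac} to obtain $\left(W_f(\bm{\beta})\right)^2=2^{-n}$. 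You instead work entirely in the Walsh domain via Theorem~\ref{thm-inf-fourier}: the forward direction is a count of vectors whose support meets $T$, and the converse is an induction (M\"{o}bius inversion) over the subset lattice extracting each $\left(W_f(\chi_S)\right)^2=2^{-n}$ from the cumulative sums $h(S)$. The two inversion arguments are structurally parallel --- both invert a unitriangular system indexed by subsets --- but the unknowns differ: autocorrelation values in the paper, squared Walsh coefficients in yours. Your version needs Parseval to supply the top value $h([n])=1$, a subtlety you correctly flag, which the paper avoids because its system involves only the nonzero autocorrelations; in exchange, your converse delivers the Walsh spectrum in one pass with no final Wiener--Khintchine step, and your forward direction is pure counting rather than a computation of the autocorrelation spectrum.
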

\begin{proof}
	First suppose that $f$ is bent. So $W_f(\bm{\alpha})=\pm 2^{-n/2}$ for all $\bm{\alpha}\in \mathbb{F}_2^n$. From~\eqref{eqn-ac-wf}, it follows that
	$C_f(\mathbf{x})=0$ for all $\mathbf{0}_n\neq\mathbf{x}\in\mathbb{F}_2^n$. Consequently, from~\eqref{eqn-inf-set} we have that for any non-empty 
	$T\subseteq [n]$, $\sym{inf}_f(T)=1-2^{\#T}$.

	Next we prove the converse.
	From~\eqref{eqn-inf-set}, it follows that $\sym{inf}_f(T)=1-2^{\#T}$ if and only if 
	\begin{eqnarray}\label{eqn-bent-ac}	
		\sum_{\mathbf{0}_n\neq \bm{\alpha}\leq \chi_T} C_f(\bm{\alpha}) & = & 0.
	\end{eqnarray}
	For $0\leq i\leq 2^n-1$, let $\sym{bin}_n(i)$ denote the $n$-bit binary representation of $i$. 
	Let $\mathbf{M}$ be the $(2^n-1)\times (2^n-1)$ matrix whose rows and columns are indexed by the integers in $[2^n-1]$ such that the $(i,j)$-th
	entry of $\mathbf{M}$ is 1 if $\sym{bin}_n(j)\leq \sym{bin}_n(i)$ and otherwise the entry is 0. It is easy to verify that $\mathbf{M}$ is a lower triangular
	matrix whose diagonal elements are all 1. In particular, $\mathbf{M}$ is invertible. 

	Let $\mathbf{C}=[C_f(\sym{bin}_n(i))]_{i\in [2^n-1]}$ be the vector of auto-correlations of $f$ at all the non-zero points in $\mathbb{F}_2^n$. The set of 
	relations of the form~\eqref{eqn-bent-ac} for all non-empty $T\subseteq [n]$ can be expressed as $\mathbf{M}\mathbf{C}^{\top}=\mathbf{0}^{\top}$. 
	Since $\mathbf{M}$ is invertible, it follows that $\mathbf{C}=\mathbf{0}$, i.e. $C_f(\bm{\alpha})=0$ for all non-zero $\bm{\alpha}\in \mathbb{F}_2^n$. 
	From~\eqref{eqn-f-ac}, it now follows that $W_f(\bm{\beta})=\pm 2^{-n/2}$ for all $\bm{\beta}\in \mathbb{F}_2^n$ which shows that $f$ is bent.
\end{proof}

For functions satisfying propagation characteristics, somewhat less can be said. 
From~\eqref{eqn-inf-set}, it follows that if $f$ satisfies PC($k$) then for any subset $\emptyset\neq T\subseteq [n]$ with $\#T=t\leq k$, $\sym{inf}_f(T)=1-2^{-t}$ and so
$t$-$\sym{inf}(f)=1-2^{-t}$. 


\subsection{The Fourier Entropy/Influence Conjecture\label{subsec-FEI}}
The Fourier entropy $H(f)$ of $f$ is defined to be the entropy of the probability distribution $\{W_{f}^2(\bm{\alpha})\}$ and is equal to
\begin{eqnarray}\label{eqn-H}
	H(f) & = & -\sum_{\bm{\alpha}\in \mathbb{F}_2^n}W_{f}^2(\bm{\alpha}) \log W_{f}^2(\bm{\alpha}),
\end{eqnarray}
where $\log$ denotes $\log_2$ and the expressions $0\log 0$ and $0\log \frac{1}{0}$ are to be interpreted as $0$. For $t\in [n]$, let
\begin{eqnarray}\label{eqn-rho}
	\rho_t(f) & = & \frac{H(f)/n}{t\mbox{-}\sym{inf}(f)}.
\end{eqnarray}
The Fourier entropy/influence conjecture~\cite{friedgut1996every} states that there is a universal constant $C$, such that for all Boolean functions $f$, $\rho_1(f)\leq C$. 
A general form of this conjecture is that there is a universal constant $C_t$, such that for all Boolean functions $f$ and $t\in [1,n]$, $\rho_t(f)\leq C_t$. 
Since $t$-$\sym{inf}(f)$ increases monotonically with $t$, it follows that $\rho_t(f)$ decreases monotonically with $t$. 
So if the FEI conjecture holds, then the conjecture on $\rho_t(f)$ also holds for $t\geq 1$. The converse, i.e if the conjecture holds for some $\rho_t$ with $t>1$ then it
also holds for $\rho_1$, need not be true. 

\begin{remark}\label{rem-FEI-variant}
	A weaker variant of the FEI conjecture replaces $H(f)$ by the min-entropy of the distribution $\widehat{p}_f(\omega)$. In a similar
	vein, one may consider the conjecture on $\rho_t(f)$ to be a weaker variant of the FEI conjecture. 
\end{remark}

\section{Pseudo-Influence \label{sec-PI} }
In this section, we define a quantity based on the auto-correlation function which we call the pseduo-influence of a Boolean function. The main reason for 
considering this notion is that it
turns out to be the same as the notion of influence $J_f(T)$ introduced in~\cite{tal2017tight}. We make a thorough study of the basic properties of pseudo-influence.
A consequence of this study is that pseudo-influence
does not satisfy some of the basic desiderata that a notion of influence may be expected to satisfy, which is why we call it pseudo-influence. This shows that even though 
the quantity was termed `influence' in~\cite{tal2017tight}, it is not a satisfactory notion of influence.

Suppose $f(\mathbf{X})$ is an $n$-variable Boolean function where $\mathbf{X}=(X_1,\ldots,X_n)$ and $\emptyset \neq T=\{i_1,\ldots,i_t\}\subseteq [n]$. 
We define pseudo-influence $\sym{PI}_f(T)$ of the set of variables $\{X_{i_1},\ldots,X_{i_t}\}$ indexed by $T$ on $f$ in the following manner.

\begin{eqnarray}\label{eqn-inf-set-PI}
	\sym{PI}_f(T) & = & \frac{1}{2^{\#T}}\left(\sum_{\bm{\alpha}\leq \chi_T} (-1)^{\sym{wt}(\alpha)}C_f(\bm{\alpha})\right).
\end{eqnarray}
For a singleton set $T=\{i\}$, $\sym{PI}_f(T)=\sym{inf}_f(T)=\sym{inf}_f(i)$. 

Let $f$ be an $n$-variable function and $t$ be an integer with $1\leq t\leq n$. Then the $t$-pseudo-influence of $f$ is the total pseudo-influence (scaled by ${n\choose t}$) 
obtained by summing the pseudo-influence of every set of $t$ variables on the function $f$, i.e., 
\begin{eqnarray}\label{eqn-t-PI}
	t\mbox{-}\sym{PI}(f) & = & \frac{\sum_{\{T\subseteq [n]: \#T=t\}} \sym{PI}_f(T)}{{n\choose t}}.
\end{eqnarray}

The characterisation of pseudo-influence in terms of the Walsh transform is given by the following result.
\begin{theorem}\label{thm-inf-fourier-PI}
	Let $f$ be an $n$-variable Boolean function and $\emptyset \neq T\subseteq [n]$. Then
	\begin{eqnarray}\label{eqn-inf-fourier-PI}
		\sym{PI}_f(T) & = & \sum_{\mathbf{u} \geq \chi_T} \left(W_{f}(\mathbf{u})\right)^2.
	\end{eqnarray}
	Consequently, for an integer $t$ with $1\leq t\leq n$,
	\begin{eqnarray}
		t\mbox{-}\sym{PI}(f) & = & \frac{1}{{n\choose t}}\sum_{k=t}^n {k\choose t} \widehat{p}_f(k) \label{eqn-PI-fourier-t}
	\end{eqnarray}
\end{theorem}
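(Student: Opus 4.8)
The plan is to prove the first identity by direct substitution of the Fourier expansion of the auto-correlation function into the definition of $\sym{PI}_f(T)$, and then to deduce the formula for $t\mbox{-}\sym{PI}(f)$ by a double-counting argument, exactly paralleling the route from Theorem~\ref{thm-inf-fourier} to Theorem~\ref{thm-inf-total}. Writing $t=\#T$ and starting from~\eqref{eqn-inf-set-PI}, I would replace each $C_f(\bm{\alpha})$ by its expression from~\eqref{eqn-ac-wf}, namely $C_f(\bm{\alpha})=\sum_{\mathbf{w}\in\mathbb{F}_2^n}(W_f(\mathbf{w}))^2(-1)^{\langle\mathbf{w},\bm{\alpha}\rangle}$, and interchange the two finite sums to obtain
\begin{eqnarray*}
	\sym{PI}_f(T) & = & \frac{1}{2^t}\sum_{\mathbf{w}\in\mathbb{F}_2^n}(W_f(\mathbf{w}))^2\sum_{\bm{\alpha}\leq\chi_T}(-1)^{\sym{wt}(\bm{\alpha})+\langle\mathbf{w},\bm{\alpha}\rangle}.
\end{eqnarray*}

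The key step is to evaluate the inner character sum $S(\mathbf{w})=\sum_{\bm{\alpha}\leq\chi_T}(-1)^{\sym{wt}(\bm{\alpha})+\langle\mathbf{w},\bm{\alpha}\rangle}$. Here I would use that $(-1)^{\sym{wt}(\bm{\alpha})}=(-1)^{\langle\mathbf{1}_n,\bm{\alpha}\rangle}$, so that the exponent becomes $\langle\mathbf{1}_n\oplus\mathbf{w},\bm{\alpha}\rangle$. Since $\bm{\alpha}$ ranges over all vectors supported in $T$, the sum factorises coordinatewise over $i\in T$, each factor being $1+(-1)^{(\mathbf{1}_n\oplus\mathbf{w})_i}$, which equals $2$ when $w_i=1$ and $0$ when $w_i=0$. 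Hence $S(\mathbf{w})=2^t$ if $w_i=1$ for every $i\in T$, i.e.\ if $\mathbf{w}\geq\chi_T$, and $S(\mathbf{w})=0$ otherwise. Substituting back and cancelling $2^t$ yields $\sym{PI}_f(T)=\sum_{\mathbf{u}\geq\chi_T}(W_f(\mathbf{u}))^2$, which is~\eqref{eqn-inf-fourier-PI}.

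For the consequence, I would substitute this expression into the definition~\eqref{eqn-t-PI} of $t\mbox{-}\sym{PI}(f)$ and interchange the order of summation so that the outer sum runs over $\mathbf{u}\in\mathbb{F}_2^n$. For a fixed $\mathbf{u}$ with $\sym{wt}(\mathbf{u})=k$, the number of $t$-element subsets $T$ with $\mathbf{u}\geq\chi_T$ (equivalently $T\subseteq\sym{supp}(\mathbf{u})$) is exactly $\binom{k}{t}$, which vanishes for $k<t$. Grouping the $\mathbf{u}$ by weight and recalling the definition~\eqref{eqn-pf} of $\widehat{p}_f(k)$ then gives $t\mbox{-}\sym{PI}(f)=\frac{1}{\binom{n}{t}}\sum_{k=t}^{n}\binom{k}{t}\widehat{p}_f(k)$, as claimed.

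The only delicate point is the character-sum evaluation: it is the extra factor $(-1)^{\sym{wt}(\bm{\alpha})}$, absent in the analogous computation for $\sym{inf}_f(T)$, that turns the resulting condition into $\mathbf{u}\geq\chi_T$ (a superset condition) rather than the ``$\sym{supp}(\mathbf{u})\cap T\neq\emptyset$'' condition of Theorem~\ref{thm-inf-fourier}. Keeping the parity bookkeeping straight there is where care is needed, while the remaining steps are routine sum interchanges and a binomial count.
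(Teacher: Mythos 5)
Your proof is correct, and it differs from the paper's in how the key identity \eqref{eqn-inf-fourier-PI} is established. The paper applies the Poisson summation formula \eqref{poisson-first-order} with $\psi=C_f$, $\mathbf{a}=\mathbf{1}_n$, $\mathbf{b}=\mathbf{0}_n$ and $E=\{\bm{\beta}:\bm{\beta}\leq\chi_{\overline{T}}\}$: the coset $\mathbf{1}_n+E$ is exactly $\{\bm{\beta}:\bm{\beta}\geq\chi_T\}$, the factor $(-1)^{\langle\mathbf{1}_n,\bm{\alpha}\rangle}=(-1)^{\sym{wt}(\bm{\alpha})}$ produces the alternating signs in the definition \eqref{eqn-inf-set-PI}, and Wiener--Khintchine \eqref{eqn-f-ac} then converts $\widehat{C}_f$ into squared Walsh coefficients. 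You instead expand $C_f$ via \eqref{eqn-ac-wf}, interchange sums, and evaluate the character sum $S(\mathbf{w})$ by coordinatewise factorisation; this is in effect a self-contained re-derivation of the special case of Poisson summation that the paper cites as a black box. Both arguments rest on the same orthogonality phenomenon, so the difference is one of packaging: your version is more elementary and makes transparent why the extra sign $(-1)^{\sym{wt}(\bm{\alpha})}$ converts the condition ``$\sym{supp}(\mathbf{u})\cap T\neq\emptyset$'' of Theorem~\ref{thm-inf-fourier} into the superset condition $\mathbf{u}\geq\chi_T$, whereas the paper's version is shorter and stylistically parallel to its proof of Theorem~\ref{thm-inf-fourier}, which likewise leans on \eqref{eqn-f-ac-subspace}. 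Your deduction of \eqref{eqn-PI-fourier-t}, counting the $\binom{k}{t}$ subsets $T$ of size $t$ contained in $\sym{supp}(\mathbf{u})$ and grouping by weight, is exactly the paper's argument.
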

\begin{proof}
    Let $\#T=t$. Let $E=\{\bm{\beta}\in\mathbb{F}_2^n: \bm{\beta}\leq \chi_{\overline{T}}\}$. Then $\#E=2^{n-t}$ and 
	$E^{\perp}=\{\bm{\alpha}\in\mathbb{F}_2^n: \bm{\alpha}\leq \chi_T\}$. 
	From~\eqref{eqn-inf-set-PI} and putting $\mathbf{a}=\mathbf{1}_n$, $\mathbf{b}=\mathbf{0}_n$ and $\psi=C_f$ 
	in~\eqref{poisson-first-order} we obtain the following:
    \begin{eqnarray*}
	    \sym{PI}_f(T)
	    & = & \frac{1}{2^t}\sum_{\bm{\alpha}\leq \chi_{T}}(-1)^{\sym{wt}(\bm{\alpha)}}C_f(\bm{\alpha})
	    = \frac{1}{2^t}\sum_{\bm{\alpha}\leq \chi_{T}}(-1)^{\langle\mathbf{1}_n,\bm{\alpha\rangle}}C_f(\bm{\alpha})
	    =\sum_{\bm{\beta}\in\mathbf{1}_n+E}\widehat{C_f}(\bm{\beta})
	    =\sum_{\bm{\beta}\geq \chi_{T}}\widehat{C_f}(\bm{\beta}).
	\end{eqnarray*}
	The result now follows from~\eqref{eqn-f-ac}. 

	The expression for $t\mbox{-}\sym{PI}(f)$ can be seen as follows.
	\begin{eqnarray}
		t\mbox{-}\sym{PI}(f) 
		& = & \frac{1}{{n\choose t}}\sum_{k=t}^n\sum_{\{\mathbf{u}:\sym{wt}(\mathbf{u})=k\}} {k\choose t} \left( W_{f}(\mathbf{u}) \right)^2 \nonumber \\
		& = & \frac{1}{{n\choose t}}\sum_{k=t}^n {k\choose t} \sum_{\{\mathbf{u}:\sym{wt}(\mathbf{u})=k\}} \left( W_{f}(\mathbf{u}) \right)^2 \nonumber \\
		& = & \frac{1}{{n\choose t}}\sum_{k=t}^n {k\choose t} \widehat{p}_f(k) \nonumber \\
		& = & \frac{1}{{n\choose t}}\sum_{k=t}^n {k\choose t} \widehat{p}_f(k). 
	\end{eqnarray}
\end{proof}

The following result states the basic properties of the pseudo-influence.

\begin{theorem} \label{thm-inf-basic-PI}
        Let $f$ be an $n$-variable Boolean function and $\emptyset \neq T \subseteq S \subseteq [n]$. Then
	\begin{enumerate}
		\item $0\leq \sym{PI}_f(T) \leq 1$. 
		\item If the function $f$ is degenerate on the variables indexed by $T$, then $\sym{PI}_f(T)=0$.
		\item $\sym{PI}_f(S) \leq \sym{PI}_f(T)$. 
	\end{enumerate}
\end{theorem}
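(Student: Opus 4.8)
The plan is to read off all three points directly from the Walsh transform characterisation
$\sym{PI}_f(T) = \sum_{\mathbf{u}\geq \chi_T}(W_{f}(\mathbf{u}))^2$ established in Theorem~\ref{thm-inf-fourier-PI}, which rewrites the pseudo-influence as a sum of squared Walsh coefficients over the index set $\{\mathbf{u}:\mathbf{u}\geq\chi_T\}$. Expressing $\sym{PI}_f(T)$ this way reduces each desired property to an elementary manipulation of such a sum, exactly as the analogous facts for $\sym{inf}_f(T)$ in Theorem~\ref{thm-inf-basic} followed from Theorem~\ref{thm-inf-fourier}; the only structural difference is that here the summation is over $\mathbf{u}\geq\chi_T$ rather than over $\mathbf{u}$ with $\sym{supp}(\mathbf{u})\cap T\neq\emptyset$.

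For the first point, I would note that every summand $(W_{f}(\mathbf{u}))^2$ is non-negative, giving $\sym{PI}_f(T)\geq 0$; and since the sum ranges over a subset of $\mathbb{F}_2^n$, Parseval's theorem~\eqref{eqn-parseval}, which asserts that the full sum equals $1$, yields $\sym{PI}_f(T)\leq 1$. For the third point, the key observation is a set containment: since $T\subseteq S$, any $\mathbf{u}$ with $\mathbf{u}\geq\chi_S$ has $\sym{supp}(\mathbf{u})\supseteq S\supseteq T$, hence $\mathbf{u}\geq\chi_T$. Thus the index set defining $\sym{PI}_f(S)$ is contained in the one defining $\sym{PI}_f(T)$, and non-negativity of the terms forces $\sym{PI}_f(S)\leq\sym{PI}_f(T)$.

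The second point is the only one requiring a small auxiliary fact, so I expect it to be the main (though still modest) obstacle. The fact I would invoke is that if $f$ is degenerate on the variables indexed by $T$, then $W_{f}(\mathbf{u})=0$ whenever $\sym{supp}(\mathbf{u})\cap T\neq\emptyset$. To establish this for a single degenerate coordinate $i\in T$, I would pair each $\mathbf{x}$ with $\mathbf{x}\oplus\mathbf{e}_i$ in the defining sum~\eqref{eqn-wt-fou} for $W_{f}(\mathbf{u})$: degeneracy gives $f(\mathbf{x})=f(\mathbf{x}\oplus\mathbf{e}_i)$, while $u_i=1$ flips the sign of $(-1)^{\langle\mathbf{x},\mathbf{u}\rangle}$, so the two contributions cancel and $W_{f}(\mathbf{u})=0$.

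Finally, to close the argument, I would observe that because $T\neq\emptyset$, every $\mathbf{u}$ appearing in $\sym{PI}_f(T)=\sum_{\mathbf{u}\geq\chi_T}(W_{f}(\mathbf{u}))^2$ satisfies $\sym{supp}(\mathbf{u})\supseteq T$, so in particular $\sym{supp}(\mathbf{u})\cap T=T\neq\emptyset$. By the auxiliary fact above, every such summand vanishes, giving $\sym{PI}_f(T)=0$. I would emphasise that only this one implication is claimed here; unlike Point~2 of Theorem~\ref{thm-inf-basic}, the converse is deliberately absent, and this asymmetry is precisely the deficiency that later motivates calling the quantity pseudo-influence rather than influence.
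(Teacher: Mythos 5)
Your proposal is correct, and for Points~1 and~3 it coincides with the paper's proof: both read the bounds and the reverse monotonicity off the characterisation $\sym{PI}_f(T)=\sum_{\mathbf{u}\geq\chi_T}\left(W_f(\mathbf{u})\right)^2$ together with Parseval's theorem and the containment of index sets $\{\mathbf{u}:\mathbf{u}\geq\chi_S\}\subseteq\{\mathbf{u}:\mathbf{u}\geq\chi_T\}$. For Point~2, however, you take a genuinely different route. The paper stays with the auto-correlation definition~\eqref{eqn-inf-set-PI}: after a permutation reduction to $T=\{1,\ldots,t\}$, it shows that degeneracy on the variables indexed by $T$ is equivalent to $C_f(\bm{\gamma})=1$ for all $\bm{\gamma}\leq\chi_T$, whence $\sym{PI}_f(T)=\frac{1}{2^t}\sum_{\bm{\gamma}\leq\chi_T}(-1)^{\sym{wt}(\bm{\gamma})}=\frac{1}{2^t}(1-1)^t=0$ by the alternating binomial identity. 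You instead work on the Walsh side, proving the auxiliary fact that degeneracy on $T$ forces $W_f(\mathbf{u})=0$ whenever $\sym{supp}(\mathbf{u})\cap T\neq\emptyset$, via the pairing $\mathbf{x}\leftrightarrow\mathbf{x}\oplus\mathbf{e}_i$ for some $i\in\sym{supp}(\mathbf{u})\cap T$; this argument is sound (degeneracy on the set $T$ does imply $f(\mathbf{x})=f(\mathbf{x}\oplus\mathbf{e}_i)$ for each $i\in T$, by comparing assignments to $\mathbf{X}_T$ differing in one coordinate). Your version has the merit of treating all three points uniformly through Theorem~\ref{thm-inf-fourier-PI}, and your auxiliary fact is strictly stronger than what is needed: it kills every coefficient whose support meets $T$, so it simultaneously reproves Point~2 of Theorem~\ref{thm-inf-basic} for $\sym{inf}_f(T)$. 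The paper's version buys something different: it is self-contained from the definition of $\sym{PI}_f$ without invoking the Fourier characterisation, and the equivalence it establishes (degeneracy on $T$ if and only if $C_f(\bm{\gamma})=1$ for all $\bm{\gamma}\leq\chi_T$) is a reusable statement in its own right, whereas your argument only needs, and only proves, the forward implication. Your closing observation that the converse of Point~2 is deliberately omitted matches the paper, which exhibits resilient-function counterexamples immediately after the theorem.
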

\begin{proof}
	The first point follows from Theorem~\ref{thm-inf-fourier-PI} and Parseval's theorem. The third point also follows from Theorem~\ref{thm-inf-fourier}.

	Consider the second point. Suppose $\pi$ is any permutation of $[n]$ and define $g(\mathbf{X})$ to be the function 
	$f(X_{\pi(1)},\ldots,X_{\pi(n)})$. Then $f$ is degenerate on the variables indexed by a set $U=\{i_1,\ldots,i_t\}$ if and only if $g$ is degenerate on the variables
	indexed by the set $V=\{\pi(i_1),\ldots,\pi(i_t)\}$. Also, $\sym{inf}_f(U)=\sym{inf}_g(V)$. 
	In view of this, we consider the set $T$ to be $\{1,\ldots,t\}$.

	For $\bm{\alpha}\in\mathbb{F}_2^t$ and $\mathbf{Y}=(X_{t+1},\ldots,X_n)$, let $f_{\bm{\alpha}}(\mathbf{Y})=f(\bm{\alpha},\mathbf{Y})$. The function $f$
	is degenerate on the variables indexed by $T$ if and only if $f_{\bm{\alpha}}(\mathbf{Y})=f_{\bm{\beta}}(\mathbf{Y})$ for any 
	$\bm{\alpha},\bm{\beta}\in \mathbb{F}_2^t$. We show that the latter condition is equivalent to $f(\mathbf{X})=f(\mathbf{X}\oplus \bm{\gamma})$ for 
	any $\bm{\gamma}\leq \chi_{T}$. 
	Note that by the choice of $T$, we have that for $\bm{\gamma}\leq \chi_T$, $\bm{\gamma}=(\bm{\delta},\mathbf{0})$ for some $\bm{\delta}\in \mathbb{F}_2^t$. 
	So it is sufficient to show that $f(\bm{\alpha},\mathbf{Y})=f((\bm{\alpha},\mathbf{Y})\oplus (\bm{\delta},\mathbf{0}))$ for all $\bm{\alpha}\in \mathbb{F}_2^t$.
	The latter condition is equivalent to $f_{\bm{\alpha}}(\mathbf{Y}) = f_{\bm{\alpha}\oplus\bm{\delta}}(\mathbf{Y})=f_{\bm{\beta}}(\mathbf{Y})$ where 
	$\bm{\beta}=\bm{\alpha}\oplus\bm{\delta}$. This completes the proof that $f$ is degenerate on the variables indexed by $T$ if and only if
	$f(\mathbf{X})=f(\mathbf{X}\oplus \bm{\gamma})$ for all $\bm{\gamma}\leq \chi_{T}$.

	The condition $f(\mathbf{X})=f(\mathbf{X}\oplus \bm{\gamma})$ for all $\bm{\gamma}\leq \chi_{T}$ is equivalent to $C_f(\bm{\gamma})=1$ for all
	$\bm{\gamma}\leq \chi_T$. So $f$ is degenerate on the set of variables indexed by $T$ if and only if $C_f(\bm{\gamma})=1$ for all
        $\bm{\gamma}\leq \chi_T$. Using this in the definition of pseudo-influence given by~\eqref{eqn-inf-set-PI}, we obtain the
	the second point. 
\end{proof}

Theorem~\ref{thm-inf-basic-PI} states that if $f$ is degenerate on the variables indexed by $T$, then $\sym{PI}_f(T)=0$. The converse, however, is not true. 
Suppose $f$ is an $n$-variable function such that $W_{f}(\mathbf{1}_n)=0$ and let $T=[n]$. Then from~\eqref{eqn-inf-fourier-PI}, $\sym{PI}_f(T)=0$.
This example can be generalised. Suppose $g$ is an $n$-variable, $m$-resilient function and let $f(\mathbf{X})=\langle \mathbf{1},\mathbf{X}\rangle \oplus g(\mathbf{X})$.
Using~\eqref{eqn-wt-fou}, we have $W_{f}(\bm{\alpha})=W_{g}(\mathbf{1}\oplus\bm{\alpha})$ for all $\bm{\alpha}\in \mathbb{F}_2^n$. Since, $g$ is $m$-resilient,
$W_{g}(\bm{\omega})=0$ for all $\bm{\omega}$ with $\sym{wt}(\bm{\omega}) \leq m$. So 
$W_{f}(\bm{\alpha})=0$ for all $\bm{\alpha}$ with $\sym{wt}(\bm{\alpha}) \geq n-m$. Consequently, for any $\emptyset \neq T\subseteq [n]$, with
$\#T\geq n-m$, it follows that $\sym{PI}_f(T)=0$. There are known examples of non-degenerate resilient functions. See for example~\cite{DBLP:journals/tit/SarkarM04}.

\begin{remark}\label{rem-PI}
By the above discussion, $\sym{PI}_f(T)$ can be zero even if $f$ is non-degenerate on the variables indexed by $T$. 
Further, the third point of Theorem~\ref{thm-inf-basic-PI} shows that $\sym{PI}_f(T)$ is non-increasing with $T$. As a consequence, sub-additivity does not hold for $\sym{PI}_f(T)$. 
	So $\sym{PI}_f(T)$ violates some of the basic desiderata that one may expect a notion of influence to fulfill. 
\end{remark}

For $\mathbf{u}\in \mathbb{F}_2^n$ and $\emptyset \neq T\subseteq [n]$, $\mathbf{u}\geq \chi_T$ is equivalent to $\sym{supp}(\mathbf{u})\supseteq T$ which in particular 
implies that $\sym{supp}(u)\cap T\neq \emptyset$. So 
from~\eqref{eqn-inf-fourier} and~\eqref{eqn-inf-fourier-PI}, we have the following result which states that influence is always at least as large as the
pseudo-influence.
\begin{proposition}\label{prop-inf-gt-PI}
	Let $f$ be an $n$-variable Boolean function and $\emptyset \neq T\subseteq [n]$. Then $\sym{inf}_f(T) \geq \sym{PI}_f(T)$.
	Consequently, $t\mbox{-}\sym{inf}(f) \geq t\mbox{-}\sym{PI}(f)$ for $1\leq t\leq n$. 
\end{proposition}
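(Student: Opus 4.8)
The plan is to read off both quantities from their Walsh-transform characterisations and then compare the index sets over which the (non-negative) squared Walsh coefficients are summed. By Theorem~\ref{thm-inf-fourier}, $\sym{inf}_f(T)$ equals the sum of $(W_f(\mathbf{u}))^2$ over all $\mathbf{u}$ with $\sym{supp}(\mathbf{u})\cap T\neq\emptyset$, whereas by Theorem~\ref{thm-inf-fourier-PI}, $\sym{PI}_f(T)$ equals the same sum restricted to those $\mathbf{u}$ with $\mathbf{u}\geq\chi_T$, i.e.\ with $\sym{supp}(\mathbf{u})\supseteq T$. The whole argument reduces to relating these two index sets.

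First I would record the set inclusion. Since $T\neq\emptyset$, any $\mathbf{u}$ with $\sym{supp}(\mathbf{u})\supseteq T$ automatically satisfies $\sym{supp}(\mathbf{u})\cap T=T\neq\emptyset$. Hence $\{\mathbf{u}:\mathbf{u}\geq\chi_T\}\subseteq\{\mathbf{u}:\sym{supp}(\mathbf{u})\cap T\neq\emptyset\}$. This is the only place the non-emptiness of $T$ enters, and it is the single step that has to be checked rather than computed.

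Because every summand $(W_f(\mathbf{u}))^2$ is non-negative, enlarging the index set of the sum can only increase its value; therefore $\sym{inf}_f(T)\geq\sym{PI}_f(T)$, which is the first claim. For the consequence I would simply average the pointwise inequality: summing $\sym{inf}_f(T)\geq\sym{PI}_f(T)$ over all $T\subseteq[n]$ with $\#T=t$ and dividing by $\binom{n}{t}$ yields $t\mbox{-}\sym{inf}(f)\geq t\mbox{-}\sym{PI}(f)$ directly from the definitions in~\eqref{eqn-t-inf} and~\eqref{eqn-t-PI}.

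As an alternative route for the $t$-version one could instead compare the Fourier-weight formulas~\eqref{eqn-t-inf-exp} and~\eqref{eqn-PI-fourier-t} coefficientwise, using the combinatorial fact that the number $N_{n,t,k}$ of $t$-subsets meeting a fixed $k$-set is at least the number $\binom{k}{t}$ of $t$-subsets contained in it (every subset contained in the $k$-set certainly meets it). This is equally valid but longer, so I would present the averaging argument. There is essentially no hard step here: the entire content is the set-inclusion observation, which the paragraph immediately preceding the statement already isolates, so the proof is a one-line consequence of the two Walsh characterisations together with the non-negativity of squared Walsh coefficients.
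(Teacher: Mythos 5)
Your proposal is correct and matches the paper's own argument exactly: the paper also deduces the inequality from Theorems~\ref{thm-inf-fourier} and~\ref{thm-inf-fourier-PI} by observing that $\mathbf{u}\geq\chi_T$ implies $\sym{supp}(\mathbf{u})\cap T\neq\emptyset$, so the squared Walsh coefficients summed for $\sym{PI}_f(T)$ form a subset of those summed for $\sym{inf}_f(T)$, and the $t$-version follows by averaging over all $T$ of size $t$.
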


\begin{theorem}\label{thm-max-t-PI}
	Let $f(\mathbf{X})$ be an $n$-variable Boolean function where $\mathbf{X}=(X_1,\ldots,X_n)$ and $t$ be an integer with $1\leq t\leq n$. 
	\begin{enumerate}
		\item $t\mbox{-}\sym{PI}(f)$ takes its maximum value of $1$ if and only if $f$ is of the form $f(\mathbf{X}) = \langle \mathbf{1},\mathbf{X}\rangle$.
		\item $t\mbox{-}\sym{PI}(f)$ takes its minimum value of $0$ if and only if $f$ is of the form 
			$f(\mathbf{X})=\langle \mathbf{1},\mathbf{X}\rangle \oplus g(\mathbf{X})$, where $g(\mathbf{X})$ is $(n-t)$-resilient.
	\end{enumerate}
\end{theorem}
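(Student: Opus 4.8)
The plan is to work entirely from the Walsh-transform expression for $t\mbox{-}\sym{PI}(f)$ established in~\eqref{eqn-PI-fourier-t}, namely
\[
t\mbox{-}\sym{PI}(f) = \frac{1}{\binom{n}{t}}\sum_{k=t}^n \binom{k}{t}\,\widehat{p}_f(k),
\]
and to translate the extremal conditions on the Fourier coefficients $\widehat{p}_f(k)$ into structural statements about $f$. Both points reduce to identifying exactly which spectra $\{\widehat{p}_f(k)\}$ force the weighted sum above to equal its maximum or its minimum.

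For the first point I would note that for $t\le k\le n$ the coefficient $\binom{k}{t}/\binom{n}{t}$ lies in $[0,1]$, is strictly less than $1$ for $k<n$, and equals $1$ exactly when $k=n$. Combining this with Parseval's theorem~\eqref{eqn-parseval}, which gives $\sum_{k=0}^n\widehat{p}_f(k)=1$, yields
\[
t\mbox{-}\sym{PI}(f) = \sum_{k=t}^n \frac{\binom{k}{t}}{\binom{n}{t}}\,\widehat{p}_f(k) \;\le\; \sum_{k=t}^n \widehat{p}_f(k) \;\le\; 1,
\]
so the value $1$ is attained if and only if $\widehat{p}_f(k)=0$ for every $k\ne n$ and $\widehat{p}_f(n)=1$. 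Since $\mathbf{1}_n$ is the unique vector of weight $n$, definition~\eqref{eqn-pf} gives $\widehat{p}_f(n)=\bigl(W_f(\mathbf{1}_n)\bigr)^2$, so the condition is $\bigl(W_f(\mathbf{1}_n)\bigr)^2=1$. By the definition of the Walsh transform~\eqref{eqn-wt-fou}, this holds precisely when $f(\mathbf{X})\oplus\langle\mathbf{1}_n,\mathbf{X}\rangle$ is constant, i.e.\ $f(\mathbf{X})=\langle\mathbf{1},\mathbf{X}\rangle$ (up to the additive constant distinguishing a function from its complement).

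For the second point, each summand $\binom{k}{t}\widehat{p}_f(k)$ is non-negative and $\binom{k}{t}>0$ for $t\le k\le n$, so $t\mbox{-}\sym{PI}(f)=0$ if and only if $\widehat{p}_f(k)=0$ for all $k$ with $t\le k\le n$; equivalently, $W_f(\mathbf{u})=0$ for every $\mathbf{u}$ with $\sym{wt}(\mathbf{u})\ge t$. Here I would invoke the substitution $f(\mathbf{X})=\langle\mathbf{1},\mathbf{X}\rangle\oplus g(\mathbf{X})$, for which~\eqref{eqn-wt-fou} gives $W_f(\bm{\alpha})=W_g(\mathbf{1}\oplus\bm{\alpha})$ (this relation is already recorded in the text preceding Remark~\ref{rem-PI}). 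Writing $\mathbf{v}=\mathbf{1}\oplus\mathbf{u}$ and using $\sym{wt}(\mathbf{u})=n-\sym{wt}(\mathbf{v})$, the requirement $W_f(\mathbf{u})=0$ for all $\sym{wt}(\mathbf{u})\ge t$ becomes $W_g(\mathbf{v})=0$ for all $\sym{wt}(\mathbf{v})\le n-t$, which is exactly the statement that $g$ is $(n-t)$-resilient.

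The algebra is essentially mechanical once~\eqref{eqn-PI-fourier-t} is in hand; the only steps demanding care are the two translations from Fourier-spectral conditions back to a structural description of $f$. Specifically, one must remember that $\bigl(W_f(\mathbf{1}_n)\bigr)^2=1$ pins $f$ down only up to complementation, and one must perform the weight inversion $\mathbf{v}=\mathbf{1}\oplus\mathbf{u}$ correctly so that the condition ``spectrum vanishes above weight $t-1$'' for $f$ matches precisely the resilience order $n-t$ for $g$. These bookkeeping points, rather than any genuine analytic difficulty, are where the proof must be written carefully.
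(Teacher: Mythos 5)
Your proof is correct and takes essentially the same route as the paper's: both argue directly from \eqref{eqn-PI-fourier-t}, showing the weighted sum attains $\binom{n}{t}$ only when $\widehat{p}_f(n)=1$ (maximum case) and vanishes only when $\widehat{p}_f(k)=0$ for all $k\geq t$ (minimum case), then translating these spectral conditions into the stated forms of $f$. The differences are cosmetic: your coefficient comparison $\binom{k}{t}<\binom{n}{t}$ for $k<n$ replaces the paper's intermediate bound via $\binom{n-1}{t}$, and your explicit handling of the complementation caveat and of the weight inversion $\mathbf{v}=\mathbf{1}\oplus\mathbf{u}$ spells out details the paper leaves implicit.
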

\begin{proof}
    From~\eqref{eqn-PI-fourier-t}, $t\mbox{-}\sym{PI}(f)$ takes its maximum value of $1$ if and only if 
	\begin{eqnarray} \label{eqn-tt01}	
		\sum_{k=t}^n {k\choose t} \widehat{p}_f(k) & = & {n\choose t}. 
	\end{eqnarray} 
	If $f(\mathbf{X}) = \langle \mathbf{1},\mathbf{X}\rangle$, then $\widehat{p}_f(n)=1$ and $\widehat{p}_f(k)=0$ for $0\leq k\leq n-1$. 
	On the other hand, if $f(\mathbf{X}) \neq \langle \mathbf{1},\mathbf{X}\rangle$, then $\widehat{p}_f(n)<1$ and we have 
\begin{eqnarray*}
	\lefteqn{{t \choose t}\widehat{p}_f(t)+{t+1 \choose t}\widehat{p}_f(t+1) + \cdots + {n \choose t}\widehat{p}_f(n)} \\
	& \leq & {n-1\choose t} (\widehat{p}_f(0) + \cdots + \widehat{p}_f(n-1)) + {n \choose t}\widehat{p}_f(n) \\
	& = & {n-1\choose t} (1-\widehat{p}_f(n)) + {n \choose t}\widehat{p}_f(n)  <  {n \choose t}.
\end{eqnarray*}
This completes the proof of the first point.

For the second point, from~\eqref{eqn-PI-fourier-t}, one may note that the values $\widehat{p}_f(0),\ldots,\widehat{p}_f(t-1)$ do not affect the expression for 
$t\mbox{-}\sym{PI}(f)$. So $t\mbox{-}\sym{PI}(f)=0$ if and only if $\widehat{p}_f(t)=\cdots=\widehat{p}_f(n)=0$. The latter condition holds if and only if
$f$ is of the stated form.
\end{proof}
Using the second point of Theorem~\ref{thm-max-t-PI}, it is possible to obtain examples of non-degenerate functions $f$ such that $t\mbox{-}\sym{PI}(f)$ is 0.

\begin{remark} \label{rem-Tal}
The quantity $J_f(T)$ (see~\eqref{eqn-inf-tal}) was put forward by Tal~\cite{tal2017tight} as a measure of influence of the set of variables indexed by $T$
	on the function $f$. It was shown in~\cite{tal2017tight} that $J_f(T)$ is equal to the right hand side of~\eqref{eqn-inf-fourier-PI}. So it follows that
	$J_f(T)=\sym{PI}_f(T)$. This is somewhat surprising since the definition of $J_f(T)$ given in~\eqref{eqn-inf-tal} and that of $\sym{PI}_f(T)$ given 
	in~\eqref{eqn-inf-set-PI} are very different. It is perhaps only through the characterisations of both these quantities in terms of the Walsh transform
	that they can be seen to be equal. 
	The quantity $\sum_{\{T:\#T=t\}}J_{f}(T)$ was considered in~\cite{tal2017tight} and the expression~\eqref{eqn-PI-fourier-t} was also 
	obtained in~\cite{tal2017tight}. 
	Since $J_f(T)=\sym{PI}_f(T)$, from Remark~\ref{rem-PI} it follows that $J_f(T)$ is not a satisfactory notion of influence.
\end{remark}

For an $n$-variable Boolean function $f$, define $L_{1,t}=\sum_{\mathbf{u}=t}|W_f(\mathbf{u})|$ and $W^{\geq t}(f)=\sum_{i\geq t}\widehat{p}_f(i)$. Lemma~31
of~\cite{tal2017tight} showed that if for all $t$, $t\mbox{-}\sym{PI}(f)\leq C\cdot \ell^t$ for some constant $C$, then 
$W^{\geq k}(f) \leq C\cdot e\cdot\ell\cdot e^{-(k-1)/(e\ell)}$ for all $k$. Lemma~34 of~\cite{tal2017tight} showed that 
$L_{1,t}(f)\leq 2^t\cdot t\mbox{-}\sym{PI}(f)$. Since Proposition~\ref{prop-inf-gt-PI} shows that
$t\mbox{-}\sym{inf}(f) \geq t\mbox{-}\sym{PI}(f)$ for $1\leq t\leq n$, we obtain simple extensions of the Lemmas~31 and~34 of~\cite{tal2017tight} by replacing 
$t\mbox{-}\sym{PI}(f)$ with $t\mbox{-}\sym{inf}(f)$ in the above statements. Lemma~29 of~\cite{tal2017tight} provides a converse of Lemma~31. This converse does not
necessarily hold if $t\mbox{-}\sym{PI}(f)$ is replaced with $t\mbox{-}\sym{inf}(f)$. Lemmas~29 and~31 of~\cite{tal2017tight} relate spectral tail bounds to
bounds on pseudo-influence. We note that a spectral concentration result for $t\mbox{-}\sym{inf}(f)$ is given by Theorem~\ref{thm-conc}.

\section{Ben-Or and Linial Definition of Influence \label{sec-BL} }
The first notion of influence of a set of variables on a Boolean function was proposed by Ben-Or and Linial in~\cite{ben1987collective}. In this section, we introduce 
this notion, prove some of its basic properties and show its relationship with the notion of influence defined in Section~\ref{sec-inf-set}.

For an $n$-variable function $f$ and $\emptyset \neq T\subseteq [n]$, with $t=\#T$, the notion of influence introduced in~\cite{ben1987collective} is $\mathcal{I}_f(T)$
and is given by~\eqref{eqn-BL-inf}.
For $t\in [n]$, we define
\begin{eqnarray}\label{eqn-BL-inf-t}
	t\mbox{-}\mathcal{I}(f) & = & \frac{\sum_{\{T\subseteq[n]:\#T=t \}}\mathcal{I}_f(T) }{{n\choose t}}.
\end{eqnarray}

The following result provides an alternative description of $\mathcal{I}_f(T)$.
\begin{proposition} \label{prop-BL-alt}
For an $n$-variable function $f$ and $\emptyset \neq T\subseteq [n]$, with $t=\#T$,
	\begin{eqnarray} 
		\mathcal{I}_f(T) 
		& = & 1 - \frac{\#\left\{\bm{\alpha}\in \mathbb{F}_2^{n-t}: \left( W_{{f}_{\bm{\alpha}}}(\mathbf{0}_t) \right)^2 = 1 \right\}}{2^{n-t}} \label{eqn-BL-alt} \\
		& = & \frac{\#\left\{\bm{\alpha}\in \mathbb{F}_2^{n-t}: \left( W_{{f}_{\bm{\alpha}}}(\mathbf{0}_t) \right)^2 \neq 1 \right\}}{2^{n-t}}, \label{eqn-BL-alt-1}
	\end{eqnarray}
	where $f_{\bm{\alpha}}$ denotes $f_{\mathbf{X}_{\overline{T}}\leftarrow \bm{\alpha}}$. 
\end{proposition}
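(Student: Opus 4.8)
The plan is to reduce the event ``$f_{\bm{\alpha}}$ is not constant'' to a simple condition on the Walsh transform of $f_{\bm{\alpha}}$ at the origin, and then read off both displayed expressions from the definition~\eqref{eqn-BL-inf}. Throughout, write $f_{\bm{\alpha}} = f_{\mathbf{X}_{\overline{T}}\leftarrow \bm{\alpha}}$, which is a $t$-variable function, exactly as in the statement and in~\eqref{eqn-BL-inf}. By definition, $\mathcal{I}_f(T)$ is the fraction of those $\bm{\alpha}\in\mathbb{F}_2^{n-t}$ for which $f_{\bm{\alpha}}$ is non-constant, so everything comes down to characterising when $f_{\bm{\alpha}}$ is constant.

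The key step is the observation that for any $t$-variable Boolean function $g$, we have $\left(W_g(\mathbf{0}_t)\right)^2 = 1$ if and only if $g$ is constant. This follows directly from the definition~\eqref{eqn-wt-fou} applied at $\mathbf{0}_t$, namely $W_g(\mathbf{0}_t) = \tfrac{1}{2^t}\sum_{\mathbf{x}\in\mathbb{F}_2^t}(-1)^{g(\mathbf{x})}$, which is the average of the $2^t$ values $(-1)^{g(\mathbf{x})}\in\{-1,1\}$. An average of $\pm 1$ values lies in $[-1,1]$ and attains absolute value $1$ precisely when all the summands coincide, i.e.\ when $g(\mathbf{x})$ takes the same value for every $\mathbf{x}$; equivalently $W_g(\mathbf{0}_t)=1$ when $g\equiv 0$ and $W_g(\mathbf{0}_t)=-1$ when $g\equiv 1$. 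Hence $\left(W_g(\mathbf{0}_t)\right)^2=1$ exactly when $g$ is constant, and $\left(W_g(\mathbf{0}_t)\right)^2\neq 1$ exactly when $g$ is non-constant.

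Applying this characterisation with $g=f_{\bm{\alpha}}$, the set of $\bm{\alpha}$ for which $f_{\bm{\alpha}}$ is constant is precisely $\left\{\bm{\alpha}\in\mathbb{F}_2^{n-t}: \left(W_{f_{\bm{\alpha}}}(\mathbf{0}_t)\right)^2=1\right\}$, and the complementary set of $\bm{\alpha}$ for which $f_{\bm{\alpha}}$ is non-constant is $\left\{\bm{\alpha}\in\mathbb{F}_2^{n-t}: \left(W_{f_{\bm{\alpha}}}(\mathbf{0}_t)\right)^2\neq 1\right\}$. Since the probability in~\eqref{eqn-BL-inf} is taken over a uniform choice of $\bm{\alpha}$ from the $2^{n-t}$ points of $\mathbb{F}_2^{n-t}$, the expression~\eqref{eqn-BL-alt-1} is immediate by counting the non-constant restrictions, and~\eqref{eqn-BL-alt} follows by writing the same probability as $1$ minus the probability that $f_{\bm{\alpha}}$ is constant. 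The two displayed forms are then equivalent because the two counted sets partition $\mathbb{F}_2^{n-t}$, so their cardinalities sum to $2^{n-t}$.

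There is no real obstacle here; the entire content is the elementary fact that the origin value of the Walsh transform records the bias of a function and has square $1$ exactly for constants. The only point requiring a little care is matching the notation: the restricted function appearing in~\eqref{eqn-BL-inf} fixes the coordinates in $\overline{T}$ and leaves those in $T$ free, so it is genuinely the $t$-variable function $f_{\bm{\alpha}}$ to which the Walsh-transform characterisation is applied. Once that is noted, the proof is a one-line counting argument.
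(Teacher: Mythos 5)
Your proposal is correct and follows essentially the same route as the paper: both reduce the statement to the observation that a $t$-variable function is constant precisely when its Walsh transform at $\mathbf{0}_t$ equals $\pm 1$, and then count over $\bm{\alpha}\in\mathbb{F}_2^{n-t}$ using the definition~\eqref{eqn-BL-inf}. The only cosmetic difference is that the paper justifies this equivalence via $\sym{wt}(f_{\bm{\alpha}})\in\{0,2^t\}$ and~\eqref{eqn-wt-fou}, whereas you argue directly that an average of $\pm 1$ values has absolute value $1$ exactly when all summands agree.
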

\begin{proof}
	From~\eqref{eqn-BL-inf}, it clearly follows that
        \begin{eqnarray*}
                \mathcal{I}_f(T)
		& = & 1 - \frac{\#\{\bm{\alpha}\in\mathbb{F}_2^{n-t}: f_{\bm{\alpha}} \mbox{ is constant} \}}{2^{n-t}} \\
		& = & 1 - \frac{\#\{\bm{\alpha}\in\mathbb{F}_2^{n-t}: \sym{wt}(f_{\bm{\alpha}})=0, \mbox{ or } 2^t \}}{2^{n-t}} \\
		& = & 1 - \frac{\#\{\bm{\alpha}\in\mathbb{F}_2^{n-t}: W_{{f}_{\bm{\alpha}}}(\mathbf{0}_t) = \pm 1 \}}{2^{n-t}}.
        \end{eqnarray*}
	This shows~\eqref{eqn-BL-alt}, and~\eqref{eqn-BL-alt-1} follows directly from~\eqref{eqn-BL-alt}.
\end{proof}

Some basic properties of $\mathcal{I}_f(T)$ are as follows.
\begin{theorem} \label{thm-basic-BL}
Let $f$ be an $n$-variable function and $\emptyset \neq T\subseteq S \subseteq [n]$. Let $\#T=t$. 
	\begin{enumerate}
		\item $0\leq \mathcal{I}_f(T) \leq 1$.
		\item $\mathcal{I}_f(T)=0$ if and only if $f$ is degenerate on the variables indexed by $T$.
		\item $\mathcal{I}_f(T)=1$ if and only if $f_{\bm{\alpha}}$ is a non-constant function for every $\bm{\alpha}\in\mathbb{F}_2^{n-t}$,
		where $f_{\bm{\alpha}}$ denotes $f_{\mathbf{X}_{\overline{T}}\leftarrow \bm{\alpha}}$. 
			In particular, if $T=[n]$, then $\mathcal{I}_f(T)=1$.
		\item $\mathcal{I}_f(T) \leq \mathcal{I}_f(S)$. 
	\end{enumerate}
\end{theorem}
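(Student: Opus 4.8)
The plan is to obtain the first three points almost immediately from the alternative description established in Proposition~\ref{prop-BL-alt}, and to reserve the real work for the monotonicity in Point~4. By~\eqref{eqn-BL-alt-1}, $\mathcal{I}_f(T)$ equals the number of $\bm{\alpha}\in\mathbb{F}_2^{n-t}$ for which the restriction $f_{\bm{\alpha}}=f_{\mathbf{X}_{\overline{T}}\leftarrow\bm{\alpha}}$ is non-constant, divided by $2^{n-t}$. Point~1 is then immediate, since the numerator is an integer in $\{0,\ldots,2^{n-t}\}$; alternatively, $\mathcal{I}_f(T)$ is a probability by~\eqref{eqn-BL-inf} and hence lies in $[0,1]$.

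For Point~2, this ratio vanishes exactly when every $f_{\bm{\alpha}}$ is constant. Since the assertion that $f_{\mathbf{X}_{\overline{T}}\leftarrow\bm{\alpha}}$ is constant for every $\bm{\alpha}$ says that fixing the $\overline{T}$-variables and then varying the $T$-variables never changes the output, a reordering of quantifiers shows this is precisely the degeneracy condition of Section~\ref{sec-prelim}, i.e. that $f$ is degenerate on the variables indexed by $T$. Point~3 is the complementary count: the ratio equals $1$ iff the numerator equals $2^{n-t}$, i.e. iff every $f_{\bm{\alpha}}$ is non-constant, which is the stated characterisation. For the ``in particular'' clause, $T=[n]$ forces $n-t=0$, so $\mathbb{F}_2^{n-t}$ is a singleton and the unique restriction is $f$ itself; hence $\mathcal{I}_f([n])=1$ holds exactly when $f$ is non-constant, which is the standing assumption for a meaningful notion of influence.

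The substantive step is the monotonicity $\mathcal{I}_f(T)\leq\mathcal{I}_f(S)$ for $T\subseteq S$, and I expect the only care needed is the coordinate bookkeeping. Put $R=S\setminus T$ and $r=\#S-\#T=\#R$, and decompose $\overline{T}=\overline{S}\sqcup R$, so that each $\bm{\alpha}\in\mathbb{F}_2^{n-t}$ splits as $\bm{\alpha}=(\bm{\gamma},\bm{\rho})$ with $\bm{\gamma}\in\mathbb{F}_2^{n-s}$ indexing the $\overline{S}$-coordinates and $\bm{\rho}\in\mathbb{F}_2^{r}$ the $R$-coordinates. The key observation is that $f_{\mathbf{X}_{\overline{T}}\leftarrow(\bm{\gamma},\bm{\rho})}$ is obtained from $f_{\mathbf{X}_{\overline{S}}\leftarrow\bm{\gamma}}$, a function on $\mathbf{X}_S$, by further fixing the $R$-variables to $\bm{\rho}$; since fixing variables in a constant function leaves it constant, non-constancy of $f_{\mathbf{X}_{\overline{T}}\leftarrow(\bm{\gamma},\bm{\rho})}$ forces non-constancy of $f_{\mathbf{X}_{\overline{S}}\leftarrow\bm{\gamma}}$. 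Hence the $\bm{\gamma}$-projection of $\{\bm{\alpha}:f_{\bm{\alpha}}\text{ non-constant}\}$ lands inside $\{\bm{\gamma}:f_{\mathbf{X}_{\overline{S}}\leftarrow\bm{\gamma}}\text{ non-constant}\}$, and each such $\bm{\gamma}$ has at most $2^{r}$ preimages, giving
\[
\#\{\bm{\alpha}\in\mathbb{F}_2^{n-t}:f_{\bm{\alpha}}\text{ non-constant}\}\ \leq\ 2^{r}\,\#\{\bm{\gamma}\in\mathbb{F}_2^{n-s}:f_{\mathbf{X}_{\overline{S}}\leftarrow\bm{\gamma}}\text{ non-constant}\}.
\]
Dividing by $2^{n-t}=2^{r}\cdot 2^{n-s}$ and applying~\eqref{eqn-BL-alt-1} to both sides yields exactly $\mathcal{I}_f(T)\leq\mathcal{I}_f(S)$. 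The entire content is the one-line principle that restricting further cannot turn a constant function non-constant; the rest is counting.
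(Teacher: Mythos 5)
Your proof is correct, and for Points 1--3 it is essentially the paper's own argument: the paper reads off the same facts from Proposition~\ref{prop-BL-alt}, phrased via $W_{f_{\bm{\alpha}}}(\mathbf{0}_t)=\pm 1$ rather than your direct count of non-constant restrictions, which is the same thing. For Point 4 the underlying principle is identical (a restriction of a constant function stays constant), but the bookkeeping differs in a way worth noting. The paper first reduces to the case $\#S=\#T+1$ by chaining through a tower $T\subset S_1\subset\cdots\subset S$, applies a without-loss-of-generality permutation of coordinates, and then counts \emph{constant} restrictions: each $\bm{\beta}$ with $f_{\mathbf{X}_{\overline{S}}\leftarrow\bm{\beta}}$ constant yields two elements $(\bm{\beta},0),(\bm{\beta},1)$ with constant restriction for $T$, giving $\#\mathcal{T}\geq 2\#\mathcal{S}$. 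You instead handle general $T\subseteq S$ in a single step, counting \emph{non-constant} restrictions and bounding the fibers of the projection $(\bm{\gamma},\bm{\rho})\mapsto\bm{\gamma}$; your inequality $\#\{\bm{\alpha}: f_{\bm{\alpha}}\mbox{ non-constant}\}\leq 2^{r}\,\#\{\bm{\gamma}: f_{\mathbf{X}_{\overline{S}}\leftarrow\bm{\gamma}}\mbox{ non-constant}\}$ is exactly the complementary form of the generalisation $\#\mathcal{T}\geq 2^{r}\#\mathcal{S}$ of the paper's bound, so you avoid both the chaining and the permutation argument at no extra cost. One further point in your favour: you correctly flag that the ``in particular'' clause ($T=[n]$ implies $\mathcal{I}_f(T)=1$) holds only for non-constant $f$ --- a caveat the paper's proof silently skips and which is genuinely needed, since $\mathcal{I}_f([n])=0$ when $f$ is constant, consistent with Point 2 and with Theorem~\ref{thm-I-t-max-min}.
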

\begin{proof}
The first point is obvious.

	For the second point, using~\eqref{eqn-BL-alt} note that $\mathcal{I}_f(T)=0$ if and only if for every $\bm{\alpha}\in \mathbb{F}_2^{n-t}$, 
	$W_{{f}_{\bm{\alpha}}}(\mathbf{0}_t) = \pm 1$, i.e., if and only if $\sym{wt}(f_{\bm{\alpha}})=0, \mbox{ or } 2^t$, i.e., if and only if 
$f_{\bm{\alpha}}$ is constant. The latter condition holds if and only if the variables indexed by $T$ have no
effect on the value of $f$, i.e., if and only if $f$ is degenerate on the variables indexed by $T$.

To see the third point, note that $\mathcal{I}_f(T)=1$ if and only if for every $\bm{\alpha}\in \mathbb{F}_2^{n-t}$,
	$\left(W_{{f}_{\bm{\alpha}}}(\mathbf{0}_t) \right)^2 \neq 1$, which holds if and only if $f_{\bm{\alpha}}$ is a non-constant function.

Let $\#S=s$. For the fourth point, it is sufficient to consider $s=t+1$, since otherwise, we may define a sequence of sets 
$T\subset S_1\subset S_2\subset \cdots \subset S$, with $\#T+1=\#S_1$, $\#S_1+1=\#S_2$, $\ldots,$ and argue 
	$\mathcal{I}_f(T) \leq \mathcal{I}_{f}(S_1) \leq \cdots \leq \mathcal{I}_{f}(S)$. 
	Further, without loss of generality, we assume $T=\{n-t+1,\ldots,n\}$ and $S=\{n-t,\ldots,n\}$ as otherwise, we may apply an appropriate permutation
	on the variables to ensure this condition.  Then $\overline{T}=\{1,\ldots,n-t\}$ and $\overline{S}=\{1,\ldots,n-t-1\}$. 

	Let $\mathcal{T}=\{\bm{\alpha}\in\mathbb{F}_2^{n-t}: f_{\bm{\alpha}} \mbox{ is constant}\}$
	and $\mathcal{S}=\{\bm{\beta}\in\mathbb{F}_2^{n-t-1}: f_{\bm{\beta}} \mbox{ is constant}\}$,
	where $f_{\bm{\beta}}$ is a shorthand for $f_{\mathbf{X}_{\overline{S}}\leftarrow \bm{\beta}}$. 
	Note that if $\bm{\beta}\in \mathcal{S}$, then $(\bm{\beta},0),(\bm{\beta},1)\in \mathcal{T}$. So $\#\mathcal{T}\geq 2\#\mathcal{S}$ which implies
	\begin{eqnarray*}
		\frac{\#\mathcal{T}}{2^{n-t}} \geq \frac{2\#\mathcal{S}}{2^{n-t}} \geq \frac{\#\mathcal{S}}{2^{n-t-1}}.
	\end{eqnarray*}
	Consequently, 
	\begin{eqnarray*}
		\mathcal{I}_f(T) = 1-\frac{\#\mathcal{T}}{2^{n-t}} \leq 1- \frac{\#\mathcal{S}}{2^{n-t-1}} =\mathcal{I}_f(S).
	\end{eqnarray*}
\end{proof}

\begin{remark}\label{rem-BL-sub-additivity}
We note that the sub-additivity property does not hold for $\mathcal{I}_f(T)$. As an example, consider a 6-variable function $f$ which maps
$\mathbf{0}_6$ to 1 and all other elements of $\mathbb{F}_2^6$ to 0; let $S=\{4,5,6\}$ and $T=\{2,3,6\}$. Then
$\mathcal{I}_f(S\cup T)=1/2 > 1/8 + 1/8 = \mathcal{I}_f(S) + \mathcal{I}_f(T)$. 
\end{remark}

Next, we show that the Ben-Or and Linial notion of influence is always at least as much as the notion of influence defined in~\eqref{eqn-inf-set}.
\begin{theorem} \label{thm-at-most}
	Let $f$ be an $n$-variable function and $\emptyset \neq T\subseteq [n]$. Then $\sym{inf}_f(T) \leq \mathcal{I}_f(T)$. 
	Further, equality holds if and only if $\left(W_{{f}_{\bm{\alpha}}}(\mathbf{0}_t) \right)^2 =0\mbox{ or }1$ for each $\bm{\alpha}\in \mathbb{F}_2^{n-t}$,
	where $f_{\bm{\alpha}}$ denotes $f_{\mathbf{X}_{\overline{T}}\leftarrow \bm{\alpha}}$. 
\end{theorem}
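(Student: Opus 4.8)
The plan is to reduce both quantities to the common object $(W_{f_{\bm{\alpha}}}(\mathbf{0}_t))^2$. Theorem~\ref{thm-inf-ac-exp} expresses $\sym{inf}_f(T)$ as a normalised sum of these values, while Proposition~\ref{prop-BL-alt} (equation~\eqref{eqn-BL-alt}) expresses $\mathcal{I}_f(T)$ as a normalised count of those $\bm{\alpha}$ for which the value equals $1$. Writing $a_{\bm{\alpha}} = (W_{f_{\bm{\alpha}}}(\mathbf{0}_t))^2$ and recalling that the Walsh transform takes values in $[-1,1]$ so that $0\leq a_{\bm{\alpha}}\leq 1$, these two results read
\begin{eqnarray*}
	\sym{inf}_f(T) & = & 1 - \frac{1}{2^{n-t}}\sum_{\bm{\alpha}\in \mathbb{F}_2^{n-t}} a_{\bm{\alpha}}, \qquad
	\mathcal{I}_f(T) = 1 - \frac{1}{2^{n-t}}\,\#\left\{\bm{\alpha}\in \mathbb{F}_2^{n-t}: a_{\bm{\alpha}}=1\right\}.
\end{eqnarray*}
Hence $\sym{inf}_f(T)\leq \mathcal{I}_f(T)$ is equivalent to the inequality $\sum_{\bm{\alpha}} a_{\bm{\alpha}} \geq \#\{\bm{\alpha}: a_{\bm{\alpha}}=1\}$.

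Next I would establish this inequality termwise. For each fixed $\bm{\alpha}$, the value $a_{\bm{\alpha}}$ contributes $a_{\bm{\alpha}}$ to the left-hand sum, whereas it contributes $1$ to the count exactly when $a_{\bm{\alpha}}=1$ and $0$ otherwise. Since $0\leq a_{\bm{\alpha}}\leq 1$, in the case $a_{\bm{\alpha}}=1$ both contributions equal $1$, and in the case $a_{\bm{\alpha}}<1$ the contribution to the count is $0$ while $a_{\bm{\alpha}}\geq 0$. Thus each term of the left-hand sum is at least the corresponding term of the count, and summing over all $\bm{\alpha}\in \mathbb{F}_2^{n-t}$ yields $\sum_{\bm{\alpha}} a_{\bm{\alpha}} \geq \#\{\bm{\alpha}: a_{\bm{\alpha}}=1\}$, which gives $\sym{inf}_f(T)\leq \mathcal{I}_f(T)$.

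For the equality characterisation, equality in the summed inequality holds if and only if it holds termwise, i.e. if and only if for every $\bm{\alpha}$ the value $a_{\bm{\alpha}}$ equals its contribution to the count. By the case analysis above this forces every $a_{\bm{\alpha}}$ that is not equal to $1$ to equal $0$, which is precisely the stated condition that $(W_{f_{\bm{\alpha}}}(\mathbf{0}_t))^2\in\{0,1\}$ for each $\bm{\alpha}\in \mathbb{F}_2^{n-t}$. I do not anticipate a genuine obstacle here; the only point requiring care is recognising that both prior results are phrased in terms of the same quantity $(W_{f_{\bm{\alpha}}}(\mathbf{0}_t))^2$, after which the argument is the elementary observation that a number in $[0,1]$ is at least the indicator of its being equal to $1$, with equality on the whole collection exactly when every such number is an endpoint of $[0,1]$.
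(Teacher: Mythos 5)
Your proposal is correct and follows essentially the same route as the paper's own proof: both express $\sym{inf}_f(T)$ via Theorem~\ref{thm-inf-ac-exp} and $\mathcal{I}_f(T)$ via Proposition~\ref{prop-BL-alt}, then compare the two sums termwise over $\bm{\alpha}\in\mathbb{F}_2^{n-t}$, with the equality case forced by termwise equality. The only cosmetic difference is that the paper compares the complementary quantities $1-\left(W_{f_{\bm{\alpha}}}(\mathbf{0}_t)\right)^2$ against the indicator of $\left(W_{f_{\bm{\alpha}}}(\mathbf{0}_t)\right)^2\neq 1$, which is the same argument after subtracting from $1$.
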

\begin{proof}
	We rewrite~\eqref{eqn-inf-ac-exp} in the following form.
	\begin{eqnarray}\label{eqn-00b}
		\sym{inf}_f(T) & = & \frac{1}{2^{n-t}} \sum_{\bm{\alpha}\in\mathbb{F}_2^{n-t}} \left( 1 - \left(W_{{f}_{\bm{\alpha}}}(\mathbf{0}_t) \right)^2 \right).
	\end{eqnarray}
	Consider the expressions for $\sym{inf}_f(T)$ and $\mathcal{I}_f(T)$ given by~\eqref{eqn-00b} and~\eqref{eqn-BL-alt-1} respectively. Both
	the expressions are sums over $\bm{\alpha}\in \mathbb{F}_2^{n-t}$. 
	Suppose $\bm{\alpha}$ is such that $\left(W_{{f}_{\bm{\alpha}}}(\mathbf{0}_t) \right)^2= 1$. The contribution of such an $\bm{\alpha}$ to both~\eqref{eqn-00b} 
	and~\eqref{eqn-BL-alt-1} is 0. Next suppose $\left(W_{{f}_{\bm{\alpha}}}(\mathbf{0}_t) \right)^2 \neq 1$; the contribution of such an $\bm{\alpha}$ 
	to~\eqref{eqn-BL-alt-1} is 1 and the contribution to~\eqref{eqn-00b} is at most 1, and the value 1 is achieved if and only if $W_{{f}_{\bm{\alpha}}}(\mathbf{0}_t)=0$.
\end{proof}

One may compare the properties of $\mathcal{I}_f(T)$ given by Theorem~\ref{thm-basic-BL} to the desiderata that a notion of influence may be expected to satisfy
(see the discussion before Theorem~\ref{thm-inf-basic}). 
The measure $\mathcal{I}_f(T)$ satisfies some of the desiderata, namely, it is between 0 and 1; takes the value 0
if and only if $f$ is degenerate on the variables indexed by $T$; and it is monotone increasing with the size of $T$. On the other hand, as noted above, it does not satisfy the
sub-additivity property. 

Compared to $\sym{inf}_f(T)$, the value of $\mathcal{I}_f(T)$ rises quite sharply. To see this, it is useful to view the following expressions for the two quantities.
\begin{eqnarray}
	2^{n-t}\times \sym{inf}_f(T) & = & \sum_{\bm{\alpha}\in\mathbb{F}_2^{n-t}}\left( 1- \left(W_{{f}_{\bm{\alpha}}}(\mathbf{0}_t)\right)^2 \right), \label{eqn-00d} \\
	2^{n-t}\times \mathcal{I}_f(T) & = & \#\left\{\bm{\alpha}\in\mathbb{F}_2^{n-t}: \left(W_{{f}_{\bm{\alpha}}}(\mathbf{0}_t) \right)^2 \neq 1 \right\}. \label{eqn-00e}
\end{eqnarray}
Suppose $\bm{\alpha}\in\mathbb{F}_2^{n-t}$ is such that $f_{\bm{\alpha}}$ is a non-constant function, so that $\left(W_{{f}_{\bm{\alpha}}}(\mathbf{0}_t) \right)^2 \neq 1$. 
Then such an $\bm{\alpha}$ contributes 1 to~\eqref{eqn-00e}, while it contributes a value which is at most 1 to~\eqref{eqn-00d}. More generally, $\bm{\alpha}$
contributes either $0$ or $1$ to~\eqref{eqn-00e} according as $f_{\bm{\alpha}}$ is constant or non-constant; on the other hand, the contribution of 
$\bm{\alpha}$ to~\eqref{eqn-00d} is more granular. Consequently, the value of $\mathcal{I}_f(T)$ rises more sharply than the value of $\sym{inf}_f(T)$. 
In particular, if $f$ and $g$ are two distinct functions such that for all $\bm{\alpha}$, both $f_{\bm{\alpha}}$ and $g_{\bm{\alpha}}$ are non-constant functions,
then both $\mathcal{I}_f(T)$ and $\mathcal{I}_g(T)$ will be necessarily be equal to 1, whereas the values of $\sym{inf}_f(T)$ and $\sym{inf}_g(T)$ are
neither necessarily 1 nor necessarily equal. In other words, the discerning power of $\mathcal{I}_f(T)$ as a measure of influence is less than that of $\sym{inf}_f(T)$,
i.e., $\mathcal{I}_f(T)$ is a more coarse measure of influence.
So while both $\sym{inf}_f(T)$ and $\mathcal{I}_f(T)$ share some intuitive basic properties expected of a definition of influence, the facts that $\mathcal{I}_f(T)$
does not satisfy sub-additivity and has less discerning power make it a less satisfactory measure of influence compared to $\sym{inf}_f(T)$.

Theorem~\ref{thm-at-most} shows that $\sym{inf}_f(T) \leq \mathcal{I}_f(T)$. The difference between $\mathcal{I}_f(T)$ and $\sym{inf}_f(T)$ can be quite large. 
For example, if we take $f(\mathbf{X})=X_1\cdots X_n$ (i.e., the Boolean AND function), then $\mathcal{I}_f([n])=1$ while 
$\sym{inf}_f([n])=1 - (1-1/2^{n-1})^2$. In other words, the influence of the set of all variables as measured by $\mathcal{I}_f$ is 1, while the influence
as measured by $\sym{inf}_f$ is close to $0$. The influence of $[n]$ on the degenerate $n$-variable constant all-zero function is $0$ as measured by 
both $\mathcal{I}_f$ and $\sym{inf}_f$. The AND function differs from the all-zero function by a single bit and so one would expect the influence of $[n]$
to remain close to 0. This is indeed the case for $\sym{inf}_f$, while for $\mathcal{I}_f$ the value jumps to 1. The example of the AND function can be 
generalised to a balanced function in the following manner. Let $1\leq t<n$ and define 
$f(X_1,X_2,\ldots,X_n)=X_1\cdots X_t\oplus X_{t+1}\oplus\cdots\oplus X_n$. It is easy to verify that $f$ is balanced. Let $T=\{1,\ldots,t\}$. One may check
that $\mathcal{I}_f(T)=1$ and $\sym{inf}_f(T)=1-(1-1/2^{t-1})^2$. As in the case of the AND function, it can be argued that one would expect the influence
of $T$ to be close to $0$ rather than being equal to $1$. 

The following result characterises the minimum and maximum values of $t\mbox{-}\mathcal{I}(f)$.
\begin{theorem}\label{thm-I-t-max-min}
	Let $f$ be an $n$-variable Boolean function and $t$ be an integer with $1\leq t\leq n$. 
	\begin{enumerate}
		\item $t$-$\mathcal{I}(f)$ takes its maximum value of 1 if and only if for every subset $T$ of $[n]$ of size $t$, and for
			every $\bm{\alpha}\in \mathbb{F}_2^{n-t}$, the function $f_{\mathbf{X}_{\overline{T}\leftarrow \bm{\alpha}}}(\mathbf{X}_T)$ is non-constant.
		\item $t$-$\mathcal{I}(f)$ takes its minimum value of 0 if and only if $f$ is a constant function.
	\end{enumerate}
\end{theorem}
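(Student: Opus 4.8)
The plan is to exploit the fact that $t\mbox{-}\mathcal{I}(f)$ is, by~\eqref{eqn-BL-inf-t}, the arithmetic mean of the $\binom{n}{t}$ quantities $\mathcal{I}_f(T)$ over all size-$t$ subsets $T$, together with the bound $0\leq \mathcal{I}_f(T)\leq 1$ from the first point of Theorem~\ref{thm-basic-BL}. Because each summand lies in $[0,1]$, the mean attains its extreme value $1$ (respectively $0$) if and only if every summand equals $1$ (respectively $0$). This reduces both assertions to the pointwise characterisations of $\mathcal{I}_f(T)=1$ and $\mathcal{I}_f(T)=0$ already proved in points~3 and~2 of Theorem~\ref{thm-basic-BL}.

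For the first point, I would argue that $t\mbox{-}\mathcal{I}(f)=1$ holds exactly when $\mathcal{I}_f(T)=1$ for every size-$t$ subset $T$. Invoking point~3 of Theorem~\ref{thm-basic-BL}, this is equivalent to the requirement that for each such $T$ and each $\bm{\alpha}\in\mathbb{F}_2^{n-t}$ the restriction $f_{\mathbf{X}_{\overline{T}}\leftarrow\bm{\alpha}}(\mathbf{X}_T)$ is non-constant, which is precisely the stated condition; nothing beyond quoting the earlier result is needed here.

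For the second point, the same averaging argument shows that $t\mbox{-}\mathcal{I}(f)=0$ if and only if $\mathcal{I}_f(T)=0$ for every size-$t$ subset $T$, which by point~2 of Theorem~\ref{thm-basic-BL} means $f$ is degenerate on the variables indexed by each such $T$. The remaining and only substantive step is to show that this family of degeneracy conditions is equivalent to $f$ being constant. The forward direction is immediate: a constant $f$ is degenerate on every set of variables, so all $\mathcal{I}_f(T)$ vanish. For the converse I would use that degeneracy of $f$ on a set $T$ forces degeneracy on each individual variable in $T$, since comparing two assignments to $\mathbf{X}_T$ that differ in a single coordinate is a special case of the defining condition for degeneracy on $T$. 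As $t\leq n$, every index $i\in[n]$ lies in at least one size-$t$ subset $T$, so $f$ depends on no single variable and is therefore constant.

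The main obstacle is this last equivalence: inferring global constancy from degeneracy on all size-$t$ sets. The crux is the downward-hereditary nature of degeneracy together with the covering of $[n]$ by size-$t$ subsets, both of which are available precisely because $1\leq t\leq n$.
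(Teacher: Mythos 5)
Your proof is correct, and for the first point it coincides with the paper's argument (averaging over the $\binom{n}{t}$ terms in $[0,1]$ plus point~3 of Theorem~\ref{thm-basic-BL}). For the second point, however, you take a genuinely different route in the converse direction. The paper never reduces to degeneracy at all: it invokes Theorem~\ref{thm-at-most} ($\sym{inf}_f(T)\leq \mathcal{I}_f(T)$) to conclude that $t\mbox{-}\mathcal{I}(f)=0$ forces $t\mbox{-}\sym{inf}(f)=0$, and then quotes the second point of Theorem~\ref{thm-max-t-inf}, whose proof rests on the Fourier/Walsh expression for $t\mbox{-}\sym{inf}(f)$. You instead stay entirely within the Ben-Or--Linial framework: the averaging argument gives $\mathcal{I}_f(T)=0$ for every size-$t$ set $T$, point~2 of Theorem~\ref{thm-basic-BL} converts this to degeneracy on every such $T$, and you finish with the elementary observations that degeneracy is inherited by subsets (so $f$ is degenerate on each single variable, since every $i\in[n]$ lies in some size-$t$ set) and that a function degenerate in every coordinate is constant (connect any two points of the hypercube by single-bit flips). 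Your argument is more self-contained and elementary --- it needs none of the Fourier machinery behind Theorem~\ref{thm-max-t-inf} --- whereas the paper's proof is shorter given the results already on the table and has the side benefit of illustrating how the comparison theorem between the two notions of influence can be put to work. Both are complete; just make sure, if you write yours up, to state explicitly the two small facts you rely on at the end (heredity of degeneracy and ``degenerate in every coordinate implies constant''), as neither is formally recorded in the paper.
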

\begin{proof}
	The proof of the first point follows from the third point of Theorem~\ref{thm-basic-BL}.

	For the second point, we note that if $f$ is a constant function, then from~\eqref{eqn-BL-inf}, $\mathcal{I}_f(T)=0$ for every subset $T$ of $[n]$
	and so $t$-$\mathcal{I}(f)$. On the other hand, if $t$-$\mathcal{I}(f)=0$, then from Theorem~\ref{thm-at-most}, it follows that
	$t$-$\sym{inf}(f)=0$ and so from the second point of Theorem~\ref{thm-max-t-inf} we have that $f$ is a constant function.
\end{proof}

\begin{remark}\label{rem-AL}
	Upper bounds on $\mathcal{I}_f(T)$ for $T$ with bounded size have been proved in~\cite{AL93}. Since $\sym{inf}_f(T)\leq \mathcal{I}_f(T)$, it follows
	that these upper bounds also hold for $\sym{inf}_f(T)$.
\end{remark}

\section{Discussion \label{sec-discuss}}
We have introduced a new definition of influence of a set of variables on a Boolean function which is based on the auto-correlation function. Using the new definition,
we have proved a number of results. In this section, we highlight the new insights into Boolean functions that are obtained from the new results which follow
from the new definition.

As proved in Section~\ref{subsec-prob}, the quantity $I_f(T)$ defined in~\cite{fischer02,blais2009testing} is half the value of the 
influence (namely, $\sym{inf}_f(T)$) that we have defined. Some results for $I_f(T)$ have been obtained earlier. 
Remark~\ref{rem-blais} mentions the results which were previously obtained in~\cite{fischer02,blais2009testing}. The quantity $I_f(T)$ was used in~\cite{fischer02,blais2009testing}
as a tool for junta testing. The crucial result for such testing is Proposition~\ref{prop-junta}. We have provided a new and simpler proof of this proposition. 
Apart from Proposition~\ref{prop-junta} and the results mentioned in Remark~\ref{rem-blais}, all other results in Section~\ref{sec-inf-set} and its
various subsections appear for the first time in this paper. We highlight interesting aspects of some of the new results, particularly those aspects which arise due to the 
auto-correlation function based definition. 

Theorem~\ref{thm-path-expansion} connects total influence
to the path expansion property of a set of vertices $A$ of the hypercube. This result provides a geometric interpretation of the notion of influence which generalises
the well known connection of the total influence of a single variable to the edge expansion property of $A$. The geometric interpretation of total influence in terms
of path expansion is obtained through the connection of the auto-correlation function to path expansion and the new definition of influence using the auto-correlation
function. The Fourier/Walsh transform and the auto-correlation function are well studied tools in the theory of Boolean functions. In Proposition~\ref{prop-ac-path} and
Remark~\ref{rem-geom} we have explained the new geometric insight into these tools that our results provide. 

The notion of influence has been studied for a long time, but has been restricted mostly to issues in theoretical computer science. On the other hand, the notions
of bent functions and resilient functions have also been studied for a long time in the coding theory and cryptography literature. Our results provide
a previously unknown bridge between the notion of influence on the one hand, and the notions of bent and resilient functions on the other hand. The first point of 
Theorem~\ref{thm-max-t-inf}
provides a characterisation of resilient functions in terms of total influence. Theorem~\ref{thm-bent-inf} provides a characterisation of bent functions
in terms of influence. Theorem~\ref{thm-max-t-inf} is itself based on the characterisation of total influence in terms of Fourier/Walsh transform, while
the proof of Theorem~\ref{thm-bent-inf} uses the auto-correlation based definition of influence. These new results provide interesting new insights into the 
connection between aspects of Boolean functions studied in theoretical computer science and in coding theory and cryptography. 

Remark~\ref{rem-Tal} and the discussion following it mention the results on pseudo-influence which were previously obtained in~\cite{tal2017tight}. The other
results in Section~\ref{sec-PI} are new to this work. In particular, the inadequacy of pseudo-influence as a notion of influence is obtained as a consequence
of Theorem~\ref{thm-inf-basic}, and the characterisation of the conditions under which the total pseudo-influence achieves its minimum and maximum values
are given in Theorem~\ref{thm-max-t-PI}.

All results in Section~\ref{sec-BL} on the BL definition of influence are new to this paper. These results establish the basic properties
of this notion of influence. We provide a detailed comparison of the BL definition of influence and the auto-correlation function based definition of influence which highlight
why the BL definition is less satisfactory than the auto-correlation function based definition as a measure of influence. 

\section{Conclusion \label{sec-conclu} }
We introduced a definition of influence of a set of variables on a Boolean function using the auto-correlation function. The basic theory around the notion
of influence has been carefully developed and several well known results on the influence of a single variable have been generalised. New characterisations
of resilient and bent functions in terms of influence have been obtained. A previously introduced~\cite{fischer02,blais2009testing} measure of influence of a set of 
variables is shown to be half the value of the influence that we introduce. We also defined a notion of pseudo-influence, argued that it is not a satisfactory
measure of influence and showed that pseudo-influence is equal to a measure of influence previously defined in~\cite{tal2017tight}. 
Finally, we studied in details the definition of influence given by Ben-Or and Linial~\cite{ben1987collective} and brought out its relation to the auto-correlation based 
notion of influence.


\section*{Acknowledgement} We thank the reviewers of an earlier version for providing helpful comments.

\bibliographystyle{plain}
\bibliography{influence.bib}

\end{document}